\DeclarePairedDelimiter{\ceil}{\lceil}{\rceil}
\newtheorem{definition}{Definition}
\newtheorem{theorem}{Theorem}
\newtheorem{corollary}{Corollary}
\newtheorem{lemma}{Lemma}
\newtheorem{proposition}{Proposition}
\newenvironment{customlemma}[1]
{\innercustomlemma}
{\endinnercustomlemma}
\newcommand{\blind}{0}
\begin{document}

\def\spacingset#1{\renewcommand{\baselinestretch}%
{#1}\small\normalsize} \spacingset{1}


\if0\blind
{
  \title{\bf Space-Time Covariance Models on Networks with An Application on Streams}
  \author{Jun Tang\thanks{
    Jun Tang is Ph.D. Candidate, Department of Statistics and Actuarial Science, University of Iowa, Iowa City, IA 52242 (E-mail: jun-tang-1@uiowa.edu). Please address all correspondence to this author.}\hspace{.2cm} and Dale Zimmerman\thanks{Dale L. Zimmerman is Professor, Department of Statistics and Actuarial Science and Department of Biostatistics, University of Iowa.} \\
    Department of Statistics and Actuarial Science, University of Iowa\\}
\date{}
  \maketitle
} \fi

\if1\blind
{
  \bigskip
  \bigskip
  \bigskip
  \begin{center}
    {\LARGE\bf Space-Time Covariance Models on Networks with An Application on Streams}
\end{center}
  \medskip
} \fi

\bigskip
\begin{abstract}
The second-order, small-scale dependence structure of a stochastic process defined in the space-time domain is key to prediction (or kriging). While great efforts have been dedicated to developing models for cases in which the spatial domain is either a finite-dimensional Euclidean space or a unit sphere, counterpart developments on a generalized linear network are practically non-existent. To fill this gap, we develop a broad range of parametric, non-separable space-time covariance models on generalized linear networks and then an important subgroup --- Euclidean trees by the space embedding technique --- in concert with the generalized Gneiting class of models and 1-symmetric characteristic functions in the literature,  and the scale mixture approach. We give examples from each class of models and investigate the geometric features of these covariance functions near the origin and at infinity.   We also show the linkage between different classes of space-time covariance models on Euclidean trees. We illustrate the use of models constructed by different methodologies on a daily stream temperature data set and compare model predictive performance by cross validation.
\end{abstract}

\bigskip
\bigskip
\noindent%
{\it Keywords:}  Convex cone; generalized linear network; Euclidean tree; space embedding; scale mixture.
\vfill

\newpage
\spacingset{1.5} 
\section{Introduction}\label{sec:1}
\subsection{Background}\label{sec:1.1}
Despite its wide variety of applications in different scientific disciplines, including environmental (see for example, ~\citet{ver2006};~\citet{cressie2006};~\citet{ver2010};  and~\citet{ODonnel2014}), neurological (\citet{jammalamadaka2013}; ~\citet{baddeley2014}), ecological~\citep{ang2012} and social sciences (\citet{ang2012};~\citet{baddeley2017}), the study of a random process observed on a network is a relatively new area in spatial statistics. Observations that are closer together in space tend to be more alike than observations far apart~\citep{tobler1970}. Thus, the small-scale (covariance) structure of a geostatistical process is usually assumed to be a function of distance between spatial locations. On a network, it is possible that two sampling sites are close together in the sense of Euclidean distance, but are far apart within the network. Under such a circumstance, it is more reasonable to use the alternative metric when modeling the dependence structure. However, merely replacing the Euclidean distance in a standard geostatistical model with the shortest path within the network may lead to an invalid (not positive definite) covariance function on the network, and thus result in negative prediction variances~\citep{ver2006}. 

In a finite-dimensional Euclidean space, the well-known Bochner's theorem~\citep[pp.~58]{bochner1955} fully characterizes the class of stationary continuous covariance functions as Fourier transforms of finite, nonnegative measures. Though this powerful theorem provides a sufficient and necessary condition for positive definiteness, closed-form Fourier inversions do not always exist. Schoenberg's result~\citep{schoenberg1938cm}, on the other hand, is Fourier transform free. It reveals the one-to-one relationship between isotropic covariance functions and completely monotone functions in a Hilbert space. Quite a few non-separable parametric spatio-temporal covariance functions have been developed based on Bochner's and Schoenberg's theorems; see, for example,~\citet{cressie1999} and \citet{gneiting2002}. \citet[pp.~526]{yaglom1987}'s kernel convolution-based approach can also be generalized into the space-time domain~\citep{rodrigues2010}.

Covariance functions receive special attention due to the fact that a Gaussian random process is completely determined by its first- and second-order moments~\citep{porcu2019}. Although a broad range of classes of space-time covariance models are available in Euclidean space~\citep{de2013} and  a thorough review has recently been given by~\citet{porcu2019}, corresponding results for pure spatial linear networks are few and far between -- a recent exception being~\citet{anderes2020} -- and space-time results on networks are practically non-existent. To fill this gap, in this article we develop Fourier-free space-time covariance functions on generalized linear networks, using space embedding and scale mixture approaches.

\subsection{Parametric Space-time Covariance Models}\label{sec:1.2}
Let $\left\{Z(\bm{s}; t): (\bm{s}, t) \in \mathcal{D} \times \mathbb{R} \right\} $ denote a univariate, real-valued, continuously-indexed stochastic process on the product space of a spatial domain $\mathcal{D}$ and a temporal domain $\mathbb{R}$. In the literature, the spatial domain is usually taken to be either a finite-dimensional Euclidean space ($\mathcal{D} = \mathbb{R}^{n}$) or a unit sphere ($\mathcal{D} = \mathbb{S}^{n}$) ~\citep{porcu2019}. In contrast, we consider a random process on a generalized linear network ($\mathcal{D} = \mathcal{G}$) whose definition will be given in the subsequent section. Assume that the first two moments of the random process exist and that the mean structure $\mu(\bm{s}; t)$, which measures the global scale space-time variability, can be fixed as a constant, i.e. $\mu(\bm{s}; t) \equiv \mu$, or modeled as a linear combination of covariates of interest, i.e. $\mu(\bm{s}; t) = \bm{x}(\bm{s}; t)'\bm{\beta}$. The second-order,  small-scale dependence structure, commonly described by a parametric covariance function, is key to space-time prediction (or kriging) and regression-type parameter estimation and is the main focus of this paper. We do not distinguish between Gaussian and non-Gaussian processes unless necessary since the covariance function plays an important role in either situation.  

To be consistent with the recent literature, we let $C$ denote the covariance function where 
\[
C(\bm{s}_{1}, \bm{s}_{2}; t_{1}, t_{2}) := \text{Cov}(Z(\bm{s}_{1}; t_{1}), Z(\bm{s}_{2}; t_{2})), \qquad (\bm{s}_{i}, t_{i}) \in \mathcal{D} \times \mathbb{R},\ i = 1, 2.
\] 
From the definition, $C$ is symmetric, i.e., $C(\bm{s}_{1}, \bm{s}_{2}; t_{1}, t_{2}) = C(\bm{s}_{2}, \bm{s}_{1}; t_{2}, t_{1})$. Moreover, a covariance function must be positive definite, meaning that
\begin{equation}\label{pd1}
	\sum_{i=1}^{N}\sum_{j=1}^{N} a_{i}a_{j}C(\bm{s}_{i}, \bm{s}_{j}; t_{i}, t_{j}) \geq 0
\end{equation} 
for any finite collections of $\{a_{i}\}_{i=1}^{N} \subset \mathbb{R}$ and $\{(\bm{s}_{i}, t_{i})\}_{i=1}^{N} \subset \mathcal{D} \times \mathbb{R}$.  Covariance functions which fail to satisfy this condition are likely to lead to negative prediction variances and undefined probability densities. Whenever a function $C$ satisfies the symmetry and positive definiteness conditions, we call it a valid covariance function. 

In geostatistics, a common assumption made by practitioners is second-order stationarity, which requires that the overall mean is constant and that the covariance function depends on the spatial locations only through their relative positions. Moreover, in Euclidean space, a covariance function is called isotropic if it is a function of the Euclidean norm of the difference between locations. Unlike its counterpart in Euclidean space, the definition of stationarity is less clear on networks~\citep{anderes2020}. Nevertheless, a space-time covariance function is said to be isotropic within components if $C(\bm{s}_{1}, \bm{s}_{2}; t_{1}, t_{2}) = f(d(\bm{s}_{1}, \bm{s}_{2}); |t_{1} - t_{2}|)$ for some function $f: [0, \infty) \times [0, \infty) \rightarrow \mathbb{R}$, where $d: \mathcal{D} \times \mathcal{D} \rightarrow [0, \infty)$ is a distance metric~\citep{anderes2020} which satisfies (1) $d(\bm{s}_{1}, \bm{s}_{2}) = d(\bm{s}_{2}, \bm{s}_{1})$, for any $\bm{s}_{1}, \bm{s}_{2} \in \mathcal{D}$; (2) $d(\bm{s}_{1}, \bm{s}_{2}) = 0$ if and only if $\bm{s}_{1} = \bm{s}_{2}$,  and $|t_{1} - t_{2}|$ is the absolute difference between times. We call such an $f$ a radial profile function. We work with either the covariance function or the radial profile function, denoting both by $C$, when the context causes no confusion. By assuming isotropy, the model guarantees that the covariance function is fully symmetric~\citep{gneiting2002} since $C(\bm{s}_{1}, \bm{s}_{2}; t_{1}, t_{2}) = C(\bm{s}_{1}, \bm{s}_{2}; t_{2}, t_{1}) = C(\bm{s}_{2}, \bm{s}_{1}; t_{1}, t_{2}) = C(\bm{s}_{2}, \bm{s}_{1}; t_{2}, t_{1})= f(d(\bm{s}_{1}, \bm{s}_{2}); |t_{1} - t_{2}|)$, for any $\bm{s}_{1}, \bm{s}_{2} \in \mathcal{D}$ and $t_{1}, t_{2} \in \mathbb{R}$. 

When it comes to spatio-temporal covariance models, assuming separability is a convenient starting point~\citep{rodrigues2010}. Specifically, a space-time model is separable if it can be written as a product or a sum of pure spatial and pure temporal models, i.e.,
\[
C(\bm{s}_{1}, \bm{s}_{2}; t_{1}, t_{2}) = C_{S}(\bm{s}_{1}, \bm{s}_{2}) \times C_{T}(t_{1}, t_{2}) \quad \text{or} \quad C(\bm{s}_{1}, \bm{s}_{2}; t_{1}, t_{2}) = C_{S}(\bm{s}_{1}, \bm{s}_{2}) + C_{T}(t_{1}, t_{2}),
\] 
for all space-time coordinates $(\bm{s}_{1}, t_{1})$, $(\bm{s}_{2}, t_{2}) \in \mathcal{D} \times \mathbb{R}$. Given that $C_{S}$ and $C_{T}$ are valid covariance functions on $\mathcal{D}$ and $\mathbb{R}$, the product or the sum is valid on $\mathcal{D} \times \mathbb{R}$. It has been argued that the class of separable covariance models is severely limited due to the lack of space-time interaction~\citep{de2013}, and that in many cases it implies ``unphysical dependence among process variables"~\citep{porcu2019}. We limit our attention to non-separable covariance functions in this paper. 

\subsection{Overview and Contributions}\label{sec:1.3}

In this paper, we adopt both the space embedding technique and the scale mixture approach to construct a broad range of valid space-time covariance functions on generalized linear networks and/or Euclidean trees. The rest of the paper is organized as follows. Section~\ref{sec:2} reviews preliminaries about generalized linear networks equipped with two distance metrics: resistance distance and geodesic distance. Section~\ref{sec:3} gives sufficient conditions for constructing isotropic space-time covariance models by space embedding on arbitrary generalized linear networks and then on an important subgroup -- Euclidean trees. Besides deriving space-time covariance functions on directed Euclidean trees based on the scale mixture approach and the convex cone property in Section~\ref{sec:4}, we also show that the exponential tail-down model~\citep{ver2010} is the one and only that is directionless (i.e. isotropic) and is thereby a bridge between models in Section~\ref{sec:3} and Section~\ref{sec:4}. In Section~\ref{sec:5}, we apply a few space-time covariance models on daily stream temperature measurements from a stream network in the northwest United States and compare model predictive performance. Section~\ref{sec:6} concludes the paper with discussion.   

\section{Generalized Linear Networks and Distance Metrics}\label{sec:2}
\subsection{Generalized Linear Networks}\label{sec:2.1}
A network, also called a graph, is a collection of vertices (nodes) joined by edges~\citep[chap.~1]{mn} and is denoted by the pair $(\mathcal{V}, \mathcal{E})$. A linear network is the union of finitely many line segments in the plane where different edges only possibly intersect with each other at one of their vertices (see Figure~\ref{fig:eg}). It is useful to associate each edge with a positive real number, which is called the weight. Weights can be physical edge lengths, strengths, etc. The space-time covariance functions in our paper are defined on generalized linear networks, whose definition was introduced by~\citet{anderes2020} and is revisited below.

\begin{definition}\label{def1}
	A triple $\mathcal{G} = (\mathcal{V}, \mathcal{E}, \{\xi_{e}\}_{e \in \mathcal{E}})$ which satisfies conditions (I) - (IV) is called a graph with Euclidean edges. 
	
	\noindent (I) Graph structure: $(\mathcal{V}, \mathcal{E})$ is a finite simple connected graph, meaning that the vertex set $\mathcal{V}$ is finite, the graph has no self-edges or multi-edges, and every pair of vertices is connected by a path.
	
	\noindent (II) Edge sets: Each edge $e\in \mathcal{E}$ is associated with a unique abstract set, also denoted by $e$. The vertex set $\mathcal{V}$ and  all the edge sets are mutually disjoint. 
	
	\noindent (III) Edge coordinates: For each edge $e \in \mathcal{E}$ and vertices $u, v \in \mathcal{V}$ joined by $e$, there is a bijective mapping $\xi_{e}$ defined on the union of the edge set $e$ and vertices $\{u, v\}$, i.e. $e \cup \{u, v\}$, such that $\xi_{e}$ maps $e$ onto an open interval $(\underline{e}, \overline{e}) \subset \mathbb{R}$ and $\{u, v\}$ onto endpoints $\{\underline{e}, \overline{e}\}$, respectively. 
	
	\noindent (IV) Distance consistency: Define $d_{\mathcal{G}}(u, v): \mathcal{V} \times \mathcal{V} \rightarrow [0, \infty)$ as the length of the shortest path on vertices of a weighted graph where the weight associated with each edge $e \in \mathcal{E}$ is defined as $\overline{e} - \underline{e}$. Then, the following equality holds: 
	\[
	d_{\mathcal{G}}(u, v) = \overline{e} - \underline{e}
	\]
	for every $e \in \mathcal{E}$ connecting two vertices $u, v \in \mathcal{V}$. 
\end{definition}

In our work, we assume that the topological structure of $\mathcal{G}$ does not evolve over time. Any arbitrary site $\bm{s}$ on such a network $\mathcal{G}$ is denoted as $\bm{s} \in \mathcal{G} = \bm{s} \in \mathcal{V} \cup \bigcup_{e\in\mathcal{E}}$. Graphs with Euclidean edges extend the notion of traditional linear networks by including graphs which do not have a planar representation in $\mathbb{R}^{2}$ (see, e.g., Figure 3 of~\citet{anderes2020}). We use the terms generalized linear networks and 
graphs with Euclidean edges interchangeably in this paper. Any tree-like graph $(\mathcal{V}, \mathcal{E})$ is planar and can be constructed as a graph with Euclidean edges easily. We call such a graph a Euclidean tree and denote it by $\mathcal{T}$. Vertices of a Euclidean tree connected with only one edge are called leaves. 

\begin{figure}[H]
	\begin{center}
		\includegraphics[scale=1]{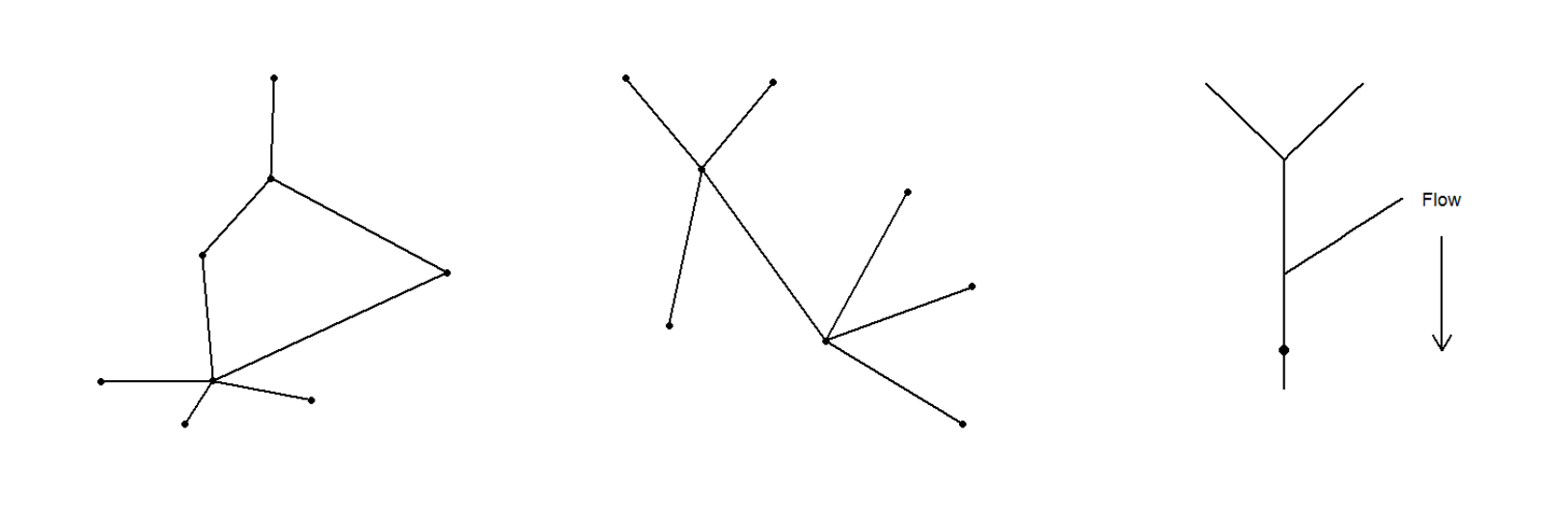}
	\end{center}
	\caption{Examples of networks. The middle panel is a directionless tree, while the right most panel is a directed tree (e.g. stream) with the outlet superimposed.}\label{fig:eg}
\end{figure}

\subsection{Distance Metrics}\label{sec:2.2}
Let $\Phi(\mathcal{D}, d)$ denote the class of radial profile functions such that for any $f \in \Phi(\mathcal{D}, d)$, 
\begin{equation}\label{pd2}
	\sum_{i=1}^{N}\sum_{j = 1}^{N}a_{i}a_{j} f(d(\bm{s}_{i}, \bm{s}_{j})) \geq 0,
\end{equation}
for any finite collection $\{a_{i}\}_{i=1}^{N}\subset \mathbb{R}$ and $\{\bm{s}_{i}\}_{i=1}^{N} \subset \mathcal{D}$. For any $f \in \Phi(\mathcal{D}, d)$, we say that $f$ is positive definite on $\mathcal{D}$ with respect to distance $d$. When the space domain is a finite-dimensional Euclidean space $\mathcal{D} = \mathbb{R}^{n}$, for any $ \bm{x}, \bm{y} \in \mathbb{R}^{n}$, where $\bm{x} = (x_{1}, \cdots, x_{n})'$ and $\bm{y} = (y_{1}, \cdots, y_{n})'$, let $d_{p}$ be the standard $l_{p}$ norm with
\[
d_{p}(\bm{x}, \bm{y}) = ||\bm{x} - \bm{y}||_{p} =  \left(\sum_{i=1}^{n}\left|x_{i} - y_{i}\right|^{p}\right)^{1/p}, \qquad 1 \leq p < \infty.  
\]
When the space domain is a real Hilbert space $\mathcal{H}$, the norm, denoted by $||\cdot||_{\mathcal{H}}$ is induced by the inner product $<\cdot, \cdot>$, such that
\[
||\bm{x}||_{\mathcal{H}} := \sqrt{<\bm{x}, \bm{x}>}, \qquad \bm{x} \in \mathcal{H}.
\]

A generalized linear network $\mathcal{G}$ comes along with two distance metrics: one is the standard length of shortest path, a.k.a. geodesic distance or stream distance, denoted by $d_{G, \mathcal{G}}$; the other is resistance distance $d_{R, \mathcal{G}}$. $d_{R, \mathcal{G}}$ is defined as the variogram of an auxiliary random field $Y_{\mathcal{G}}$: 
\[
d_{R, \mathcal{G}}(\bm{s}_{1}, \bm{s}_{2}) := \text{Var} \left(Y_{\mathcal{G}}(\bm{s}_{1}) - Y_{\mathcal{G}}(\bm{s}_{2}) \right), \qquad \forall \bm{s}_{1}, \bm{s}_{2} \in \mathcal{G},
\]
whose formal construction is given by~\citet{anderes2020} and we skip the details. The resistance metric is an extension of the one in electrical network theory from pairs of vertices to any points on the graph. Both metrics satisfy the two conditions mentioned in the previous section and have the relationship given in Proposition~\ref{prop1}, which is a portion of Proposition 4 of~\citet{anderes2020}:

\begin{proposition}\label{prop1}
	For a graph with Euclidean edges $\mathcal{G}$,
	\begin{equation*}
		d_{R, \mathcal{G}} (\bm{s}_{1}, \bm{s}_{2}) \leq d_{G, \mathcal{G}} (\bm{s}_{1}, \bm{s}_{2}), \qquad \bm{s}_{1}, \bm{s}_{2} \in \mathcal{G}.
	\end{equation*} 
	Equality holds if and only if $\mathcal{G}$ is a Euclidean tree. 
\end{proposition}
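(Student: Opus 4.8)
The plan is to prove both the inequality and the equality characterization through the electrical-network interpretation of the resistance metric. Since \citet{anderes2020} introduce $d_{R,\mathcal{G}}$ as the variogram of the auxiliary field $Y_{\mathcal{G}}$ and identify it with the extension of effective resistance to arbitrary points of $\mathcal{G}$, I would work throughout with $d_{R,\mathcal{G}}(\bm{s}_1,\bm{s}_2)$ as the effective resistance between $\bm{s}_1$ and $\bm{s}_2$ in the finite network that assigns to each edge $e$ a resistor of resistance $\overline{e}-\underline{e}$. When $\bm{s}_1$ or $\bm{s}_2$ lies in the interior of an edge, the first step is to augment $\mathcal{G}$ by promoting these points to vertices: a point with coordinate $x=\xi_e(\bm{s})$ splits its host edge into two series resistors of lengths $x-\underline{e}$ and $\overline{e}-x$, which recombine to $\overline{e}-\underline{e}$. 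This preserves both $d_{R,\mathcal{G}}$ and $d_{G,\mathcal{G}}$ and reduces the problem to a finite network whose terminals are genuine vertices.

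For the inequality $d_{R,\mathcal{G}}\le d_{G,\mathcal{G}}$ I would invoke Thomson's principle, which writes the effective resistance as the minimum of the dissipated energy $\sum_e r_e\theta_e^2$ over all unit flows $\theta$ from $\bm{s}_1$ to $\bm{s}_2$. Let $P$ be a geodesic realizing $d_{G,\mathcal{G}}(\bm{s}_1,\bm{s}_2)$ and let $\theta$ be the unit flow carrying current $1$ along each edge of $P$ and $0$ elsewhere. This $\theta$ is admissible, and its energy is exactly $\sum_{e\in P}(\overline{e}-\underline{e})=d_{G,\mathcal{G}}(\bm{s}_1,\bm{s}_2)$; since the true effective resistance is the infimum of such energies, $d_{R,\mathcal{G}}(\bm{s}_1,\bm{s}_2)\le d_{G,\mathcal{G}}(\bm{s}_1,\bm{s}_2)$ at once. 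For the forward equality, when $\mathcal{G}=\mathcal{T}$ is a Euclidean tree there is a unique path $P$ joining $\bm{s}_1$ and $\bm{s}_2$, and I would show $\theta$ is then the \emph{only} admissible unit flow: for $e\notin P$, deleting $e$ leaves $\bm{s}_1,\bm{s}_2$ in the same component and $e$ the sole edge across the resulting cut, so conservation forces $\theta_e=0$; for $e\in P$ the cut separates the two terminals, forcing $|\theta_e|=1$. The minimizer is thus forced and $d_{R,\mathcal{T}}=d_{G,\mathcal{T}}$ for every pair.

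The converse — that equality for all pairs forces $\mathcal{G}$ to be a tree — is the step I expect to be the main obstacle, and I would argue it contrapositively by exhibiting a single vertex pair at which the inequality is strict, using parallel reduction together with Rayleigh's monotonicity law. If $\mathcal{G}$ is not a tree it contains an edge $e=\{u,v\}$ of length $\ell=\overline{e}-\underline{e}$ lying on a cycle, so $e$ is not a bridge and there is a simple path $Q$ from $u$ to $v$ avoiding $e$, of length $L>0$ and internally disjoint from $e$. If the geodesic from $u$ to $v$ is $e$ itself then $d_{G,\mathcal{G}}(u,v)=\ell\le L$; otherwise the geodesic avoids $e$ and I take $Q$ to be that geodesic, so $L=d_{G,\mathcal{G}}(u,v)<\ell$. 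In either case $e$ and $Q$ form a cycle through $u$ and $v$, and by Rayleigh monotonicity the effective resistance in the full network is at most that of this two-path subnetwork, the parallel combination $\ell L/(\ell+L)$. Since $\ell L/(\ell+L)<\min(\ell,L)=d_{G,\mathcal{G}}(u,v)$, we get $d_{R,\mathcal{G}}(u,v)<d_{G,\mathcal{G}}(u,v)$, so equality fails at this pair.

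Granting the three routine checks — that promoting interior points to vertices preserves both metrics, that a simple $u$--$v$ path avoiding $e$ is automatically internally disjoint from $e$ (its interior vertices differ from $u,v$), and that the parallel-combination bound is licensed by Rayleigh monotonicity applied to the subnetwork $e\cup Q$ — the three pieces combine to give $d_{R,\mathcal{G}}\le d_{G,\mathcal{G}}$ with equality throughout if and only if $\mathcal{G}$ is a Euclidean tree. The genuinely delicate point is the converse, where the case split on whether the geodesic is $e$ ensures $\min(\ell,L)$ always equals $d_{G,\mathcal{G}}(u,v)$, so that the strict parallel-combination inequality transfers to a strict gap between the two distances.
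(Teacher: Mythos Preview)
Your argument is essentially correct, but you should know that the paper does not actually prove Proposition~\ref{prop1}: it is stated there as ``a portion of Proposition 4 of~\citet{anderes2020}'' and cited without proof. So there is no approach in the paper to compare against; you are supplying a proof where the authors simply defer to the source.

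That said, your electrical-network argument via Thomson's principle and Rayleigh monotonicity is sound and is in fact the standard route to results of this type. One small simplification: your case split in the converse is unnecessary in this paper's setting. Condition~(IV) of Definition~\ref{def1} (distance consistency) forces $d_{G,\mathcal{G}}(u,v)=\overline{e}-\underline{e}=\ell$ whenever $u,v$ are joined by the edge $e$, so the geodesic between adjacent vertices is always the edge itself and $L\ge\ell$ automatically. Hence $\min(\ell,L)=\ell=d_{G,\mathcal{G}}(u,v)$ without any case analysis, and the parallel bound $\ell L/(\ell+L)<\ell$ gives the strict gap directly. This does not affect correctness, only economy.
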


Since, by Proposition~\ref{prop1}, $d_{R, \mathcal{T}} = d_{G, \mathcal{T}}$, henceforth we let $d_{\cdot, \mathcal{T}}$ denote either one.

\section{Isotropic Space-time Models by Space Embedding}\label{sec:3}
In this section, we adopt the space embedding technique to transform the abstract, less familiar spatial domain, $\mathcal{G}$, to simpler, well-studied spaces, e.g. Hilbert and Euclidean spaces, and build isotropic space-time models from there. 

We follow the definition of isometric spaces given by~\citet{anderes2020} and ~\citet{menegatto2020}. Define a distance space as a pair $(\mathcal{D}, d)$, where $\mathcal{D}$ is a non-empty set and the function $d$ is specified in Section~\ref{sec:1.2}. 

\begin{definition}\label{def2}
	A distance space $(\mathcal{D}, d)$ is said to be  $g-$embeddable in a Hilbert space $(\mathcal{H}, ||\cdot||_{\mathcal{H}})$ if $g: [0, \infty) \rightarrow [0, \infty)$, and there exists a mapping $i: \mathcal{D} \rightarrow \mathcal{H}$ such that 
	\[
	g(d(\bm{s}_{1}, \bm{s}_{2})) = ||i(\bm{s}_{1}) - i(\bm{s}_{2})||_{\mathcal{H}}, \qquad \bm{s}_{1}, \bm{s}_{2} \in \mathcal{D}.
	\]
	If function $g$ is the identity map, we say $(\mathcal{D}, d)$ is isometrically embeddable in $\mathcal{H}$. 
\end{definition}

\subsection{Hilbert Space Embedding of Generalized Linear Network}\label{sec:3.1}
Our first main contribution is based on the square-root embedding result of a graph with Euclidean edges into a Hilbert space proved by~\citet{anderes2020} and restated below. 

\begin{theorem}\label{thm1}
	\textbf{(square-root embedding, Anderes et al.)} Given $\mathcal{G}$ a graph with Euclidean edges, there exists a Hilbert space $\mathcal{H}$ and a mapping $i: \mathcal{G} \rightarrow \mathcal{H}$ such that
	\[
	\sqrt{d_{R, \mathcal{G}}(\bm{s}_{1}, \bm{s}_{2})} = ||i(\bm{s}_{1}) - i(\bm{s}_{2})||_{\mathcal{H}} 
	\]
	for all $\bm{s}_{1}, \bm{s}_{2} \in \mathcal{G}$. In the special case in which, $\mathcal{G}$ is a Euclidean tree, the above result also holds for the geodesic distance.
\end{theorem}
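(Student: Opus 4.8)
The plan is to reduce the claim to the classical characterization of Hilbert-space embeddability due to Schoenberg: a distance space $(\mathcal{D}, \rho)$ admits an isometric embedding into some Hilbert space if and only if $\rho^{2}$ is conditionally negative definite (of negative type), meaning that $\rho$ is symmetric with $\rho(\bm{s},\bm{s}) = 0$ and $\sum_{i,j} a_{i}a_{j}\,\rho^{2}(\bm{s}_{i}, \bm{s}_{j}) \leq 0$ whenever $\sum_{i} a_{i} = 0$. Since the embedding function in Definition~\ref{def2} is $g(t) = \sqrt{t}$, the relevant metric is $\rho = \sqrt{d_{R, \mathcal{G}}}$, so $\rho^{2} = d_{R, \mathcal{G}}$, and the entire statement collapses to the single assertion that the resistance distance $d_{R, \mathcal{G}}$ is a conditionally negative definite kernel on $\mathcal{G}$.

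First I would verify exactly that, using nothing more than the defining representation of $d_{R, \mathcal{G}}$ as a variogram. Fix points $\bm{s}_{1}, \dots, \bm{s}_{N} \in \mathcal{G}$ and reals $a_{1}, \dots, a_{N}$ with $\sum_{i} a_{i} = 0$. Substituting $d_{R, \mathcal{G}}(\bm{s}_{i}, \bm{s}_{j}) = \mathrm{Var}(Y_{\mathcal{G}}(\bm{s}_{i}) - Y_{\mathcal{G}}(\bm{s}_{j}))$ and expanding the variance into its marginal-variance and covariance pieces, the two marginal-variance sums each factor as $\big(\sum_{i} a_{i}\big) \times (\cdots)$ and therefore vanish under the zero-sum constraint; what remains is $\sum_{i,j} a_{i}a_{j}\, d_{R, \mathcal{G}}(\bm{s}_{i}, \bm{s}_{j}) = -2\,\mathrm{Var}\!\big(\sum_{i} a_{i} Y_{\mathcal{G}}(\bm{s}_{i})\big) \leq 0$. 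This is the desired conditional negative definiteness, and it holds for every configuration of points on the continuous network, not merely the vertices, because $Y_{\mathcal{G}}$ is defined on all of $\mathcal{G}$.

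Next I would make the embedding explicit rather than merely cite Schoenberg, so as to exhibit the Hilbert space $\mathcal{H}$ and the map $i$. Fixing a base point $o \in \mathcal{G}$, I would define the symmetric kernel $K(\bm{s}_{1}, \bm{s}_{2}) = \tfrac{1}{2}\big[d_{R, \mathcal{G}}(\bm{s}_{1}, o) + d_{R, \mathcal{G}}(\bm{s}_{2}, o) - d_{R, \mathcal{G}}(\bm{s}_{1}, \bm{s}_{2})\big]$ and check that conditional negative definiteness of $d_{R, \mathcal{G}}$ is precisely what forces $K$ to be positive definite. I would then take $\mathcal{H}$ to be the reproducing kernel Hilbert space of $K$ (equivalently, the completion of the linear span of $\{K(\cdot, \bm{s})\}$) and set $i(\bm{s}) = K(\cdot, \bm{s})$, so that $\langle i(\bm{s}_{1}), i(\bm{s}_{2})\rangle_{\mathcal{H}} = K(\bm{s}_{1}, \bm{s}_{2})$. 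A short computation using $d_{R, \mathcal{G}}(\bm{s}, \bm{s}) = 0$ then gives $\|i(\bm{s}_{1}) - i(\bm{s}_{2})\|_{\mathcal{H}}^{2} = K(\bm{s}_{1}, \bm{s}_{1}) + K(\bm{s}_{2}, \bm{s}_{2}) - 2K(\bm{s}_{1}, \bm{s}_{2}) = d_{R, \mathcal{G}}(\bm{s}_{1}, \bm{s}_{2})$, which is exactly $\|i(\bm{s}_{1}) - i(\bm{s}_{2})\|_{\mathcal{H}} = \sqrt{d_{R, \mathcal{G}}(\bm{s}_{1}, \bm{s}_{2})}$.

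Finally, the tree case requires no new argument: Proposition~\ref{prop1} gives $d_{R, \mathcal{T}} = d_{G, \mathcal{T}}$ on a Euclidean tree, so the geodesic distance is literally the same kernel already shown to be conditionally negative definite, and the identical construction applies verbatim. I expect the genuinely substantive work to lie upstream of this argument — namely the construction of the auxiliary field $Y_{\mathcal{G}}$ realizing $d_{R, \mathcal{G}}$ as a bona fide variogram across the whole continuous graph (carried out by Anderes et al.\ and taken as given here). Granting that, the embedding is a direct consequence of Schoenberg's theorem, and the only points demanding care are confirming positive definiteness of $K$ and the completeness and well-definedness of the resulting Hilbert space, both of which are routine once the variogram representation is in hand.
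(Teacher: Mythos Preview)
The paper does not supply its own proof of Theorem~\ref{thm1}; it is quoted verbatim as a result of \citet{anderes2020} and used as a black box. So there is no in-paper argument to compare against.

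That said, your reconstruction is correct and is essentially the argument one finds in the original source. The variogram identity $d_{R,\mathcal{G}}(\bm{s}_{i},\bm{s}_{j}) = \mathrm{Var}\big(Y_{\mathcal{G}}(\bm{s}_{i}) - Y_{\mathcal{G}}(\bm{s}_{j})\big)$ immediately yields conditional negative definiteness via the computation you wrote, and Schoenberg's theorem (or the explicit RKHS construction you sketch) then delivers the Hilbert-space embedding. Your remark that the real content lies in the construction of the auxiliary field $Y_{\mathcal{G}}$ on the full continuous network is exactly right: that is where \citet{anderes2020} do the work, and once it is granted the embedding is routine. The tree case via Proposition~\ref{prop1} is handled correctly as well.
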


The so-called Gneiting class of covariance functions has been especially popular in space-time geostatistical modeling~\citep{porcu2019} and will be the building block of the isotropic covariance functions given in this section. Despite some discrepancy in the literature, here a function $\varphi: [0, \infty) \rightarrow \mathbb{R}$ is said to be completely monotone on $[0, \infty)$ if $\varphi$ is continuous on $[0, \infty)$, infinitely differentiable on $(0, \infty)$ and $(-1)^{j}\varphi^{(j)}(t) \geq 0$ over $(0, \infty)$ for every integer $j \geq 0$, where $\varphi^{(j)}$ denotes the $j$th derivative of $\varphi$ and $\varphi^{(0)} = \varphi$. A nonnegative continuous function $\psi(t): [0,\infty) \rightarrow \mathbb{R}$ with completely monotone derivative is called a Bernstein function. In analogy to the definition of positive definite functions, we recall that for a distance space $(\mathcal{D}, d)$, a continuous function $f: \mathcal{D} \rightarrow \mathbb{R}$ is called \emph{conditionally negative definite} (see for example in~\citet{menegatto2020}) on $\mathcal{D}$ with respect to $d$ if
\begin{equation}\label{cnd}
	\sum_{i=1}^{N}\sum_{j=1}^{N}a_{i}a_{j}f(d(\bm{s}_{i}, \bm{s}_{j})) \leq 0,
\end{equation}
for any finite collections of $\{a_{i}\}_{i=1}^{N} \subset \mathbb{R}$ and $\{\bm{s}_{i}\}_{i=1}^{N} \subset \mathcal{D}$, given $\sum_{i=1}^{N} a_{i} = 0$. For such a function we write $f \in CND(\mathcal{D}, d)$.

We are now ready to formulate and prove our first main result. Denote the generalized \emph{Gneiting} class of continuous functions by $G_{\alpha}$, where 
\[
G_{\alpha}(d, u) = \frac{1}{\psi(d)^{\alpha}}\varphi\left(\frac{u}{\psi(d)} \right), \quad d, u \geq 0, 
\]
with $\psi$ and $\varphi$ being strictly positive and continuous. Theorem 3.2 in~\citet{menegatto2020} provides sufficient conditions for $G_{\alpha}$ to be positive definite over the product space of a quasi-metric space and a finite Euclidean space, which extends~\citet{gneiting2002}'s results in Euclidean spaces and will be applied on $\mathcal{G} \times \mathbb{R}$.

\begin{theorem}\label{thm2}
	\textbf{(generalized Gneiting class)} Let $G_{\alpha}$ be the function defined above with $\varphi(\cdot)$ being completely monotone. Assume that $a \in (0, 1]$ and $\alpha \geq 1/2$. Then for any pairs $(\bm{s}_{1}, t_{1}), (\bm{s}_{2}, t_{2}) \in \mathcal{G} \times \mathbb R$, where $\mathcal{G}$ is equipped with the resistance distance $d_{R, \mathcal{G}}$, the following statements are true:
	
	\noindent \textbf{(a)} the function $C(d_{R, \mathcal{G}}(\bm{s}_{1}, \bm{s}_{2}); |t_{1}- t_{2}|):= G_{\alpha}(d_{R, \mathcal{G}}(\bm{s}_{1}, \bm{s}_{2}), |t_{1} - t_{2}|^{2a})$ is a valid covariance function over $\mathcal{G}\times \mathbb{R}$ provided that $\psi \in CND(\mathcal{G}, d_{R, \mathcal{G}})$;
	
	\noindent \textbf{(b)} the function $C(d_{R, \mathcal{G}}(\bm{s}_{1}, \bm{s}_{2}); |t_{1}- t_{2}|):= G_{\alpha}(d_{R, \mathcal{G}}(\bm{s}_{1}, \bm{s}_{2}), |t_{1} - t_{2}|^{2a})$ is a valid covariance function over $\mathcal{G}\times \mathbb{R}$ provided that $\psi := g \circ h$, where $g$ is a positive Bernstein function and $h$ is a nonnegative valued function such that $h \in CND(\mathcal{G}, d_{R, \mathcal{G}})$;
	
	\noindent \textbf{(c)} the function $C(d_{R, \mathcal{G}}(\bm{s}_{1}, \bm{s}_{2}); |t_{1}- t_{2}|):= G_{\alpha}(d_{R, \mathcal{G}}(\bm{s}_{1}, \bm{s}_{2})^{b}, |t_{1} - t_{2}|^{2a})$ is a valid covariance function over $\mathcal{G}\times \mathbb{R}$ provided that $b \in (0, 1]$ and $\psi$ is a positive Bernstein function.  
	
	\noindent Moreover, when $\mathcal{G}$ is a Euclidean tree, the above results hold for $d_{G, \mathcal{G}}$ as well.
\end{theorem}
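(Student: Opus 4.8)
The plan is to reduce all three parts to a single application of Menegatto et al.'s Theorem 3.2, whose hypotheses require $\varphi$ to be completely monotone (assumed throughout) and the profile composed with the spatial metric to be conditionally negative definite, while the temporal argument supplies the ``squared distance'' of the Euclidean factor. First I would settle the temporal side once and for all: for $a \in (0,1]$ the map $(t_1,t_2) \mapsto |t_1 - t_2|^{2a}$ is conditionally negative definite on $\mathbb{R}$ by Schoenberg's classical characterization, which is exactly the structure the temporal factor of Theorem 3.2 requires, and the exponent constraint $\alpha \geq 1/2$ is the matching dimensional threshold for this one-dimensional factor. With this in hand, part \textbf{(a)} is immediate: the hypothesis $\psi \in CND(\mathcal{G}, d_{R,\mathcal{G}})$ is precisely the conditional negative definiteness demanded of the spatial profile, $\psi$ and $\varphi$ are strictly positive and continuous by assumption, and $(\mathcal{G}, d_{R,\mathcal{G}})$ is a genuine (quasi-)metric space, so Theorem 3.2 applies verbatim on $\mathcal{G} \times \mathbb{R}$.

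Parts \textbf{(b)} and \textbf{(c)} I would then obtain as corollaries of \textbf{(a)} by verifying that the relevant composite profile again lies in $CND(\mathcal{G}, d_{R,\mathcal{G}})$, using two standard closure facts: (i) if $h \geq 0$ is conditionally negative definite and $g$ is a Bernstein function, then $g \circ h$ is conditionally negative definite; and (ii) the power map $x \mapsto x^{b}$ is a Bernstein function for $b \in (0,1]$. For \textbf{(b)}, writing $\psi = g \circ h$ with $g$ a positive Bernstein function and $h \in CND(\mathcal{G}, d_{R,\mathcal{G}})$ nonnegative, fact (i) yields $\psi \in CND(\mathcal{G}, d_{R,\mathcal{G}})$, and \textbf{(a)} closes the case. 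For \textbf{(c)} I would first record that $d_{R,\mathcal{G}}$ is itself conditionally negative definite: by Theorem~\ref{thm1} the square root $\sqrt{d_{R,\mathcal{G}}}$ is realized as a Hilbert-space distance, so $d_{R,\mathcal{G}} = \|i(\cdot) - i(\cdot)\|_{\mathcal{H}}^{2}$ is a squared Hilbert distance and hence lies in $CND(\mathcal{G}, d_{R,\mathcal{G}})$. Applying (ii) with $b \in (0,1]$ gives $d_{R,\mathcal{G}}^{\,b} \in CND(\mathcal{G}, d_{R,\mathcal{G}})$, and then (i) with the positive Bernstein function $\psi$ shows that $\tilde\psi(x) := \psi(x^{b})$ satisfies $\tilde\psi(d_{R,\mathcal{G}}) = \psi(d_{R,\mathcal{G}}^{\,b}) \in CND(\mathcal{G}, d_{R,\mathcal{G}})$ and is strictly positive and continuous; invoking \textbf{(a)} with $\tilde\psi$ in place of $\psi$ completes \textbf{(c)}, since $G_\alpha(d_{R,\mathcal{G}}^{\,b}, |t_1-t_2|^{2a})$ is exactly $G_\alpha$ built from $\tilde\psi$.

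Finally, the \emph{Moreover} clause is essentially free: when $\mathcal{G}$ is a Euclidean tree, Proposition~\ref{prop1} gives $d_{R,\mathcal{T}} = d_{G,\mathcal{T}}$, so each of \textbf{(a)}--\textbf{(c)} phrased with the geodesic distance is literally the statement already proved for the resistance distance, and the square-root embedding of Theorem~\ref{thm1} likewise holds for $d_{G,\mathcal{T}}$, keeping all the $CND$ facts intact. I expect the main obstacle to be the bookkeeping at the interface with Theorem 3.2 rather than any hard estimate: one must match the cited theorem's hypotheses precisely, in particular confirming that the temporal power $|t_1 - t_2|^{2a}$ with $a \in (0,1]$ is an admissible Euclidean factor and that $\alpha \geq 1/2$ is the correct dimensional threshold, after which parts \textbf{(b)} and \textbf{(c)} follow mechanically from the closure properties of conditionally negative definite and Bernstein functions.
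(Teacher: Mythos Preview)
Your proposal is correct and follows essentially the same approach as the paper: both invoke Menegatto et al.'s Theorem~3.2 on the quasi-metric space $(\mathcal{G}, d_{R,\mathcal{G}})$ with the one-dimensional Euclidean temporal factor, use the square-root Hilbert embedding of Theorem~\ref{thm1} for part \textbf{(c)}, and finish the Euclidean-tree clause via Proposition~\ref{prop1}. The only cosmetic difference is organizational: the paper cites separate parts of Menegatto's Theorem~3.2 for \textbf{(a)}, \textbf{(b)}, \textbf{(c)} directly, whereas you reduce \textbf{(b)} and \textbf{(c)} to \textbf{(a)} by explicitly invoking the Bernstein/CND closure facts---these are the same reductions that are presumably packaged inside Menegatto's sub-results, so the underlying argument is identical.
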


\begin{proof}
	The symmetry of the functions defined in all three parts is obvious. A distance space $(\mathcal{D}, d)$ defined in this paper is also a quasi-metric space in~\citet{menegatto2020}, while the converse is not necessarily true. Therefore, parts (a) and (b) of Theorem~\ref{thm2} are direct applications of~\citet{menegatto2020}'s work to the case where the dimension of the Euclidean space is $1$ and the quasi-metric space is $(\mathcal{G}, d_{R, \mathcal{G}})$. By Theorem~\ref{thm1}, $(\mathcal{G}, d_{R, \mathcal{G}})$ is square root-embeddable in a Hilbert space $\mathcal{H}$. Notice that $(\mathcal{G}, \sqrt{d_{R, \mathcal{G}}})$ is again a distance space and thus isometrically embeddable into $\mathcal{H}$ by Definition~\ref{def2}. Statement (c) follows in concert with Theorem 3.2 (iii) in~\citet{menegatto2020}. When $\mathcal{G}$ is a Euclidean tree, Proposition~\ref{prop1} gives that $d_{R, \mathcal{G}} = d_{G, \mathcal{G}}$, which completes the proof. 
\end{proof}

Each part (a) - (c) of Theorem~\ref{thm2} provides researchers an easy to implement method for constructing valid space-time covariance functions on a generalized linear network. Let us digress for a moment and consider a pure spatial, isotropic random process $Z(\bm{s})$ defined on $(\mathcal{G}, d_{R, \mathcal{G}})$ with an overall constant mean $\mu$. Similar to the conclusion in geostatistics, the semivariogram, defined as $
\gamma(d_{R, \mathcal{G}}(\bm{s}_{1}, \bm{s}_{2})) := \frac{1}{2}Var(Z(\bm{s}_{1}) - Z(\bm{s}_{2}))$
is conditionally negative definite, i.e. $\gamma \in CND(\mathcal{G}, d_{R, \mathcal{G}})$. This result holds for $d_{G, \mathcal{G}}$ as well when $\mathcal{G}$ is Euclidean tree. Following the previous notation, let $Cov(Z(\bm{s}_{1}), Z(\bm{s}_{2})) = C(\bm{s}_{1}, \bm{s}_{2}) = f(d_{R, \mathcal{G}}(\bm{s}_{1}, \bm{s}_{2}))$. Then by definition, there exists the following relationship
\begin{equation}\label{cnd2}
	\gamma(d_{R, \mathcal{G}}(\bm{s}_{1},\bm{s}_{2})) = f(0) - f(d_{R, \mathcal{G}}(\bm{s}_{1}, \bm{s}_{2})),
\end{equation} 
for all $\bm{s}_{1}, \bm{s}_{2} \in \mathcal{G}$, between $\gamma$ and $f$. Hence, given any radial profile $f \in \Phi(\mathcal{G}, d_{R, \mathcal{G}})$, we can construct $\gamma$ based on (\ref{cnd2}), which belongs to $CND(\mathcal{D}, d_{R, \mathcal{G}})$. For examples of the class $\Phi(\mathcal{G}, d_{R, \mathcal{G}})$, we refer to~\citet{anderes2020}. Note that statements (a) - (c) are not exclusive. For instance, $\psi(t) = t^{\lambda} + \beta$, for $t \geq 0$ with $0 < \lambda \leq 1, \beta > 0$ is a positive Bernstein function (given in Table~\ref{tb:fnc}) and also belongs to $CND(\mathcal{T}, d_{\cdot, \mathcal{T}})$ (see Lemma 3 in Supplement A). When $b = 1$ in (c), both (a) and (c) give the same subclass of valid covariance functions on $\mathcal{T} \times \mathbb{R}$.  

In addition to constructing valid covariance functions on $\mathcal{G} \times \mathbb{R}$, we also investigate the geometric features of marginal functions whose definition will be given shortly, near the origin and at infinity. It has been discussed in~\citet{de2010} that spatial and temporal marginals, defined as $f_{S}(d) := f(d, 0)$ and $f_{T}(u) := f(0, u)$, respectively (where $f$ denotes the space-time   radial profile function), play a significant role in selecting an appropriate and physically meaningful class of covariances in applications. By comparing empirical covariance functions with estimated ones, any obvious disagreement would indicate model misspecification~\citep{stein2005}.  

It is clear that the covariance function, as well as both marginal functions, constructed by Theorem~\ref{thm2} are continuous at the origin since $\psi$ and $\varphi$ are continuous on $[0, \infty)$. Although L\'evy-Khinchin's formula~\citep[pp.~15-45]{berg2008} characterizes the class of conditionally negative definite functions in $\mathbb{R}^{n}$, analogous results in the distance space $(\mathcal{D}, d)$, especially $(\mathcal{G}, d_{R, \mathcal{G}})$, have not been obtained, as far as we know. Hence, we defer marginal results related to $CND(\mathcal{G}, d_{R, \mathcal{G}})$ for future research and investigate properties of marginals pertaining to the covariance functions by Theorem~\ref{thm2}(c) only.

\begin{proposition}\label{prop2}
	Let $C(d_{R, \mathcal{G}}(\bm{s}_{1}, \bm{s}_{1}); |t_{1}- t_{2}|):= G_{\alpha}(d_{R, \mathcal{G}}\left(\bm{s}_{1}, \bm{s}_{2})^{b}, |t_{1} - t_{2}|^{2a} \right)$, where $\alpha \geq 1/2,\ a \in (0, 1],\  b \in (0, 1]$, $\varphi$ is completely monotone and $\psi$ is a positive Bernstein function. Then the spatial $f_{S}$ and temporal $f_{T}$ marginal functions, i.e., $f_{S}(d) = G_{\alpha}(d^{b}, 0)$ and $f_{T}(u) = G_{\alpha}\left(0, u^{2a} \right)$, are non-increasing over $[0, \infty)$. 
\end{proposition}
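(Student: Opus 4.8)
The plan is to evaluate each marginal in closed form and then read off monotonicity directly from the defining properties of $\varphi$ and $\psi$. First I would compute the two marginals explicitly. Setting the temporal argument to zero in $G_{\alpha}(d^{b}, u^{2a})$ gives the spatial marginal $f_{S}(d) = \varphi(0)/\psi(d^{b})^{\alpha}$, since $\varphi\bigl(0/\psi(d^{b})\bigr) = \varphi(0)$; setting the spatial argument to zero gives the temporal marginal $f_{T}(u) = \varphi\bigl(u^{2a}/\psi(0)\bigr)/\psi(0)^{\alpha}$. Both expressions are well defined because $\psi$ and $\varphi$ are strictly positive and continuous on $[0, \infty)$.

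For the temporal marginal, the factor $\psi(0)^{-\alpha}$ is a positive constant, so the behavior of $f_{T}$ is governed by the map $u \mapsto \varphi\bigl(u^{2a}/\psi(0)\bigr)$. Since $\varphi$ is completely monotone, the case $j = 1$ of the defining inequalities $(-1)^{j}\varphi^{(j)} \geq 0$ yields $\varphi' \leq 0$ on $(0, \infty)$, so $\varphi$ is non-increasing. The inner map $u \mapsto u^{2a}/\psi(0)$ is non-decreasing on $[0, \infty)$ because $2a > 0$. Composing a non-increasing function with a non-decreasing one produces a non-increasing function, so $f_{T}$ is non-increasing.

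For the spatial marginal, it suffices to show that $d \mapsto \psi(d^{b})^{\alpha}$ is non-decreasing, since $\varphi(0) > 0$ and $x \mapsto x^{-\alpha}$ is decreasing on $(0, \infty)$. Here the Bernstein property enters: by definition $\psi$ has a completely monotone derivative, and the $j = 0$ case forces $\psi' \geq 0$, so $\psi$ is non-decreasing. The inner map $d \mapsto d^{b}$ is non-decreasing since $b > 0$, and raising the non-negative quantity $\psi(d^{b})$ to the positive power $\alpha$ preserves monotonicity. Hence $\psi(d^{b})^{\alpha}$ is non-decreasing and $f_{S}(d) = \varphi(0)/\psi(d^{b})^{\alpha}$ is non-increasing.

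No substantial obstacle is expected: the entire argument reduces to extracting the correct first-order sign information -- namely $\varphi' \leq 0$ from complete monotonicity and $\psi' \geq 0$ from the Bernstein property -- and then tracking monotonicity through the power transformations. The only point requiring mild care is confirming that the composition and the exponent $\alpha$ do not reverse the direction of monotonicity, which holds because all exponents $a$, $b$, $\alpha$ are positive and $\psi$, $\varphi$ are strictly positive throughout.
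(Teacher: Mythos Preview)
Your proposal is correct and follows essentially the same route as the paper: both arguments compute the marginals as $f_{S}(d)=\varphi(0)\,\psi(d^{b})^{-\alpha}$ and $f_{T}(u)=\psi(0)^{-\alpha}\varphi\bigl(u^{2a}/\psi(0)\bigr)$ and then extract the sign information $\psi'\geq 0$ (Bernstein) and $\varphi'\leq 0$ (complete monotonicity). The only cosmetic difference is that the paper writes out the chain-rule derivatives $f_{S}'$ and $f_{T}'$ explicitly and checks their signs term by term, whereas you phrase the same content as ``non-increasing composed with non-decreasing is non-increasing''; the underlying logic is identical.
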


\begin{proof}
	Observe that $f_{S}(d) = \varphi(0)\psi \left(d^{b} \right)^{-\alpha}$. The first-order derivative of the spatial marginal gives
	\[
	f'(d) = -\alpha b \varphi(0) \left(d^{b-1} \right)\psi\left(d^{b} \right)^{-\alpha -1}\psi'\left(d^{b}\right)  \leq 0, \quad d \in (0, \infty),
	\]
	since $\varphi$ and $\psi$ are positive functions and the derivative $\psi'$ is completely monotonic, thus nonnegative. Similarly, one can show the first-order derivative of the temporal marginal is
	\[
	f'_{T}(u) = \frac{2a}{\psi(0)^{\alpha + 1}} u^{2a - 1}\varphi'\left(\frac{u^{2a}}{\psi(0)} \right) \leq 0, \quad u \in (0, \infty),
	\]
	since $\varphi' \leq 0$ on $(0, \infty)$.
\end{proof}

\begin{proposition}\label{prop3}
	Let $C(d_{R, \mathcal{G}}(\bm{s}_{1}, \bm{s}_{1}); |t_{1}- t_{2}|):= G_{\alpha}\left(d_{R, \mathcal{G}}(\bm{s}_{1}, \bm{s}_{2})^{b}, |t_{1} - t_{2}|^{2a} \right)$, where $\alpha \geq 1/2, \ a \in (0, 1], \ b \in (0, 1]$, $\varphi$ is completely monotone and $\psi$ is a positive Bernstein function. Then the spatial marginal function $f_{S}(d) = G_{\alpha}\left(d^{b}, 0\right)$, is convex on $(0, \infty)$. 
\end{proposition}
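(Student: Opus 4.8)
The plan is to factor out the positive constant $\varphi(0)$ and reduce the statement to the convexity of the single-variable map $d \mapsto \psi(d^{b})^{-\alpha}$ on $(0,\infty)$, which I would then obtain from elementary composition rules for convexity rather than from a brute-force second derivative. Writing $f_{S}(d) = \varphi(0)\,\psi(d^{b})^{-\alpha}$ (exactly as in the proof of Proposition~\ref{prop2}), I note that $\varphi(0) > 0$ because $\varphi$ is completely monotone and strictly positive, so multiplication by $\varphi(0)$ preserves convexity and it suffices to treat $w(d) := \psi(d^{b})$ together with the outer power.

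First I would record the two structural ingredients. Because $b \in (0,1]$, the inner map $h(d) := d^{b}$ is increasing and concave on $(0,\infty)$, since $h''(d) = b(b-1)d^{b-2} \leq 0$. Because $\psi$ is a positive Bernstein function, its derivative $\psi'$ is completely monotone, hence nonnegative with $\psi'' \leq 0$; in particular $\psi$ is smooth on $(0,\infty)$, increasing, and concave. Composing an increasing concave outer function with a concave inner function yields a concave function, so $w = \psi \circ h$ is positive, increasing, concave, and smooth on $(0,\infty)$.

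Next I would invoke the composition rule that a convex nonincreasing function of a concave function is convex. Here the outer map $\phi(x) := x^{-\alpha}$ is, for every $\alpha > 0$ (and in particular for $\alpha \geq 1/2$), convex and strictly decreasing on $(0,\infty)$, since $\phi''(x) = \alpha(\alpha+1)x^{-\alpha-2} \geq 0$ and $\phi'(x) = -\alpha x^{-\alpha-1} \leq 0$. Applying the rule to $\phi$ and the concave $w$ shows that $\phi \circ w = \psi(d^{b})^{-\alpha}$ is convex on $(0,\infty)$, and reinstating the positive factor $\varphi(0)$ finishes the argument. Equivalently, one may check the sign of $f_{S}''$ directly by the chain rule: the term $\phi''(w)\,(w')^{2}$ is nonnegative, while the term $\phi'(w)\,w''$ is a product of the nonpositive $\phi'(w)$ and the nonpositive $w''$ and is therefore nonnegative, so $f_{S}'' \geq 0$.

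The computations themselves are harmless; the two points that require care are deducing concavity of $\psi$ from the Bernstein hypothesis (that $\psi'$ completely monotone forces $\psi'' \leq 0$) and correctly chaining the two monotone-composition facts, so that the concavity of $h$ survives the first composition and is then turned into convexity by the decreasing power $\phi$. I would therefore treat the precise statement and bookkeeping of the composition rules---especially tracking which functions are increasing versus decreasing---as the main obstacle, since the sign of $f_{S}''$ hinges entirely on getting those monotonicities right.
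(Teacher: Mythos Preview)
Your argument is correct. The paper establishes the same conclusion by a direct brute-force expansion: it writes out $f_{S}''(d)$ via the chain and product rules as a sum of three explicit terms and checks that each carries the right sign using $\psi > 0$, $\psi' \geq 0$, $\psi'' \leq 0$, and $b \leq 1$. Your route is genuinely different in presentation: you first argue that $w(d) = \psi(d^{b})$ is concave (as the composition of the increasing concave $\psi$ with the concave $d \mapsto d^{b}$), and then invoke the composition rule that a convex nonincreasing outer map $x \mapsto x^{-\alpha}$ turns a concave input into a convex output. This buys you a cleaner, computation-free proof that isolates the structural reason for convexity, whereas the paper's direct calculation, while equivalent, obscures this decomposition inside a three-term product-rule expansion. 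Your closing remark about the two-term chain-rule identity $f_{S}'' = \phi''(w)(w')^{2} + \phi'(w)w''$ is in fact a tidier version of the paper's own computation, with the three product-rule terms absorbed into $w''$.
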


\begin{proof}
	By direct calculation, the second-order derivative of the spatial marginal is
	\begin{align*}
		f''_{S}(d) &= -(\alpha b \varphi(0))\left[  -(\alpha + 1) b \left(d^{b-1} \right)^{2} \psi\left(d^{b} \right)^{-\alpha - 2}\left(\psi'\left(d^{b} \right) \right)^{2} + \right. \\
		& \quad \quad \quad \qquad \qquad \left. b \left(d^{b-1} \right)^{2}\psi'\left(d^{b} \right)^{-\alpha - 1} \psi''\left(d^{b} \right) + (b-1)\left(d^{b-1} \right)^{2}
		\psi'\left(d^{b} \right) ^{-\alpha - 1}\psi'\left(d^{b} \right) \right],
	\end{align*}
	where each term in the brackets is non-positive for $d \in (0, \infty)$.
\end{proof}

Note that the temporal marginal function $f_{T}$ does not share the convexity property in general. Justified by Proposition~\ref{prop2}, space-time covariance functions constructed by Theorem~\ref{thm2}(c) satisfy the physical law~\citep{tobler1970} which says observations that are closer in space and time have higher correlation. The asymptotic behavior of the model, e.g. $\lim_{d \rightarrow \infty}G_{\alpha}\left(d^{b}, u^{2a} \right)$ for fixed $u$ and $\lim_{u \rightarrow \infty}G_{\alpha}\left(d^{b}, u^{2a} \right)$ for fixed $d$, depends on the asymptotic behavior of $\psi$ and $\varphi$, respectively, thus does not present a unified conclusion, in general.

We list a few examples of completely monotone functions and positive Bernstein functions in Table~\ref{tb:fnc}, which can be found in~\citet{gneiting2002}, ~\citet{mill2001},  and~\citet[pp.~15-45]{berg2008}.

\begin{table}[ht]
	\caption{Examples of Completely Monotone Functions $\varphi(t)$ and Positive Bernstein Functions $\psi(t)$, $t \geq 0$.}\label{tb:fnc}
	\begin{adjustbox}{width=\textwidth}	
		\begin{tabular}{ll|ll}
			\hline
			\hline
			Function & Parameters & Function & Parameters\\
			\hline
			$\varphi(t) = \exp(-c t^{\nu})$ & $c > 0, 0 < \nu \leq 1$ & 
			$\psi(t) = (\kappa t^{\lambda} + 1)^{\beta}$ & $\kappa > 0, 0 \leq \beta \leq 1, 0 < \lambda \leq 1$\\
			$\varphi(t) = \exp\left(ct^{\nu} \right)$ & $c > 0, \nu < 0$ 
			& $\psi(t) = \frac{\log(\kappa t^{\lambda} + \beta)}{\log(\beta)}$ & $\kappa > 0, \beta > 1, 0 < \lambda \leq 1$ \\
			$\varphi(t) = \left(\frac{2}{\exp(ct^{1/2}) + \exp(-ct^{1/2})} \right)^{\nu}$ & $c > 0, \nu > 0$  & $\psi(t) = t^{\lambda} + \beta$ & $0 < \lambda \leq 1, \beta > 0$  \\
			$\varphi(t) = \left(1 + ct^{\gamma} \right)^{-\nu}$ &  $c > 0, \nu > 0, 0 < \gamma \leq 1$  & $\psi(t) = \beta - \exp(-\kappa t)$ & $\kappa > 0, \beta > 1$ \\
			\hline
		\end{tabular}
	\end{adjustbox}
\end{table}

More completely monotone functions can be built by constructive tools, such as the property that the class of functions is closed under addition and multiplication~\citep{mill2001}.  We show the application of Theorem~\ref{thm2} by a couple of examples.
\vspace{4mm}

\noindent \emph{Example 1.} Consider the completely monotone function and the Bernstein function from the first row of Table~\ref{tb:fnc}. To avoid the model being overly complicated, assume $\alpha = \frac{1}{2}$, $a = 1$ and write the geodesic distance between sites as $d$ and the time lag as $u$ throughout. The space-time covariance function $C_{0}$ given by Theorem~\ref{thm2}(c) may be written as follows:
\[
C_{0}(d; u) = \frac{1}{\left(\kappa d^{b \lambda} + 1  \right)^{\beta /2}}\exp\left(-c\left(\frac{u^{2}}{\left(\kappa d ^{b \lambda} + 1 \right)^{\beta}} \right)^{\nu} \right),
\]
where $c > 0, 0 < \nu \leq 1, \kappa > 0, 0 \leq \beta \leq 1, 0 < \lambda \leq 1$ and $0 < b\leq 1$. Since $b$ and $\lambda$ appear only as the product $b\lambda$, and both have the same support, i.e. $(0, 1]$, henceforth to avoid an identification issue we use $b$ instead of $b\lambda$. 

Imitating Example 1 from~\citet{gneiting2002}, consider the pure spatial covariance function $C_{S}$~\citep{anderes2020} on $(\mathcal{G}, d_{R, \mathcal{G}})$:
\[
C_{S}(d) = (\kappa d^{b} + 1)^{-\delta}, 
\]
where $\kappa > 0, 0 < b \leq 1$ and $\delta \geq 0$. By the Schur product theorem~\citep{schur1911}, it follows that the product of $C_{0}$ and $C_{S}$ also defines a valid space-time covariance function on $\mathcal{G} \times \mathbb{R}$. After reparameterization (i.e. let $\tau = \frac{\beta}{2} + \delta$), we have
\begin{equation}\label{M1}
	C(d; u) = \frac{1}{\left(\kappa d^{b} + 1  \right)^{\tau}}\exp\left(-c\left(\frac{u^{2}}{\left(\kappa d ^{b} + 1 \right)^{\beta}} \right)^{\nu} \right),  
\end{equation}
where $c > 0, 0 < \nu \leq 1, \kappa > 0, 0 \leq \beta \leq 1, \tau \geq \frac{\beta}{2}$ and $0 < b \leq 1$. We call $\beta$ the space-time interaction parameter since when $\beta = 0$, the covariance function becomes separable. The spatial and temporal marginals, as well as the covariance function itself, all decay to zero as $d \rightarrow \infty$ and/or $u \rightarrow \infty$, which indicates the same variability in the spatial and temporal components in the sense discussed by~\citet{de2013} (visualizations of marginal functions, along with the covariance surface can be found in Suppplement B). We will come back to the model given by (\ref{M1}) later. 

\vspace{4mm}

\noindent \emph{Example 2.} Let $\varphi(t) = \left(\frac{2}{\exp(ct^{1/2}) + \exp(-ct^{1/2})} \right)^{\nu}$ and $\psi(t) = t^{\lambda} + 1$, where $t \geq 0$ with parameters $c > 0, \nu > 0$, and $0 < \lambda \leq 1$. Then Theorem~\ref{thm2}(c) gives another valid space-time covariance function:
\begin{equation}\label{M2}
	C(d; u) = \frac{2^{\nu}}{\left(d^{b\lambda} + 1 \right)^{\alpha}} \left\{\exp \left(c\frac{u^{a}}{\left(d^{b\lambda} + 1 \right)^{1/2}}\right) + 
	\exp \left( - c\frac{u^{a}}{\left(d^{b\lambda} + 1 \right)^{1/2}} \right) \right\}^{-\nu}, 
\end{equation}
where $0 < a \leq 1, \alpha \geq 1/2, 0 < b \leq 1, c > 0, \nu > 0$, and $ 0 < \lambda \leq 1$. Again, $\lambda$ only appears as a multiplier of $b$ and both have the same support, i.e. $(0, 1]$, so we will drop $\lambda$.  The asymptotic behaviors of (\ref{M2}), and its corresponding marginals,  are the same as in \emph{Example 1}.

\subsection{$l_{1}$ Embedding of Euclidean Trees}\label{sec:3.2}
Now, we focus on a subclass of generalized linear networks $\mathcal{G}$: the Euclidean trees $\mathcal{T}$. In addition to the square-root embedding result which holds for any $\mathcal{G}$,~\citet{anderes2020} provides another space embedding result for $\mathcal{T}$ only, which is restated below.

\begin{theorem}\label{thm3}
	\textbf{($l_{1}$ embedding, Anderes et al.)} Let $\mathcal{T}$ be a Euclidean tree with $m$ leaves, where $m \geq 3$. Then $(\mathcal{T}, d_{\cdot, \mathcal{T}})$ is isometrically embeddable into $(\mathbb{R}^{n}, \rho_{1})$, where $n = \ceil[\big]{\frac{m}{2}}$ and $\rho_{1}$ is the $l_{1}$ norm,  such that there exists a mapping $i: \mathcal{T} \rightarrow \mathbb{R}^{n}$ satisfying:
	\[
	d_{\cdot, \mathcal{T}}(\bm{s}_{1}, \bm{s}_{2}) = \rho_{1}(i(\bm{s}_{1})- i(\bm{s}_{2})), 
	\]  
	for any $\bm{s}_{1}, \bm{s}_{2} \in \mathcal{T}$. 
\end{theorem}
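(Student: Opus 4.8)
The plan is to build the coordinate maps explicitly by a leaf-pairing scheme and to verify the isometry through a multiplicity-cancellation identity, with the genuine difficulty concentrated in choosing the pairing so that the construction is exact in the prescribed dimension $n=\lceil m/2\rceil$.

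First I would record the structural facts I intend to use. Equipped with $d_{\cdot,\mathcal{T}}$, a Euclidean tree is an $\mathbb{R}$-tree: between any two points there is a unique arc (geodesic) $[\bm{s}_1,\bm{s}_2]$, the intersection of two arcs is again an arc (so in particular connected), and every arc $\gamma$ is gated, whence the nearest-point projection $\pi_\gamma:\mathcal{T}\to\gamma$ is well defined and $1$-Lipschitz with $[\bm{s}_1,\bm{s}_2]\cap\gamma=[\pi_\gamma(\bm{s}_1),\pi_\gamma(\bm{s}_2)]$. Next I fix any root, list the leaves $\ell_1,\dots,\ell_m$ in depth-first (Euler-tour) order, and isolate the combinatorial fact that the leaves lying in the subtree cut off by any single edge form a \emph{contiguous} block in this cyclic order.

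For the construction, I pair the leaves antipodally by $\ell_i\leftrightarrow\ell_{i+\lceil m/2\rceil}$ (with the obvious modification when $m$ is odd, where one leaf is allowed to appear in two pairs so that the number of pairs is exactly $\lceil m/2\rceil$), and let $\gamma_k$ be the arc joining the $k$-th pair. For each edge $e$ I set $n(e)=|\{k:e\subset\gamma_k\}|$ and define the $k$-th coordinate $f_k$ to be the projection onto $\gamma_k$ \emph{reweighted} so that its slope along an edge $e\subset\gamma_k$ is $1/n(e)$ rather than $1$; that is, $f_k$ is the continuous piecewise-linear function, constant off $\gamma_k$, increasing at rate $1/n(e)$ in the fixed orientation of $\gamma_k$ on each $e\subset\gamma_k$. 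Setting $i=(f_1,\dots,f_n)$, the claim is $\rho_1(i(\bm{s}_1)-i(\bm{s}_2))=d_{\cdot,\mathcal{T}}(\bm{s}_1,\bm{s}_2)$. Since $[\bm{s}_1,\bm{s}_2]\cap\gamma_k$ is a single subarc traversed in one direction and all reweighting factors are positive, each $f_k$ is monotone along $[\bm{s}_1,\bm{s}_2]$, so $|f_k(\bm{s}_1)-f_k(\bm{s}_2)|=\int_{[\bm{s}_1,\bm{s}_2]\cap\gamma_k}\mathrm{d}s/n(s)$; summing over $k$ pits the count $|\{k:s\in\gamma_k\}|=n(s)$ against the weight $1/n(s)$, leaving $\int_{[\bm{s}_1,\bm{s}_2]}\mathrm{d}s=d_{\cdot,\mathcal{T}}(\bm{s}_1,\bm{s}_2)$. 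Note that this reweighting is what makes overlapping arcs harmless; a naive edge-disjoint path cover need not have size $\lceil m/2\rceil$ (a caterpillar with four leaves already forces three disjoint paths but embeds in $\mathbb{R}^{2}$), so the fractional slopes are essential rather than cosmetic.

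The hard part is that the cancellation requires $n(s)\ge 1$ at every point of every arc, i.e. \emph{every edge must be crossed by at least one pairing arc}, and this is exactly where the choice of pairing is forced. The main obstacle is therefore the coverage lemma: the antipodal shift separates the two leaf-sides of each edge. Granting the contiguity fact, the leaves on one side of an edge form a proper contiguous block, and a block crossed by no pair would have to be invariant under the cyclic shift by $\lceil m/2\rceil$; since no proper contiguous block is shift-invariant, every edge is crossed and $n(e)\ge 1$ throughout. With this lemma in hand, the remaining checks (the per-edge normalization $\sum_k|c^{(e)}_k|=1$, the per-coordinate monotonicity along arbitrary arcs, and the cancellation) are routine, and the conclusion holds for all $\bm{s}_1,\bm{s}_2\in\mathcal{T}$ — not merely leaves — because the argument uses only the arc $[\bm{s}_1,\bm{s}_2]$ and the projections, which are defined at every point. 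Finally, by Proposition~\ref{prop1} the common value $d_{\cdot,\mathcal{T}}=d_{R,\mathcal{T}}=d_{G,\mathcal{T}}$ is the metric realized, so the embedding serves both distances simultaneously.
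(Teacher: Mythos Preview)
The paper does not actually prove this statement: it records the theorem and remarks only that the result ``comes directly from the proof pertaining to Theorem~4 in \citet{anderes2020},'' deferring entirely to that reference. Your proposal therefore is not competing with a proof in the paper but supplying one where the paper gives none. The leaf-pairing construction with edge-wise reweighted slopes is a correct and rather elegant route to the sharp dimension $\lceil m/2\rceil$; the monotonicity-plus-cancellation identity you describe is exactly right, and your caterpillar remark correctly pinpoints why naive edge-disjoint path covers cannot deliver the stated dimension.

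The one step that needs tightening is the coverage lemma for odd $m$. When $m=2n$ is even, your $n$ antipodal pairs are precisely the orbits of the cyclic shift by $n$, so ``uncrossed implies shift-invariant'' is immediate and no proper contiguous block survives. When $m=2n-1$ is odd and one leaf sits in two pairs, the pairs no longer coincide with orbits of the shift by $n$ on $\mathbb{Z}/m\mathbb{Z}$ (that shift is a single $m$-cycle, with only trivial invariant sets), and being uncrossed only forces $B$ to be a union of the equivalence classes the $n$ pairs generate --- one triple $\{1,n,n{+}1\}$ together with $n{-}2$ doubletons $\{j,j{+}n\}$. Your sentence ``would have to be invariant under the cyclic shift by $\lceil m/2\rceil$'' is therefore not literally correct in this case. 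The conclusion you want is still true, and the cleanest repair is to make the leaf duplication explicit rather than leave it as an ``obvious modification'': insert a copy $\ell_1'$ of the repeated leaf immediately after $\ell_1$ in the DFS order, obtaining $m{+}1=2n$ leaves; since $(m{+}1)/2=\lceil m/2\rceil$, the dimension is unchanged, the induced pairs are exactly the ones you already wrote down, and the block separated by any edge remains contiguous in the augmented cyclic order, so the even-case shift-invariance contradiction now fires verbatim. With this small adjustment the argument is complete.
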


The $l_{1}$ embedding result comes directly from the proof pertaining to Theorem 4 in~\citet{anderes2020}. Meanwhile, the positive definite functions on $(\mathbb{R}^{n}, d_{1})$ are essentially the same as $\alpha-$symmetric (here $\alpha = 1$) characteristic functions in $\mathbb{R}^{n}$ by Bochner's theorem, which have been extensively studied by~\citet{cambanis1983}, ~\citet{gneiting1998}, and~\citet{zastavnyi2000} and will play a fundamental role in constructing space-time covariance functions on $\mathcal{T} \times \mathbb{R}$. Before we dive into our second main contribution, we give the definition of \emph{linear} isometric embedding due to~\citet{zastavnyi2000}. The term \emph{linear} is added to distinguish from Definition~\ref{def2}. 

\begin{definition}\label{def3}
	Suppose that $\mathcal{L}_{i}$ is a linear space, and a function $\rho_{i}: \mathcal{L}_{i} \rightarrow [0, \infty)$ exists, that satisfies $\rho_{i}(c\bm{x}) = |c|\rho_{i}(\bm{x})$, for any scalar $c$ and $\bm{x} \in \mathcal{L}_{i}$, $i = 1, 2$. The pair $(\mathcal{L}_{1}, \rho_{1})$ is said to be \emph{linearly} isometrically embeddable in $(\mathcal{L}_{2}, \rho_{2})$ if there is a linear operator $A: \mathcal{L}_{1} \rightarrow \mathcal{L}_{2}$ such that $\rho_{1}(\bm{x}) = \rho_{2}(A\bm{x})$ for all $\bm{x} \in \mathcal{L}_{1}$. If either of $(\mathcal{L}_{1}, \rho_{1})$ and $(\mathcal{L}_{2}, \rho_{2})$ is linearly isometrically embeddable in the other, we call  these spaces \emph{linearly} isometric. 
\end{definition}

Being slightly different from (\ref{pd2}), let $\Phi(\mathbb{R}^{n}, \rho)$ denote the class of functions such that for any $f \in \Phi(\mathbb{R}^{n}, \rho)$,
\[
\sum_{i=1}^{N}\sum_{j=1}^{N}a_{i}a_{j}f(\rho(\bm{x}_{i} - \bm{x}_{j})) \geq 0,
\]
for any finite collection $\{a_{i}\}_{i=1}^{N} \subset \mathbb{R}$ and $\{\bm{x}_{i}\}_{i=1}^{N} \subset \mathbb{R}^{n}$. For any such $f$ and $\rho$, we say that $f \circ \rho$ is positive definite on $\mathbb{R}^{n}$. 

\begin{lemma}\label{lemma1}
	Consider $(\mathbb{R}^{n}, \rho_{1})$ and $(\mathbb{R}^{n}, \rho_{2})$, where $\rho_{1}$ is the $l_{1}$ norm and $\rho_{2}(\bm{x}) = \sum_{i =1}^{n} \frac{1}{c_{i}}|x_{i}|$ with  $\{c_{i}\}_{i=1}^{n}$ being fixed positive scalars, for $\bm{x} \in \mathbb{R}^{n}$. Then we have $\Phi(\mathbb{R}^{n}, \rho_{1}) = \Phi(\mathbb{R}^{n}, \rho_{2})$.
\end{lemma}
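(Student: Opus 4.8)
The plan is to exhibit an explicit invertible linear operator that makes $(\mathbb{R}^{n}, \rho_{2})$ and $(\mathbb{R}^{n}, \rho_{1})$ linearly isometric in the sense of Definition~\ref{def3}, and then observe that the positive-definiteness condition defining $\Phi(\mathbb{R}^{n}, \rho)$ is invariant under the corresponding relabelling of points. Concretely, let $A: \mathbb{R}^{n} \rightarrow \mathbb{R}^{n}$ be the diagonal operator $A\bm{x} = (x_{1}/c_{1}, \ldots, x_{n}/c_{n})'$. Since each $c_{i} > 0$, a one-line computation gives $\rho_{1}(A\bm{x}) = \sum_{i=1}^{n} |x_{i}|/c_{i} = \rho_{2}(\bm{x})$ for every $\bm{x} \in \mathbb{R}^{n}$, so $A$ realizes $(\mathbb{R}^{n}, \rho_{2})$ as linearly isometrically embeddable in $(\mathbb{R}^{n}, \rho_{1})$; moreover $A$ is invertible with $A^{-1}\bm{x} = (c_{1}x_{1}, \ldots, c_{n}x_{n})'$, so the two spaces are in fact \emph{linearly} isometric.

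First I would prove the inclusion $\Phi(\mathbb{R}^{n}, \rho_{1}) \subseteq \Phi(\mathbb{R}^{n}, \rho_{2})$. Fix $f \in \Phi(\mathbb{R}^{n}, \rho_{1})$ together with arbitrary finite collections $\{a_{i}\}_{i=1}^{N} \subset \mathbb{R}$ and $\{\bm{x}_{i}\}_{i=1}^{N} \subset \mathbb{R}^{n}$. Setting $\bm{y}_{i} := A\bm{x}_{i}$ and using linearity of $A$ together with the identity above,
\[
\sum_{i=1}^{N} \sum_{j=1}^{N} a_{i}a_{j} f\big(\rho_{2}(\bm{x}_{i} - \bm{x}_{j})\big) = \sum_{i=1}^{N} \sum_{j=1}^{N} a_{i}a_{j} f\big(\rho_{1}(\bm{y}_{i} - \bm{y}_{j})\big) \geq 0,
\]
where the inequality is the positive-definiteness of $f \circ \rho_{1}$ applied to the collection $\{\bm{y}_{i}\}$. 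Since the collections were arbitrary, $f \in \Phi(\mathbb{R}^{n}, \rho_{2})$.

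The reverse inclusion is identical after replacing $A$ by $A^{-1}$: given $f \in \Phi(\mathbb{R}^{n}, \rho_{2})$ and a configuration $\{\bm{x}_{i}\}$, I would set $\bm{y}_{i} := A^{-1}\bm{x}_{i}$ and run the same computation, now invoking $\rho_{2}(A^{-1}\bm{x}) = \rho_{1}(\bm{x})$. Combining the two inclusions yields $\Phi(\mathbb{R}^{n}, \rho_{1}) = \Phi(\mathbb{R}^{n}, \rho_{2})$. There is no genuine analytic obstacle here; the only point that must be checked carefully is that $A$ is a bijection of $\mathbb{R}^{n}$ --- guaranteed by the strict positivity of the $c_{i}$ --- so that as $\{\bm{x}_{i}\}$ ranges over all finite configurations so does $\{\bm{y}_{i}\}$, which is exactly what lets a single isometry deliver both inclusions rather than just one.
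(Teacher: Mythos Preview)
Your proof is correct and follows essentially the same approach as the paper: both arguments exhibit a diagonal linear isomorphism realizing the linear isometry between $(\mathbb{R}^{n}, \rho_{1})$ and $(\mathbb{R}^{n}, \rho_{2})$. The only cosmetic difference is that the paper takes $A = \operatorname{diag}(c_{1}, \ldots, c_{n})$ (so that $\rho_{1}(\bm{x}) = \rho_{2}(A\bm{x})$) and then invokes Lemma~2(2) of \citet{zastavnyi2000} for the conclusion, whereas you take the inverse diagonal matrix and spell out directly the relabelling argument that Zastavnyi's lemma encapsulates.
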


\begin{proof} 
	Let $A$ be the diagonal matrix with elements $c_{1}, \cdots, c_{n}$ on the main diagonal. Then for all $\bm{x} \in \mathbb{R}^{n}$, we have $\rho_{1}(\bm{x}) = \sum_{i =1}^{n}|x_{i}|  = \sum_{i =1}^{n}\frac{1}{c_{i}}|c_{i}x_{i}| = \rho_{2}(A\bm{x})$. By Definition~\ref{def3}, $(\mathbb{R}^{n}, \rho_{1})$ and $(\mathbb{R}^{n}, \rho_{2})$ are linearly isometric. Lemma~\ref{lemma1} then follows from Lemma 2(2) in~\citet{zastavnyi2000}.
\end{proof}

Let $\bm{\theta}$ denote the vector of covariance parameters and $\Theta_{n}$ the parameter space with subscript $n$ emphasizing that the dependence relates to $\mathbb{R}^{n}$. The theorem below gives a general framework for constructing valid space-time covariance functions on $\mathcal{T} \times \mathbb{R}$ by $l_{1}$ embedding.

\begin{theorem}\label{thm4}
	\textbf{(metric models)} Suppose that $\mathcal{T}$ is a Euclidean tree with $m$ leaves, where $m \geq 3$. Define $n = \ceil[\big]{m/2}$. If $f_{\bm{\theta}}\in \Phi(\mathbb{R}^{n+1}, \rho_{1})$, where $\bm{\theta} \in \Theta_{n + 1}$,  then $C(d_{\cdot, \mathcal{T}}; u) := f_{\bm{\theta}}\left(\frac{d_{\cdot, \mathcal{T}}}{\alpha} + \frac{u}{ \beta}\right)$, where $\alpha, \beta > 0$ and $\bm{\theta} \in \Theta_{n+1}$, is a valid covariance function on $\mathcal{T} \times \mathbb{R}$. 
\end{theorem}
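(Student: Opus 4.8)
The symmetry of $C$ is immediate, since both $d_{\cdot, \mathcal{T}}(\bm{s}_1, \bm{s}_2)$ and $|t_1 - t_2|$ are symmetric in their arguments; so the entire task is to verify the positive definiteness condition (\ref{pd1}). The plan is to realize $\mathcal{T} \times \mathbb{R}$ as a subset of $\mathbb{R}^{n+1}$ on which the argument $d_{\cdot, \mathcal{T}}/\alpha + u/\beta$ coincides with a weighted $l_1$ norm of coordinate differences, and then to transfer positive definiteness of $f_{\bm{\theta}}$ from $(\mathbb{R}^{n+1}, \rho_1)$ to that weighted norm via Lemma~\ref{lemma1}.

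Concretely, I would first invoke Theorem~\ref{thm3} to obtain an isometric embedding $i: \mathcal{T} \to \mathbb{R}^n$ with $d_{\cdot, \mathcal{T}}(\bm{s}_1, \bm{s}_2) = \rho_1(i(\bm{s}_1) - i(\bm{s}_2))$, writing $i = (i_1, \ldots, i_n)$. I would then append the time coordinate to define a map $j: \mathcal{T} \times \mathbb{R} \to \mathbb{R}^{n+1}$ by $j(\bm{s}, t) = (i_1(\bm{s}), \ldots, i_n(\bm{s}), t)$, and introduce the weighted norm $\rho_2(\bm{x}) = \sum_{k=1}^{n} \frac{1}{\alpha}|x_k| + \frac{1}{\beta}|x_{n+1}|$ on $\mathbb{R}^{n+1}$, which is exactly of the form appearing in Lemma~\ref{lemma1} with positive scalars $c_1 = \cdots = c_n = \alpha$ and $c_{n+1} = \beta$ (here the hypothesis $\alpha, \beta > 0$ is used). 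A direct computation then yields
\[
\rho_2\bigl(j(\bm{s}_1, t_1) - j(\bm{s}_2, t_2)\bigr) = \frac{d_{\cdot, \mathcal{T}}(\bm{s}_1, \bm{s}_2)}{\alpha} + \frac{|t_1 - t_2|}{\beta},
\]
so that $C(\bm{s}_1, \bm{s}_2; t_1, t_2) = f_{\bm{\theta}}\bigl(\rho_2(j(\bm{s}_1, t_1) - j(\bm{s}_2, t_2))\bigr)$.

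To finish, I would apply Lemma~\ref{lemma1} in dimension $n+1$ to conclude $\Phi(\mathbb{R}^{n+1}, \rho_1) = \Phi(\mathbb{R}^{n+1}, \rho_2)$, so that the hypothesis $f_{\bm{\theta}} \in \Phi(\mathbb{R}^{n+1}, \rho_1)$ upgrades to $f_{\bm{\theta}} \in \Phi(\mathbb{R}^{n+1}, \rho_2)$. Then for any finite collections $\{a_i\}_{i=1}^N \subset \mathbb{R}$ and $\{(\bm{s}_i, t_i)\}_{i=1}^N \subset \mathcal{T} \times \mathbb{R}$, setting $\bm{x}_i = j(\bm{s}_i, t_i) \in \mathbb{R}^{n+1}$ turns the quadratic form $\sum_{i,j} a_i a_j C(\bm{s}_i, \bm{s}_j; t_i, t_j)$ into $\sum_{i,j} a_i a_j f_{\bm{\theta}}(\rho_2(\bm{x}_i - \bm{x}_j))$, which is nonnegative by membership in $\Phi(\mathbb{R}^{n+1}, \rho_2)$. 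The fact that the points $\bm{x}_i$ range only over the image $j(\mathcal{T} \times \mathbb{R})$ is harmless, since the defining inequality of $\Phi$ holds for \emph{every} finite configuration in $\mathbb{R}^{n+1}$.

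The crux — and the only place requiring care — is bookkeeping the dimension and the weights: the tree embeds into $\mathbb{R}^n$, but the product space needs the extra temporal coordinate, forcing the ambient dimension $n+1$ and the hypothesis $f_{\bm{\theta}} \in \Phi(\mathbb{R}^{n+1}, \rho_1)$ rather than $\Phi(\mathbb{R}^{n}, \rho_1)$; and the two distinct scalings $\alpha$ (spatial) and $\beta$ (temporal) must be matched precisely to the weights $c_k$ in Lemma~\ref{lemma1}. Once the embedding $j$ and the weighted norm $\rho_2$ are set up to reproduce $d_{\cdot, \mathcal{T}}/\alpha + u/\beta$ exactly, everything else is a mechanical transfer through Lemma~\ref{lemma1}.
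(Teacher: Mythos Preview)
Your proposal is correct and follows essentially the same route as the paper: both arguments apply Lemma~\ref{lemma1} (with $c_1=\cdots=c_n=\alpha$, $c_{n+1}=\beta$) to pass from $\Phi(\mathbb{R}^{n+1},\rho_1)$ to $\Phi(\mathbb{R}^{n+1},\rho_2)$, then invoke the $l_1$ embedding of Theorem~\ref{thm3} to identify $d_{\cdot,\mathcal{T}}/\alpha + u/\beta$ with $\rho_2$ of coordinate differences in $\mathbb{R}^{n+1}$. Your explicit definition of the map $j(\bm{s},t)=(i(\bm{s}),t)$ makes the bookkeeping slightly more transparent than the paper's version, but the substance is identical.
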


\begin{proof}
	Let $\alpha = c_{1} = \cdots = c_{n}$ and $\beta = c_{n+1}$ in $\rho_{2}$, from Lemma~\ref{lemma1}. It follows that if $f_{\bm{\theta}}(\rho_{1}(\bm{x})) = f_{\bm{\theta}}(\sum_{i=1}^{n+1}|x_{i}|)$ is positive definite on $\mathbb{R}^{n+1}$ for $\bm{\theta} \in \Theta_{n+1}$, then $f_{\bm{\theta}}\circ \rho_{2} = f_{\bm{\theta}}\left(\frac{\sum_{i=1}^{n}|x_{i}|}{\alpha} + \frac{|x_{n+1}|}{\beta} \right)$ is also positive definite on $\mathbb{R}^{n+1}$ given $\alpha, \beta > 0$,  in addition to $\bm{\theta} \in \Theta_{n+1}$ . In concert with Theorem~\ref{thm3}, there exists a mapping $i: \mathcal{T} \rightarrow \mathbb{R}^{n}$ such that 
	\begin{align*}
		\sum_{i = 1}^{N} \sum_{j = 1}^{N} a_{i}a_{j} C(d_{\cdot, \mathcal{T}}(\bm{s}_{i}, \bm{s}_{j}); |t_{i} - t_{j}|) &=  \sum_{i = 1}^{N} \sum_{j = 1}^{N} a_{i}a_{j} f_{\bm{\theta}}\left(\frac{d_{\cdot, \mathcal{T}}(\bm{s}_{i}, \bm{s}_{j})}{\alpha} + \frac{|t_{i} - t_{j}|}{ \beta}\right) \\
		&= \sum_{i = 1}^{N} \sum_{j = 1}^{N} a_{i}a_{j} f_{\bm{\theta}}\left(
		\frac{\rho_{1}(i(\bm{s}_{i}) - i(\bm{s}_{j}))}{\alpha} + \frac{|t_{i} - t_{j}|}{ \beta}
		\right) \\
		&= \sum_{i = 1}^{N} \sum_{j = 1}^{N} a_{i}a_{j} f_{\bm{\theta}}\left(
		\frac{\sum_{k=1}^{n} |i(\bm{s}_{i})_{k} - i(\bm{s}_{j})_{k}|}{\alpha} + \frac{|t_{i} - t_{j}|}{ \beta}
		\right) \\
		& \geq 0,
	\end{align*} 
	for any finite collection $\left\{a_{i}\right\}_{i=1}^{N} \subset \mathbb{R}$ and points $\left\{\bm{s}_{i}\right\}_{i=1}^{N} \subset \mathcal{T}$, where $\alpha, \beta > 0$ and $\bm{\theta} \in \Theta_{n+1}$. 
\end{proof}

The scaling parameters, $\alpha$ and $\beta$, make the spatial and temporal distances comparable. In the literature, $\frac{d_{\cdot, \mathcal{T}}}{\alpha} + \frac{u}{ \beta}$ has been called the space-time distance and the models given by Theorem~\ref{thm4} have been called metric models~\citep{de2013}. Combining Theorem~\ref{thm4} with the sufficient conditions of 1-symmetric characteristic functions given by~\citet{cambanis1983} and~\citet{gneiting1998}, we have the following corollaries.

\begin{corollary}\label{corollary1}
	Suppose that $\mathcal{T}$ is a Euclidean tree with $m$ leaves, where $m \geq 3$. The function $C(d_{\cdot, \mathcal{T}}; u): = C_{0}\left(\frac{d_{\cdot, \mathcal{T}}}{\alpha} + \frac{u}{\beta}\right)$, where $C_{0}: [0, \infty) \rightarrow \mathbb{R}$,  is a valid covariance function on $\mathcal{T} \times \mathbb{R}$, provided that any of the following conditions is true:
	
	\noindent \textbf{(a)} $C_{0}$ is continuous, bounded, and
	\[
	\int_{0}^{\infty} C_{0}(st)f_{\ceil[\big]{m/2} + 1}(t)dt, \qquad \qquad s \geq 0,
	\]
	is the Laplace transformation of a nonnegative random variable, where $f_{n}$ is the function 
	\[
	f_{n}(r) = \frac{2 \Gamma(n)}{\left(\Gamma(n/2) \right)^{2}} r^{n-1} (1 + r^{2})^{-n}, \quad r \geq 0.
	\]
	
	\noindent \textbf{(b)}  $C_{0}$ is given by
	\[
	C_{0}(s) = \int_{0}^{\infty}\omega_{\ceil[\big]{m/2} + 1}(st)d\mu(t),
	\]
	where $\mu$ is a finite positive measure on $[0, \infty)$ and $\omega_{n}$ is defined by
	\[
	\omega_{n}(t) = \frac{\Gamma(n/2)}{\sqrt{\pi} \Gamma((n-1)/2)} \int_{1}^{\infty} \Omega_{n}(\nu^{1/2}t)\nu^{-n/2}(\nu -1)^{(n-3)/2}d\nu 
	\]
	with $\Omega_{n}(t) = \Gamma(n/2)(2/t)^{(n-2)/2}J_{(n-2)/2}(t)$ and $J_{\nu}(t)$ denoting the Bessel function of the first kind with order $\nu$. 
	
	\noindent \textbf{(c)} $C_{0}$ is a continuous function such that $C_{0}(0) = 1$,  $C_{0}^{2\ceil[\big]{m/2}}$ is convex and $\lim_{t \rightarrow \infty}C_{0}(t) = 0$.  
\end{corollary}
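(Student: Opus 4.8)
The plan is to read Corollary~\ref{corollary1} as a direct specialization of Theorem~\ref{thm4} in which the parametrization is trivial: we have a single fixed radial profile $C_0$, so the parameter space $\Theta_{n+1}$ collapses to a point. By Theorem~\ref{thm4}, once I know that $C_0 \in \Phi(\mathbb{R}^{n+1}, \rho_1)$ with $n = \ceil[\big]{m/2}$ --- equivalently, that $C_0 \circ \rho_1$ is positive definite on $\mathbb{R}^{n+1}$ --- the conclusion that $C(d_{\cdot, \mathcal{T}}; u) = C_0\!\left(\frac{d_{\cdot, \mathcal{T}}}{\alpha} + \frac{u}{\beta}\right)$ is a valid covariance function on $\mathcal{T}\times\mathbb{R}$ follows at once. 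So the whole task reduces to checking that each of the hypotheses (a), (b), (c) forces $C_0$ into the class $\Phi(\mathbb{R}^{n+1}, \rho_1)$.

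The bridge I would make explicit is that, for continuous $C_0$, membership in $\Phi(\mathbb{R}^{n+1}, \rho_1)$ coincides with $C_0$ being a nonnegative multiple of a $1$-symmetric characteristic function in dimension $n+1$. Since $\rho_1$ is the $l_1$ norm, $C_0(\rho_1(\cdot))$ is a radial-in-$l_1$ function on $\mathbb{R}^{n+1}$, and by Bochner's theorem its positive definiteness is exactly the statement that, after normalization at the origin, it is the characteristic function of a $1$-symmetric law. This recasts the problem in the language of~\citet{cambanis1983} and~\citet{gneiting1998}, whose sufficient conditions are stated precisely for such functions in a fixed Euclidean dimension.

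With that identification in hand, each part is an invocation of a known criterion at the single dimension $n+1 = \ceil[\big]{m/2} + 1$. For (a) and (b) I would cite~\citet{cambanis1983}: condition (b) is their integral representation of $1$-symmetric characteristic functions through the radial kernels $\omega_{n+1}$ (built from the Bessel-type $\Omega_{n+1}$), so any $C_0$ of that form is automatically positive definite on $\mathbb{R}^{n+1}$, mixing over a finite positive measure $\mu$ preserving positive definiteness; condition (a) is their Laplace-transform sufficient condition, in which the weight $f_{n+1}$ is exactly the density displayed in the corollary, so requiring $s \mapsto \int_0^\infty C_0(st) f_{n+1}(t)\,dt$ to be a Laplace transform of a nonnegative variable secures the same membership. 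For (c) I would cite~\citet{gneiting1998}'s P\'olya-type criterion in dimension $n+1$: the exponent in the corollary is $2\ceil[\big]{m/2} = 2\big((n+1)-1\big)$, precisely the power of $C_0$ whose convexity that criterion demands in dimension $n+1$, together with continuity, $C_0(0)=1$, and decay to $0$ at infinity.

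The steps are therefore short, and the only genuine care needed --- the point I would treat as the main obstacle --- is the dimensional bookkeeping. Theorem~\ref{thm4} embeds $\mathcal{T}$ into $\mathbb{R}^n$ and appends the time axis to land in $\mathbb{R}^{n+1}$, so every criterion must be applied in dimension $n+1 = \ceil[\big]{m/2}+1$ rather than $n$; one must confirm that the subscripts $\ceil[\big]{m/2}+1$ in (a)--(b) and the exponent $2\ceil[\big]{m/2}$ in (c) are the correct instances of the cited theorems at that dimension. A secondary, harmless subtlety is the normalization gap between ``positive definite'' and ``characteristic function'': positive definiteness is preserved under multiplication by a positive constant, so imposing $C_0(0)=1$ in (c) and the finite-measure normalization in (a)--(b) costs nothing, and the conclusion of Theorem~\ref{thm4} then applies verbatim.
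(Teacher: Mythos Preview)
Your proposal is correct and follows essentially the same route as the paper: reduce to Theorem~\ref{thm4} by verifying $C_0 \in \Phi(\mathbb{R}^{n+1}, \rho_1)$ with $n = \ceil[\big]{m/2}$, then invoke the $1$-symmetric characteristic function criteria of \citet{cambanis1983} (their Theorems~3.3 and~3.1 for parts (a) and (b)) and the P\'olya-type criterion of \citet{gneiting1998} (their Theorem~3.2 for part (c)), each applied in dimension $n+1$. Your discussion of the dimensional bookkeeping and the normalization issue is more explicit than the paper's terse proof, but the underlying argument is identical.
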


\begin{proof}
	Let $n = \ceil[\big]{m/2}$. Given condition (a), Theorem 3.3 in~\citet{cambanis1983} shows that $C_{0} \circ \rho_{1}$ is positive definite on $\mathbb{R}^{n+1}$. It follows immediately from Theorem~\ref{thm4} that the space-time covariance function $C_{0}\left(\frac{d_{\cdot, \mathcal{T}}}{\alpha} + \frac{u}{\beta}\right)$ is positive definite on $\mathcal{T} \times \mathbb{R}$. Parts (b) and (c)  may be proved similarly using Theorem 3.1 in~\citet{cambanis1983} and Theorem 3.2 in~\citet{gneiting1998}. 
\end{proof}

Pure spatial results, such as Theorems 4 and 5 in~\citet{anderes2020}, are applications of the $l_{1}$ embedding technique to Theorem 3.1 in~\citet{cambanis1983} and Theorem 3.2 in~\citet{gneiting1998}, respectively. 

Although Corollary~\ref{corollary1} provides sufficient conditions for space-time covariance functions to be positive definite, some of them are hard to check. We thus end this section with a corollary giving an explicit example, by applying Theorem~\ref{thm4} to a  1-symmetric characteristic function. 

\vspace{4mm}

\noindent \emph{Example 3. (powered linear with sill models)} The parametric model given by the following corollary belongs to the family of metric models and has the powered linear with sill representation.

\begin{corollary}\label{corollary2}
	Suppose that $\mathcal{T}$ is a Euclidean tree with $m$ leaves, where $m \geq 3$. Then $C(d_{\cdot, \mathcal{T}}; u): = \left(1 - \left(\frac{d_{\cdot, \mathcal{T}}}{\alpha} + \frac{u}{\beta}\right)^{\nu}\right)_{+}^{\delta}$, where $\delta \geq 2\ceil[\big]{\frac{m}{2}} + 1$, $\alpha, \beta > 0$ and $\nu \in (0, 1]$, is a valid covariance function on $\mathcal{T} \times \mathbb{R}$.
\end{corollary}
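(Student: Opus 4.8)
The plan is to invoke Corollary~\ref{corollary1}(c) directly, with the choice $C_{0}(s) = (1 - s^{\nu})_{+}^{\delta}$, so that the proposed model is literally $C(d_{\cdot,\mathcal{T}};u) = C_{0}\!\left(\frac{d_{\cdot,\mathcal{T}}}{\alpha} + \frac{u}{\beta}\right)$, i.e.\ a metric model in the sense of Theorem~\ref{thm4}. It then suffices to verify that this $C_{0}$ satisfies the four hypotheses of Corollary~\ref{corollary1}(c). Three of them are immediate: $C_{0}$ is continuous on $[0,\infty)$ because $(1-s^{\nu})_{+}^{\delta}$ is a continuous function of $s$ whenever $\delta>0$; the normalization $C_{0}(0) = (1-0)^{\delta} = 1$ holds; and $\lim_{s\to\infty}C_{0}(s) = 0$ is trivial since in fact $C_{0}(s) = 0$ for every $s \ge 1$.

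The one substantive requirement is that $C_{0}^{\,2\ceil{m/2}}$ be convex on $[0,\infty)$. Writing $P = 2\ceil{m/2}\,\delta$, this is the map $s \mapsto (1-s^{\nu})_{+}^{P}$. I would establish convexity by the composition principle rather than a direct two-term second-derivative computation. Express the function as $f \circ k$, where $f(r) = (1-r)_{+}^{P}$ is the outer function and $k(s) = s^{\nu}$ the inner one. Because the hypothesis $\delta \ge 2\ceil{m/2}+1$ forces $P \ge 1$, the outer function $f$ is convex and non-increasing on $[0,\infty)$: on $[0,1)$ its second derivative $P(P-1)(1-r)^{P-2}$ is nonnegative, it is identically zero on $[1,\infty)$, and the two pieces join with matching one-sided derivatives since $P>1$. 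Because $\nu \in (0,1]$, the inner function $k(s) = s^{\nu}$ is concave and non-decreasing on $[0,\infty)$. The composition of a convex, non-increasing outer function with a concave inner function is convex, which yields the required convexity of $C_{0}^{\,2\ceil{m/2}}$.

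Having checked all four hypotheses, Corollary~\ref{corollary1}(c) gives that $C(d_{\cdot,\mathcal{T}};u)$ is positive definite on $\mathcal{T}\times\mathbb{R}$; symmetry is automatic from the dependence on $d_{\cdot,\mathcal{T}}$ and $|t_{1}-t_{2}|$ alone, so the function is a valid covariance function. I expect the convexity step to be the only real obstacle, and the care there lies in handling the truncation at the sill $s=1$ and, when $\nu<1$, the singular behavior of the derivatives of $s^{\nu}$ at $s=0$; the composition argument sidesteps both delicacies. I would also remark that this argument needs only the mild inequality $P \ge 1$, so the stated threshold $\delta \ge 2\ceil{m/2}+1$ is comfortably sufficient though far from tight, in keeping with the conservative Askey-type bounds familiar for powered-linear functions.
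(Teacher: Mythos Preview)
Your proof is correct but takes a different route from the paper's. The paper does not go through Corollary~\ref{corollary1}(c); instead it applies Theorem~\ref{thm4} directly, invoking \citet{zastavnyi2000} for the fact that $f(x) = (1-x^{\nu})_{+}^{\delta}$ composed with the $l_{1}$ norm is positive definite on $\mathbb{R}^{n}$ whenever $\delta \ge 2n-1$ and $\nu \in (0,1]$; taking $n = \ceil{m/2}+1$ then yields exactly the threshold $\delta \ge 2\ceil{m/2}+1$. Your approach instead verifies the P\'olya-type convexity hypothesis of Corollary~\ref{corollary1}(c) by the clean composition argument (convex non-increasing outer, concave inner), which is more self-contained since it does not appeal to an external positive-definiteness result. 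The trade-off is informational: your route shows that, under Corollary~\ref{corollary1}(c) alone, the merely needed inequality $P = 2\ceil{m/2}\,\delta \ge 1$ is dramatically weaker than the stated bound, whereas the paper's route explains where the number $2\ceil{m/2}+1$ actually comes from --- it is precisely Zastavnyi's condition $2n-1$, which is sharp for this family. So the paper's argument pins down the natural threshold, while yours gives an elementary verification at the cost of making the stated constant look arbitrary.
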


\begin{proof}
	Let $f(x):= (1 - x^{\nu})_{+}^{\delta}$, where $x \geq 0$. It is clear from~\citet{zastavnyi2000} that $f\circ \rho_{1}$ is positive definite on $\mathbb{R}^{n}$, given $\delta \geq 2n - 1$ and $\nu \in (0, 1]$, for any positive integer $n$. Therefore, $f\left(\frac{d_{\cdot, \mathcal{T}}}{\alpha} + \frac{u}{\beta} \right) = \left(1 - \left(\frac{d_{\cdot, \mathcal{T}}}{\alpha} + \frac{u}{\beta}\right)^{\nu}\right)_{+}^{\delta}$ with $\delta \geq 2\ceil[\big]{\frac{m}{2}} + 1, \nu \in (0, 1]$ and $\alpha, \beta > 0$ is positive definite on $\mathcal{T} \times \mathbb{R}$ by Theorem~\ref{thm4}. 
\end{proof}

The powered linear with sill model given by Corollary~\ref{corollary2} is continuous near the origin. Both marginals, as well as the covariance function itself, monotonically decay to zero. By direct calculation, one can show that both spatial and temporal marginal functions are convex near the origin. One feature of this model is that it has compact support, i.e. it reaches zero when the space-time distance is sufficiently large. This property is appealing for modeling large scale space-time datasets. However, its parameter space depends on the topology of the graph, i.e., the number of leaves $m$. For a fixed space-time distance, the more leaves in the tree, the smaller the dependence between observations. 

\section{Space-time Models on Directed Euclidean Trees}\label{sec:4}
Instead of working with heavily mathematically involved functions, the kernel convolution-based (or moving average) models tackle the problem from another perspective. According to \citet[pp.~526]{yaglom1987}, a large class of stationary covariance functions on the real line can be obtained by constructing a random process $\left\{Z(x): x \in \mathbb{R} \right\}$, which convolves a square-integrable kernel function $g(\cdot)$ over a white noise process $Y(x)$ defined on $\mathbb{R}^{1}$ as:
\[
Z(x) = \int_{-\infty}^{\infty} g(s - x) d Y(s), \qquad x, s \in \mathbb{R}.  
\]
When $Y(x)$ is Brownian motion, the induced covariance function is valid and can be shown to be
\[
Cov(Z(x), Z(x + h)) = C(x, x+h) = \int_{-\infty}^{\infty} g(x)g(x - h)dx, \qquad x, h \in \mathbb{R}. 
\]  
The kernel convolution-based approach allows considerable flexibility and can be generalized to nonstationary~\citep{fuentes2002}, space-time~\citep{rodrigues2010} and tree-like network~\citep{ver2010} settings. Details of the latter generalization are given below. 

\subsection{Tail-up and Tail-down Models}\label{sec:4.1}
The space-time covariance functions given in the previous section are isotropic, which might not always be an appropriate assumption due to the fact that some networks are directed in nature. For instance, in streams, flow direction is yet another important factor, in addition to shortest path length (geodesic distance), that researchers should take into consideration when modeling physical processes. Variables that move passively downstream,  e.g. chemical particles, and variables that may move upstream, e.g. fish and insects~\citep{ver2010} may need to be modeled differently. Especially for the former, we may want to allow the correlation between locations that do not share flow to be small or even zero. Based on the kernel convolution approach, \citet{ver2006} and \citet{ver2010} introduce the unilateral tail-up and tail-down covariance models on streams, which manage to handle these two scenarios differently. For detailed discussion of the models, we refer readers to~\citet{ver2010}. Here, we only give the most necessary background, which will later become essential components in our space-time covariance functions on tree-like networks.

The dendritic structure of streams guarantees that condition (I) in Definition~\ref{def1} holds. Since every tree-like network is planar, we follow the prescription of~\citet{anderes2020} by letting each edge set $e \in \mathcal{E}$ be the interior of the corresponding line segment in $\mathbb{R}^{2}$ and letting $\mathcal{V}$ be the set of endpoints of the line segments.  Moreover, let the bijection $\xi_{e}$ preserve the path-length parameterization of each line segment. Thus, conditions (I) - (IV) in Definition~\ref{def1} are satisfied and a stream equipped with stream distance, denoted by $(\mathcal{T}, d_{\cdot, \mathcal{T}})$, is a (directed) Euclidean tree. Note that models built in this section can apply to any directed Euclidean tree, which we call a stream for convenience. 

Depending on the flow direction, the tail-up and tail-down models also assume there exists a single most downstream location, which is called the outlet (see the right panel of Figure~\ref{fig:eg}). Let the index set of all stream segments be denoted by $A$, and let the index set of stream segments that are upstream of site $\bm{s}_{i} \in \mathcal{T}$, including the segment where $\bm{s}_{i}$ resides, be denoted by $U_{i} \subseteq A$.  We say two sites $\bm{s}_{i}$ and $\bm{s}_{j}$ are ``flow-connected" if they share water, i.e. if $U_{i} \cap U_{j} \neq \emptyset$, and are ``flow-unconnected" if the water at one location does not flow to the other, i.e. if $U_{i} \cap U_{j} = \emptyset$.  Equivalently, two sites are called flow-connected if and only if one is on the path of the other downstream to the outlet. The pure spatial tail-up and tail-down models are given below, where the unilateral kernel $g(x)$ is nonzero only when $x > 0$. 
\begin{itemize}
	\item Tail-up models: 
	\[
	C_{TU}(\bm{s}_{1}, \bm{s}_{2}) = \begin{cases}
	\pi_{1, 2}\int_{d}^{\infty}g(x)g(x - d) dx & \text{if }  \bm{s}_{1}, \bm{s}_{2} \text{ are flow-connected} \\
	0 & \text{if }  \bm{s}_{1}, \bm{s}_{2} \text{ are flow-unconnected},
	\end{cases} 
	\]
	where $d$ is the stream distance between sites $\bm{s}_{1}$ and $\bm{s}_{2}$ (i.e. $d_{\cdot, \mathcal{T}}(\bm{s}_{1}, \bm{s}_{2})$) and $\pi_{1, 2}$ is a weight defined as follows. Let $\Omega(x)$ be a positive additive function such that $\Omega(x)$ is constant within a stream segment, but is the sum of each segment's value when two segments join at a junction, following the flow direction. Then the weight $\pi_{1, 2} = \sqrt{\frac{\Omega(\bm{s}_{1})}{\Omega(\bm{s}_{2})}} \wedge \sqrt{\frac{\Omega(\bm{s}_{2})}{\Omega(\bm{s}_{1})}}$ ensures a constant variance of the process. In the literature, there exist different weighting schemes, see~\citet{cressie2006} and~\citet{ver2006}, and they have been proven equivalent~\citep{ver2010}. 
	
	\item Tail-down models:
	\begin{align*}
		C_{TD}(\bm{s}_{1}, \bm{s}_{2}) = \begin{cases}
			\int_{-\infty}^{-d} g(-x)g(-x - d) dx & \text{if }  \bm{s}_{1}, \bm{s}_{2} \text{ are flow-connected} \\
			\int_{-\infty}^{-a\vee b}g(-x)g(-x-|b-a|) dx & \text{if }  \bm{s}_{1}, \bm{s}_{2} \text{ are flow-unconnected}, 
		\end{cases}
	\end{align*}
	where $d$ has the same definition as in tail-up models and $a$, $b$ represent the distances from each site to the nearest junction downstream of which it shares flow with the other site. When $\bm{s}_{1}, \bm{s}_{2} \in \mathcal{T}$ are flow-unconnected, $d = a + b$.
\end{itemize}

Commonly used kernels on streams can be found in~\citet{ver2010}. Obviously, a nontrivial tail-up covariance function cannot be isotropic as the covariance is always zero when sites are flow-unconnected, while a tail-down model is a function of $a$ and $b$, in general. It has been shown by~\citet{ver2006} and~\citet{ver2010} that when the kernel is exponential, i.e. $g(x) = \theta_{1}\exp(-x/\theta_{2})$ for $x \geq 0$, $\theta_{1}, \theta_{2} > 0$, the tail-down model is a function of the geodesic distance $d_{\cdot, \mathcal{T}}$ alone, regardless of flow-connectedness. Before introducing our next main contribution in terms of space-time covariance functions, we prove that the exponential kernel is the one and only which makes the tail-down model depend on $d_{\cdot, \mathcal{T}}$ alone, or in other words, isotropic. 

\begin{theorem}\label{thm5}
	A tail-down model is isotropic, such that there exists a function $f_{TD}$, $C_{TD}(\bm{s}_{1}, \bm{s}_{2}) = f_{TD}(d_{\cdot, \mathcal{T}}(\bm{s}_{1}, \bm{s}_{2}))$ for any $\bm{s}_{1}, \bm{s}_{2} \in \mathcal{T}$, if and only if the kernel is exponential. 
\end{theorem}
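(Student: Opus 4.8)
The plan is to rewrite both branches of the tail-down covariance as self-convolutions of the kernel $g$, dispose of the (already known) sufficiency direction by direct evaluation, and then prove necessity by differentiating the flow-unconnected branch in two different ways to extract a functional equation that pins $g$ down as exponential.

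First I would simplify the defining integrals. Substituting $y=-x$ turns the flow-connected branch into $\int_d^\infty g(y)g(y-d)\,dy$, which manifestly depends only on the stream distance $d$; writing $a\ge b$ without loss of generality (the covariance is symmetric in the two sites, and the flow-unconnected formula is symmetric in $a,b$), the same substitution turns the flow-unconnected branch into $C_{TD}(a,b)=\int_a^\infty g(y)g(y-a+b)\,dy$, with $d=a+b$. Isotropy is thus equivalent to requiring that $C_{TD}(a,b)$ depend on $(a,b)$ only through $a+b$. Sufficiency is already known from~\citet{ver2010}, and I would record it in one line: substituting $g(x)=\theta_1\exp(-x/\theta_2)$ gives $\tfrac12\theta_1^2\theta_2\,e^{-(a+b)/\theta_2}$ for the flow-unconnected branch and $\tfrac12\theta_1^2\theta_2\,e^{-d/\theta_2}$ for the flow-connected branch, the same function of $d$.

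For necessity, assume $C_{TD}(a,b)=h(a+b)$ for a differentiable $h$ (differentiability transfers from $g$ through the convolution). Because $a$ appears both as the lower limit and inside the integrand while $b$ appears only inside the integrand, the Leibniz rule yields
\[
\partial_b C_{TD}=\int_a^\infty g(y)g'(y-a+b)\,dy,\qquad \partial_a C_{TD}=-g(a)g(b)-\int_a^\infty g(y)g'(y-a+b)\,dy .
\]
Since $C_{TD}=h(a+b)$, both partials equal $h'(a+b)$; eliminating the common integral between the two identities gives the key relation $g(a)g(b)=-2h'(a+b)$. The structural payoff is that the product $g(a)g(b)$ depends only on the sum $a+b$.

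It then remains to solve this classical functional equation. Nontriviality forces $g>0$ throughout $(0,\infty)$, since a single zero would propagate through $g(a)g(b)=\Psi(a+b)$ and collapse the covariance; taking logarithms and anchoring at the origin reduces the relation to Cauchy's additive equation $m(a)+m(b)=m(a+b)$ for $m(t)=\log g(t)-\log g(0^+)$. The continuity and square-integrability assumed of any kernel exclude the pathological solutions, forcing $m(t)=kt$, hence $g(t)=\theta_1 e^{kt}$ with $k<0$ by integrability, i.e.\ the exponential kernel. The main obstacle is entirely in the necessity direction: the careful Leibniz bookkeeping (since the spatial variable enters both limit and integrand) and the regularity justification needed to rule out non-measurable Cauchy solutions; once $g(a)g(b)=\Psi(a+b)$ is in hand, the conclusion is standard.
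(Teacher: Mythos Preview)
Your argument is correct and lands on the same functional equation $g(a)g(b)=(\text{function of }a+b)$ that the paper derives, but you reach it by a shorter path. The paper forms the difference $f(a,b)=C_c(a+b)-C_n(a,b)$, differentiates once in $a$ via Leibniz, performs a change of variable, differentiates a rewritten copy in $b$, and then combines the resulting identities to obtain $g(0)g(a+b)=g(a)g(b)$; from there it passes to Cauchy's equation by logarithms exactly as you do. Your observation that $C_n(a,b)=h(a+b)$ forces $\partial_a C_n=\partial_b C_n$ collapses all of that bookkeeping into a single subtraction, yielding $g(a)g(b)=-2h'(a+b)$ directly; setting $b=0$ recovers the paper's form $-2h'=g(0)g$, so the two routes are equivalent. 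The only extra care your version needs is a word about the domain: your integral formula is written under the convention $a\ge b$, so the derivative identity is first obtained on the open region $\{a>b>0\}$ and then extended to all $a,b>0$ by the evident symmetry of $g(a)g(b)$ and continuity of $g$. Both arguments quietly assume enough regularity on $g$ (continuity, $C^1$, integrability) to justify differentiating under the integral and to rule out the non-measurable Cauchy solutions; you flag this explicitly, which is a small improvement over the paper.
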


The proof of Theorem~\ref{thm5} is nontrivial and left to Supplement A. When the kernel is exponential, the isotropic tail-down model can be written as
\begin{equation}\label{exp}
	C_{TD}(\bm{s}_{1}, \bm{s}_{2}) = \theta_{0} \exp(-d_{\cdot, \mathcal{T}}(\bm{s}_{1}, \bm{s}_{2})/\theta_{2}), \qquad \bm{s}_{1}, \bm{s}_{2} \in \mathcal{T},
\end{equation}
where $\theta_{0}, \theta_{2} > 0$. (\ref{exp}) also appears in~\citet{anderes2020}, where all isotropic covariance functions are developed by space embedding, as a valid covariance function on $(\mathcal{G}, d_{R, \mathcal{G}})$. Therefore, Theorem~\ref{thm5} shows that the exponential tail-down covariance function is the only bridge which connects pure spatial covariance functions on Euclidean trees constructed by space embedding and kernel convolution, and will later help us find the linkage of space-time models constructed by different approaches as well. 

\subsection{Convex Cone and Scale Mixture Models}\label{sec:4.2}
\subsubsection{Convex Cone}
Stemming from the convex cone property of the class of positive definite functions, Theorem~\ref{thm6} provides easy to implement, yet practically important, ways to construct space-time covariance functions on directed Euclidean trees. 

\begin{theorem}\label{thm6}
	The functions given below are valid space-time covariance functions on a directed Euclidean tree:
	\begin{align}
		C(\bm{s}_{1}, \bm{s}_{2}; t_{1}, t_{2}) &= C_{TD}(\bm{s}_{1}, \bm{s}_{2})C_{T1}(t_{1}, t_{2}) + C_{TU}(\bm{s}_{1}, \bm{s}_{2}) + C_{T2}(t_{1}, t_{2}) \\
		C(\bm{s}_{1}, \bm{s}_{2}; t_{1}, t_{2}) &= C_{TU}(\bm{s}_{1}, \bm{s}_{2})C_{T1}(t_{1}, t_{2}) + C_{TD}(\bm{s}_{1}, \bm{s}_{2}) + C_{T2}(t_{1}, t_{2}) \\
		C(\bm{s}_{1}, \bm{s}_{2}; t_{1}, t_{2}) &= C_{TU}(\bm{s}_{1}, \bm{s}_{2})C_{T1}(t_{1}, t_{2}) + C_{TD}(\bm{s}_{1}, \bm{s}_{2})C_{T2}(t_{1}, t_{2}),
	\end{align}
	where the tail-up $C_{TU}$ and the tail-down $C_{TD}$ models are defined in Section~\ref{sec:4.1}, and $C_{T1}$ and $C_{T2}$ are valid temporal covariance functions. 
\end{theorem}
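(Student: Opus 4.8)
The plan is to exploit two classical structural facts about the class of valid covariance functions. First, this class forms a convex cone, so it is closed under addition and under multiplication by a nonnegative scalar. Second, the Schur product theorem guarantees that the entrywise (Hadamard) product of two positive semidefinite matrices is again positive semidefinite, which says that the product of two valid covariance functions---evaluated on a common finite configuration of points---is valid. The building blocks are already in hand: $C_{TU}$ and $C_{TD}$ are valid pure spatial covariance functions on the directed Euclidean tree $\mathcal{T}$ by virtue of their kernel convolution construction recalled in Section~\ref{sec:4.1}, and $C_{T1}, C_{T2}$ are valid pure temporal covariance functions by hypothesis. Symmetry of each of the three candidate functions is immediate from the symmetry of every factor and summand, so the only substantive task is to verify positive definiteness on the product space $\mathcal{T}\times\mathbb{R}$.

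I would treat each type of summand separately. For a separable product term such as $C_{TU}(\bm{s}_1,\bm{s}_2)C_{T1}(t_1,t_2)$, I would observe that, evaluated at any finite collection of space-time points $\{(\bm{s}_i,t_i)\}$, the associated matrix is the Hadamard product of the spatial matrix $[C_{TU}(\bm{s}_i,\bm{s}_j)]$ with the temporal matrix $[C_{T1}(t_i,t_j)]$, both positive semidefinite; the Schur product theorem then gives positive semidefiniteness of the product, which is exactly the separable-product construction noted in Section~\ref{sec:1.2}. For a pure spatial summand such as $C_{TU}(\bm{s}_1,\bm{s}_2)$, viewed as a function on $\mathcal{T}\times\mathbb{R}$, I would write it as the separable product of $C_{TU}$ with the constant temporal covariance $C_T\equiv 1$, which is trivially positive definite, so this summand is valid on the full product space; the identical device handles a pure temporal summand such as $C_{T2}(t_1,t_2)$.

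I would then assemble the pieces. Each of the three displayed functions is a finite sum of terms, each of which has just been shown to be a valid covariance function on $\mathcal{T}\times\mathbb{R}$. Because the class of valid covariance functions is closed under addition (the convex cone property), each sum is itself a valid covariance function on $\mathcal{T}\times\mathbb{R}$, which is the claim.

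As for the main obstacle, there is essentially none of a deep nature, since the entire argument reduces to the Schur product theorem together with closure under addition. The only points requiring care are bookkeeping ones: making sure the pure spatial and pure temporal summands are legitimately interpreted as covariances on the full product space through the constant-factor device above, and confirming that $C_{TU}$ and $C_{TD}$ are genuinely positive definite rather than merely symmetric---a property inherited from their kernel convolution origin in Section~\ref{sec:4.1}.
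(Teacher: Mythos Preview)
Your proposal is correct and follows essentially the same approach as the paper: invoke symmetry of each building block, apply the Schur product theorem to the separable product terms, and then use closure under addition to conclude. Your write-up is in fact slightly more careful than the paper's, since you explicitly justify why the pure spatial and pure temporal summands are valid on the product space via the constant-factor device.
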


\begin{proof}
	The symmetry condition holds trivially as each component on the right hand side of (8) - (10) is symmetric. According the Schur product theorem~\citep{schur1911}, $C_{TD}(\cdot, \cdot)C_{T1}(\cdot, \cdot)$, $C_{TU}(\cdot, \cdot)C_{T1}(\cdot, \cdot)$, and $C_{TD}(\cdot, \cdot)C_{T2}(\cdot, \cdot)$, are positive definite on $\mathcal{T} \times \mathbb{R}$. The remaining results then follow easily from the definition of positive definiteness. 
\end{proof}

In Euclidean space, (8) and (9) are called product-sum models~\citep{DeIaco2002}. Unless $C_{T1} = C_{T2}$ in (10), covariance functions given in Theorem~\ref{thm6} are non-separable. Similar to the variance components model in~\citet{ver2010}, these functions allow high autocorrelation among sites that are flow-connected, and small but significant autocorrelation among sites that are flow-unconnected, at fixed temporal components. If we further assume that the number of observations over time on each site is the same, then substantial computational efficiency can be gained by exploiting the covariance matrix structure.

\vspace{4mm}

\noindent \emph{Example 4.} Consider the isotropic exponential tail-down model, $C_{T1}$ being a cosine function which captures potential seasonal fluctuations, and $C_{T2}$ exponential as well. For the tail-up spatial component, we adopt a Mariah kernel~\citep{ver2010}, which specifies $g(x) = \frac{1}{2} \frac{1}{1 + x/\theta_{1}}$, for $x \geq 0$ with $\theta_{1}> 0$. After reparameterization, expression (10) from Theorem~\ref{thm6} gives the valid space-time model as
\[
C(\bm{s}_{1}, \bm{s}_{2}; t_{1}, t_{2}) = \begin{cases}
\frac{\pi_{1, 2}}{2} \frac{\log \left(d/\theta_{1} + 1 \right)}{d / \theta_{1}}\cos \left(\frac{u}{\theta_{2}} \right) + \frac{1}{2}\exp \left(- \left(\frac{d}{\theta_{3}} + \frac{u}{\theta_{4}} \right) \right) & \bm{s}_{1}, \bm{s}_{2} \text{ are flow-connected, } d > 0 \\
\frac{1}{2}\cos \left(\frac{u}{\theta_{2}} \right) + \frac{1}{2}\exp \left(- \frac{u}{\theta_{4}} \right) & d = 0\\
\frac{1}{2}\exp \left(- \left(\frac{d}{\theta_{3}} + \frac{u}{\theta_{4}} \right) \right) & \text{otherwise} 
\end{cases},
\]
where $\theta_{1}, \cdots, \theta_{4} > 0$ and the weight $\pi_{1, 2}$ is defined in Section~\ref{sec:4.1}. Note that the model above contains a metric sub-model which is a function of the space-time distance, i.e. $\frac{d}{\theta_{3}} + \frac{u}{\theta_{4}}$.

\subsubsection{Scale Mixture Models}
We conclude the theoretical development of space-time models with a clever trick~\citep{porcu2019}, which gives the so-called scale mixture model. The trick can trace back to the second stability property of covariance functions given by~\citet[pp.~60]{chiles1999}.    

\begin{theorem}\label{thm7}
	Let $C_{0}(\bm{s}_{1}, \bm{s}_{2}; t_{1}, t_{2}; a)$ be a space-time covariance model on $\mathcal{G} \times \mathbb{R}$, $a$ be a parameter where $a \in \Theta_{a} \subset \mathbb{R}$, and $\mu(\cdot)$ a positive measure on the set $\Theta_{a}$. Then 
	\[
	C(\bm{s}_{1}, \bm{s}_{2}; t_{1}, t_{2}) = \int_{\Theta_{a}}C_{0}(\bm{s}_{1}, \bm{s}_{2}; t_{1}, t_{2}; a) d \mu(a),
	\]
	$(\bm{s}_{i}, t_{i}) \in \mathcal{G} \times \mathbb{R}, i = 1, 2$, is a valid covariance model on $\mathcal{G} \times \mathbb{R}$ given that the integral exists for every pair of space-time coordinates. 
\end{theorem}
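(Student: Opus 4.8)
The plan is to verify directly the two defining properties of a valid covariance function --- symmetry and positive definiteness --- for the mixture $C$, inheriting each property pointwise in $a$ from the integrand $C_0(\cdot\,;a)$ and then passing to the integral using only elementary properties of the integral.

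First I would dispatch symmetry. For every fixed $a \in \Theta_a$, the function $C_0(\bm{s}_1, \bm{s}_2; t_1, t_2; a)$ is by assumption a valid covariance model and hence satisfies $C_0(\bm{s}_1, \bm{s}_2; t_1, t_2; a) = C_0(\bm{s}_2, \bm{s}_1; t_2, t_1; a)$. Integrating both sides against $d\mu(a)$ preserves this identity, so $C(\bm{s}_1, \bm{s}_2; t_1, t_2) = C(\bm{s}_2, \bm{s}_1; t_2, t_1)$.

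The substantive step is positive definiteness. Fix an arbitrary finite collection of coefficients $\{c_i\}_{i=1}^{N} \subset \mathbb{R}$ (I write $c_i$ rather than $a_i$ as in~(\ref{pd1}) to avoid a clash with the mixing parameter $a$) and space-time coordinates $\{(\bm{s}_i, t_i)\}_{i=1}^{N} \subset \mathcal{G} \times \mathbb{R}$. I would form the quadratic form
\[
Q := \sum_{i=1}^{N}\sum_{j=1}^{N} c_i c_j \, C(\bm{s}_i, \bm{s}_j; t_i, t_j) = \sum_{i=1}^{N}\sum_{j=1}^{N} c_i c_j \int_{\Theta_a} C_0(\bm{s}_i, \bm{s}_j; t_i, t_j; a)\, d\mu(a),
\]
and then interchange the (finite) double sum with the integral. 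Because the sum is finite and each integral $\int_{\Theta_a} C_0(\bm{s}_i, \bm{s}_j; t_i, t_j; a)\, d\mu(a)$ exists by hypothesis, this interchange is justified by linearity of the integral alone --- no dominated-convergence argument is needed --- giving
\[
Q = \int_{\Theta_a} \left( \sum_{i=1}^{N}\sum_{j=1}^{N} c_i c_j \, C_0(\bm{s}_i, \bm{s}_j; t_i, t_j; a) \right) d\mu(a).
\]
For each fixed $a \in \Theta_a$ the bracketed integrand is nonnegative, precisely because $C_0(\cdot\,;a)$ is a valid, hence positive definite, covariance model on $\mathcal{G} \times \mathbb{R}$ in the sense of~(\ref{pd1}). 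Integrating a nonnegative integrand against the positive measure $\mu$ then yields $Q \geq 0$, which establishes~(\ref{pd1}) for $C$ and completes the argument.

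I do not anticipate a genuine obstacle here: the only point that requires care is the sum--integral interchange, and since the outer operation is a \emph{finite} sum this reduces to linearity of the integral together with the stated integrability (``the integral exists for every pair of space-time coordinates''). The same integrability hypothesis also implicitly furnishes the measurability of $a \mapsto C_0(\bm{s}_i, \bm{s}_j; t_i, t_j; a)$ that is needed for the integrand to be measurable and its integral to be well defined, so nothing further is required.
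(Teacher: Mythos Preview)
Your proposal is correct and follows exactly the approach the paper indicates: the paper states that Theorem~\ref{thm7} ``can be proved by the definition of positive definiteness directly, which we will skip here,'' and your argument carries out precisely that direct verification, checking symmetry and then the quadratic-form inequality by interchanging a finite sum with the integral and using nonnegativity of the integrand for each fixed $a$.
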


Theorem~\ref{thm7} can be proved by the definition of positive definiteness directly, which we will skip here. Any valid space-time model that satisfies the condition can be chosen as the integrand, and we emphasize its application on directed Euclidean trees in the following corollary.   

\begin{corollary}\label{corollary3}
	Suppose that $C_{S}(\cdot, \cdot)$  is a pure spatial covariance function on a directed Euclidean tree $\mathcal{T}$ and $C_{T}(\cdot, \cdot)$ is a pure temporal covariance function. Parameter $a$ has the support $\Theta_{a} \subset \mathbb{R}$, and $\mu(\cdot)$ is a positive measure on the set $\Theta_{a}$. Then
	\[
	C(\bm{s}_{1}, \bm{s}_{2}; t_{1}, t_{2}) = \int_{\Theta_{a}}C_{S}(\bm{s}_{1}, \bm{s_{2}}; a)C_{T}(t_{1}, t_{2}; a) d \mu(a), \qquad (\bm{s}_{i}, t_{i}) \in \mathcal{T} \times \mathbb{R}, i = 1, 2,
	\]
	is a valid space-time covariance function on $\mathcal{T} \times \mathbb{R}$ given that the integral exists. 
\end{corollary}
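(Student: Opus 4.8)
The plan is to exhibit the stated mixture as a special instance of the scale mixture construction in Theorem~\ref{thm7}, so that the corollary follows once an appropriate integrand is identified. First I would observe that a directed Euclidean tree $\mathcal{T}$ is a particular graph with Euclidean edges, so that $\mathcal{T} \times \mathbb{R}$ is a valid space-time domain to which Theorem~\ref{thm7} applies verbatim (with $\mathcal{G}$ replaced by $\mathcal{T}$).

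The key step is to verify that, for each fixed value of the mixing parameter $a \in \Theta_{a}$, the integrand
\[
C_{0}(\bm{s}_{1}, \bm{s}_{2}; t_{1}, t_{2}; a) := C_{S}(\bm{s}_{1}, \bm{s}_{2}; a)\, C_{T}(t_{1}, t_{2}; a)
\]
is itself a valid space-time covariance function on $\mathcal{T} \times \mathbb{R}$. This is precisely the separable (product) construction discussed in Section~\ref{sec:1.2}: since $C_{S}(\cdot, \cdot; a)$ is positive definite on $\mathcal{T}$ and $C_{T}(\cdot, \cdot; a)$ is positive definite on $\mathbb{R}$, the Schur product theorem~\citep{schur1911} guarantees that their product is positive definite on the product space $\mathcal{T} \times \mathbb{R}$, while symmetry is inherited from the two factors. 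Hence the product integrand satisfies the hypothesis required of the integrand in Theorem~\ref{thm7} for every $a$.

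With this identification in hand, I would simply invoke Theorem~\ref{thm7}: taking $\mu(\cdot)$ to be the given positive measure on $\Theta_{a}$ and assuming, as in the statement, that the integral converges for every pair of space-time coordinates, the mixture
\[
C(\bm{s}_{1}, \bm{s}_{2}; t_{1}, t_{2}) = \int_{\Theta_{a}} C_{0}(\bm{s}_{1}, \bm{s}_{2}; t_{1}, t_{2}; a)\, d\mu(a)
\]
is a valid covariance function on $\mathcal{T} \times \mathbb{R}$, which is exactly the asserted formula.

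Because the argument is a direct specialization of Theorem~\ref{thm7}, I do not anticipate a serious obstacle; the only genuine content is the verification that the separable product $C_{S}(\cdot; a)\, C_{T}(\cdot; a)$ is positive definite for each $a$, which reduces to the Schur product theorem. The remaining regularity issue --- that the integral defining $C$ exists for all space-time coordinate pairs --- is assumed in the hypothesis and would, in any concrete application, be ensured by integrability of the map $a \mapsto C_{S}(\bm{s}_{1}, \bm{s}_{2}; a)\, C_{T}(t_{1}, t_{2}; a)$ against $\mu$.
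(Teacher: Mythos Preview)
Your proposal is correct and matches the paper's own argument essentially verbatim: the paper states that the corollary follows from Theorem~\ref{thm7} together with the fact that the separable product $C_{S}(\bm{s}_{1}, \bm{s}_{2}; a)\,C_{T}(t_{1}, t_{2}; a)$ is a valid covariance function on $\mathcal{T} \times \mathbb{R}$. Your write-up is slightly more detailed (explicitly citing the Schur product theorem and noting that a directed Euclidean tree is a graph with Euclidean edges), but the logical content is the same.
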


The proof of Corollary~\ref{corollary3} follows from Theorem~\ref{thm7} in concert with the fact that the separable space-time function $C_{S}(\bm{s}_{1}, \bm{s_{2}}; a)C_{T}(t_{1}, t_{2}; a)$ is a valid covariance function on $\mathcal{T} \times \mathbb{R}$. We illustrate the use of Corollary~\ref{corollary3} by two examples, one of which shows an interesting linkage between space-embedding models and scale mixture models due to Theorem~\ref{thm5}. 

\vspace{4mm}

\noindent \emph{Example 1 Revisit.} Following Example 4 in~\citet{DeIaco2002}, consider an exponential tail-down model as the spatial component ($C_{S}$), a cosine function as the temporal component ($C_{T}$) and a half-normal probability density function ($\mu'$), which are parameterized as follows:
\begin{align*}
	C_{S}(d; a) &= \exp\left(-\frac{a^{2}}{\theta_{1}}d \right) \\
	C_{T}(u; a) &= \cos\left[a(2\theta_{2} u) \right] \\
	\mu'(a) &= \frac{2}{\sqrt{\pi}} \exp(-a^{2}), 
\end{align*}
where $a > 0, \theta_{1} > 0, \theta_{2} \in \mathbb{R}$.  According to Corollary~\ref{corollary3},
\begin{equation}~\label{eg1}
	C(d; u) = \int_{0}^{\infty} \exp\left(-\frac{a^{2}}{\theta_{1}}d \right) \cos\left[a(2\theta_{2} u) \right]  \frac{2}{\sqrt{\pi}} \exp(-a^{2}) da,
\end{equation}
is a valid space-time covariance model on directed Euclidean tree. In order to integrate (11), we use the result from~\citet{ng1969table} that
\begin{equation}~\label{eg2}
	\int_{0}^{\infty} \exp(-x^{2})\cos (cx) dx = \frac{\sqrt{\pi}}{2}\exp\left(-\frac{c^{2}}{4}
	\right), \qquad c \in \mathbb{R}. 
\end{equation}
Now we can simplify~(\ref{eg1}) as
\begin{align}\label{eg3}
	C(d; u) &=  \int_{0}^{\infty} \exp\left(-\frac{a^{2}}{\theta_{1}}d \right) \cos\left[a(2\theta_{2} u) \right]  \frac{2}{\sqrt{\pi}} \exp(-a^{2}) da \nonumber \\
	&= \frac{2}{\sqrt{\pi}} \int_{0}^{\infty} \cos\left[a(2\theta_{2} u) \right] \exp\left(-\left(1 + \frac{d}{\theta_{1}} \right)a^{2} \right)da \nonumber \\
	&= \frac{2}{\sqrt{\pi}\sqrt{1 + \frac{d}{\theta_{1}}}}\int_{0}^{\infty} \cos \left(\frac{2 \theta_{2} u}{\sqrt{1 + \frac{d}{\theta_{1}}}}y \right) \exp\left(-y^{2} \right)dy \nonumber \\
	&= \frac{1}{\sqrt{1 + \frac{d}{\theta_{1}}}} \exp\left(-\frac{\theta_{2}^{2}u^{2}}{1 + \frac{d}{\theta_{1}}} \right), \qquad  \theta_{1} > 0, \theta_{2} \in \mathbb{R}, 
\end{align}
where the second to last equality holds by change of variable and the last by~(\ref{eg2}). Observe that the model given by~(\ref{eg3}) is a special case of~(\ref{M1}) in \emph{Example 1}, where $\kappa = 1/\theta_{1}$, $c = \theta_{2}^{2}$, $\nu = b = \beta = 1$, and $\tau = 1/2$. 

\begin{lemma}\label{lemma2}
	Assume that the scale mixture space-time covariance function is isotropic, that $C_{S}$ is an exponential tail-down model and $C_{T}$ depends on the time lag $u$ only. Then $C_{T}(0; a) \geq 0$ for $a \in \Theta_{a}$ is a sufficient but not necessary condition for the spatial marginal function $f_{S}(d):= C(d; 0)$ to be convex on $\mathbb{R}^{+}$. 
\end{lemma}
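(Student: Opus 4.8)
The plan is to reduce the whole statement to a single second-derivative computation on the spatial marginal and then treat the two directions separately. By Theorem~\ref{thm5}, the only isotropic tail-down model is the exponential one, so the hypotheses force $C_{S}(d;a)=\theta_{0}(a)\exp(-\lambda(a)d)$ with $\theta_{0}(a),\lambda(a)>0$. Setting $u=0$ in the scale mixture of Corollary~\ref{corollary3} gives
\[
f_{S}(d)=C(d;0)=\int_{\Theta_{a}}\theta_{0}(a)\,C_{T}(0;a)\,e^{-\lambda(a)d}\,d\mu(a).
\]
First I would justify differentiating twice under the integral sign on $(0,\infty)$: on any compact $[d_{1},d_{2}]\subset(0,\infty)$ the kernel satisfies $\lambda(a)^{2}e^{-\lambda(a)d}\leq\sup_{\lambda>0}\lambda^{2}e^{-\lambda d_{1}}=4/(d_{1}^{2}e^{2})<\infty$, so the differentiated integrand is dominated and dominated convergence applies. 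This produces
\[
f_{S}''(d)=\int_{\Theta_{a}}\theta_{0}(a)\,\lambda(a)^{2}\,C_{T}(0;a)\,e^{-\lambda(a)d}\,d\mu(a).
\]

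For sufficiency the argument is then immediate: $\theta_{0}(a)>0$, $\lambda(a)^{2}\geq 0$, $e^{-\lambda(a)d}>0$, and $\mu$ is a positive measure, so if $C_{T}(0;a)\geq 0$ for every $a\in\Theta_{a}$ the integrand is nonnegative, whence $f_{S}''\geq 0$ and $f_{S}$ is convex on $(0,\infty)$; convexity extends to $[0,\infty)$ by continuity.

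For the ``not necessary'' half I would exhibit an explicit counterexample in which $C_{T}(0;a)<0$ on part of $\Theta_{a}$ while $f_{S}''\geq 0$ still holds, exploiting that a slowly decaying exponential ($\lambda_{1}$ small) dominates a fast decaying one ($\lambda_{2}$ large) in the second-derivative kernel. Concretely, take a two-atom mixing measure so that
\[
f_{S}(d)=c_{1}e^{-\lambda_{1}d}-c_{2}e^{-\lambda_{2}d},\qquad c_{1},c_{2}>0,
\]
which encodes $C_{T}(0;a_{2})<0$. Then $f_{S}''(d)=c_{1}\lambda_{1}^{2}e^{-\lambda_{1}d}-c_{2}\lambda_{2}^{2}e^{-\lambda_{2}d}$, and choosing $\lambda_{1}\leq\lambda_{2}$ together with $c_{1}\lambda_{1}^{2}\geq c_{2}\lambda_{2}^{2}$ forces $f_{S}''\geq 0$ on $[0,\infty)$, since the ratio $(c_{1}\lambda_{1}^{2})/(c_{2}\lambda_{2}^{2})\cdot e^{(\lambda_{2}-\lambda_{1})d}$ is nondecreasing in $d$ and already at least $1$ at $d=0$. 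This yields a convex spatial marginal despite a negative temporal value, so the sign condition is not necessary.

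The main obstacle I anticipate is the ``not necessary'' direction: one must decide in what sense the counterexample remains an admissible object under the lemma's umbrella. The cleanest route is to note that the conclusion concerns only the convexity geometry of $f_{S}$ through the displayed integral, so a signed mixing measure (or, equivalently, allowing $C_{T}(0;a)<0$ for some $a$) is permissible without disturbing the computation; alternatively one may verify directly that the resulting $C(d;u)$ is still positive definite on $\mathcal{T}\times\mathbb{R}$, but that is a stronger and unnecessary requirement here. A secondary, purely routine point is the domination bound licensing differentiation under the integral, which is immediate on compact subsets of $(0,\infty)$ as above and passes to the origin by continuity.
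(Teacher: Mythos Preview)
Your sufficiency argument matches the paper's: both differentiate $f_{S}$ twice under the integral and observe that the integrand is nonnegative once $C_{T}(0;a)\geq 0$. You supply an explicit domination bound where the paper simply assumes enough smoothness to interchange, and you work with a general exponential form $\theta_{0}(a)e^{-\lambda(a)d}$ while the paper fixes the specific parameterization $\exp(-a^{2}d/c)$ from \emph{Example~1~Revisit}; these are cosmetic differences.

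The ``not necessary'' halves diverge. The paper does not construct a mixture with $C_{T}(0;a)<0$; instead it points back to \emph{Example~1~Revisit}, where the cosine choice $C_{T}(u;a)=\cos\bigl(2a\theta_{2}u\bigr)$ produces a scale mixture that coincides with a special case of model~(\ref{M1}), and then invokes Proposition~\ref{prop3}, which establishes convexity of the spatial marginal for the whole Gneiting-type class without any pointwise sign hypothesis. The paper's reading of ``not necessary'' is therefore that convexity is secured by the structural result Proposition~\ref{prop3} rather than by the condition $C_{T}(0;a)\geq 0$. Your two-atom construction is more direct and makes the failure of the sign condition explicit, but---as you yourself flag---it sits awkwardly with the lemma's standing hypotheses: if $C_{T}(\cdot\,;a)$ must be a bona fide temporal covariance function for each $a$ (as Corollary~\ref{corollary3} requires), then $C_{T}(0;a)\geq 0$ is automatic and a negative atom is inadmissible. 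The paper sidesteps that tension by arguing convexity from an independent proposition instead of exhibiting a violation of the sign condition; your route buys a concrete counterexample at the cost of stepping outside the strict scale-mixture framework.
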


\begin{proof}
	Let $C(d; u) := \int_{\Theta_{a}} \exp \left(-\frac{a^{2}}{c}d \right)C_{T}(u; a)d \mu(a)$. Suppose that the measure $\mu(\cdot)$ and the temporal covariance function $C_{T}$ are smooth enough to allow interchanging the order of differentiation and integration. Then
	\[
	f_{S}{''}(d) = \int_{0}^{\infty} \frac{a^{4}}{c^{2}} \exp(-\frac{a^{2}}{c}d)C_{T}(0; a)d\mu(a) \geq 0, 
	\]
	given that $C_{T}(0; a) \geq 0 $ for all $a \in \Theta_{a}$. However, this condition is not necessary since we have shown in \emph{Example 1 Revisit} that when $C_{T}(u; a) = cos\left[a(2\theta_{2} u) \right]$, the scale mixture covariance model is essentially a special case of \emph{Example 1} and by Proposition~\ref{prop3}, the spatial marginal is always convex on $(0, \infty)$.
\end{proof}

Apart from the sufficient condition provided in Lemma~\ref{lemma2}, convex cone and scale mixture models do not share unified geometric features.  

\vspace{4mm}

\noindent \emph{Example 5.}  Again, let the spatial component be an exponential tail-down model with a slightly different parameterization: $C_{S}(d; a) = \exp \left(-\frac{a}{\theta_{1}}d \right)$, where $a, \theta_{1} > 0$. Then consider a non-degenerate temporal covariance function $C_{T}(u;a)= \exp \left(- \frac{a}{\theta_{2}} u^{\theta_{3}} \right)$, where $a, \theta_{2} > 0$ and $\theta_{3} \in (0, 2]$~\citep{zastavnyi2011}. Let $\mu$ be a Gamma distribution whose density function is specified as $f(a) = \frac{\theta_{5}^{\theta_{4}}}{\Gamma(\theta_{4})}a^{\theta_{4} -1}e^{-\theta_{5}a}$ for $a \in (0, \infty)$, with $\theta_{4}, \theta_{5} > 0$. The space-time covariance model based on Corollary~\ref{corollary3} is isotropic and may be evaluated as follows:
\begin{align*}
	C(d; u) &= \int_{0}^{\infty} \exp\left(-\frac{d}{\theta_{1}} a\right)\exp \left(-\frac{u^{\theta_{3}}}{\theta_{2}}a \right)\frac{\theta_{5}^{\theta_{4}}}{\Gamma(\theta_{4})}a^{\theta_{4} -1}\exp(-\theta_{5}a) d a \\
	&= \frac{\theta_{5}^{\theta_{4}}}{\Gamma(\theta_{4})} \int_{0}^{\infty} a^{\theta_{4} -1}\exp\left(-\left(\frac{d}{\theta_{1}} + \frac{u^{\theta_{3}}}{\theta_{2}} + \theta_{5} \right)a \right) da \\
	&= \frac{\theta_{5}^{\theta_{4}}}{\Gamma(\theta_{4})} \frac{\Gamma(\theta_{4})}{
		\left(\frac{d}{\theta_{1}} + \frac{u^{\theta_{3}}}{\theta_{2}} + \theta_{5}\right)^{\theta_{4}}} \\
	&= \left(\frac{1}{\frac{d}{\theta_{1} \theta_{5}} + \frac{u^{\theta_{3}}}{\theta_{2}\theta_{5}} + 1} \right)^{\theta_{4}}, \quad \theta_{1}, \theta_{2}, \theta_{4}, \theta_{5} > 0, \theta_{3} \in (0, 2].
\end{align*}
After reparameterization, we have that 
\begin{equation}~\label{M5}
	C(d; u) = 	\left(\frac{1}{\frac{d}{\theta_{1}} + \frac{u^{\theta_{3}}}{\theta_{2}} + 1} \right)^{\theta_{4}},  
\end{equation} 
where $\theta_{1}$, $\theta_{2}, \theta_{4} > 0$, and $\theta_{3} \in (0, 2]$ is a valid space-time covariance function on Euclidean trees. The model given by~(\ref{M5}) extends the metric model in Section~\ref{sec:3.2} since it is a function of  $\frac{d}{\theta_{1}} + \frac{u^{\theta_{3}}}{\theta_{2}}$. The covariance function is continuous at the origin and monotonically decays to 0 as $d \rightarrow \infty$ and/or $u \rightarrow \infty$. The spatial marginal is convex on $(0, \infty)$, but this is not necessarily so for the temporal marginal.  

\section{Stream Temperature Example}\label{sec:5}
As mentioned in Section~\ref{sec:2.1} and~\ref{sec:4.1}, a tree-like stream network is naturally a Euclidean tree with the geodesic distance. In this section, we illustrate the use of models introduced in previous sections on a real stream temperature data set. Streams and rivers are important to humans as well as certain plants and animals~\citep{ver2010}. Stream temperature controls many physicochemical processes~\citep{isaak2018} and is one of the key variables affecting habitat suitability for numerous aquatic species~\citep{gallice2016}. Temperature-sensitive species and life stages are most vulnerable during the warm summer season~\citep{isaak2018}. Thus, gaining insights on potentially spatiotemporally varying stream temperatures over networks during the warmest time of the year is of special interest. 

\begin{table}[ht]
	\caption{Summary of geographical variables and mean August water temperature in the year 2014 of 96 survey sites.}\label{tb:geo}
		\centering
		\begin{tabular}{ccccc}
			\hline
			\hline
			& Latitude  & Longitude & Upstream Distance & Mean August Water Temp. \\ 
			\hline
			Min & 45.703 $^\circ$N &  114.493 $^\circ$W & 71.462 km & 7.510 $^\circ$C  \\ 
			Mean & 46.600 $^\circ$N & 115.322 $^\circ$W & 218.348 km & 12.364 $^\circ$C  \\ 
			Max & 46.968 $^\circ$N & 116.530 $^\circ$W & 296.040 km & 21.760 $^\circ$C  \\ 
			\hline
		\end{tabular}
\end{table}

The advent of inexpensive sensors provides a large amount of accessible water temperature records~\citep{isaak2018}. The original data~\citep{isaak2018}, available at \url{https://www.researchgate.net/publication/325933910_Principal_components_of_thermal_regimes_in_mountain_river_networks}, comprises mean daily water temperature (in Celsius) during a 5-year time period (2011 - 2015) at 226 observational sites residing on several adjacent river networks in the northwestern United States. As the proposed models only apply to observations on the same network with finite geodesic distances in between, we consider 96 survey sites from the Clearwater River Basin (Figure~\ref{fig:net}) in central Idaho. Observations are more crowded in the northeast region and geographical summary statistics are given in Table~\ref{tb:geo}. The average rate of missingness of the original water temperature data is about $12\%$~\citep{isaak2018}. Considering computational feasibility, we choose to analyze data from 10 consecutive days 08/01/2014 - 08/10/2014 from the warmest season when records of those 96 sites are complete, which yields a total of 960 observations. Figure~\ref{fig:ste} clearly indicates that spatial and temporal dependence may exist among observations.  

\begin{figure}[H]
	\begin{center}
		\includegraphics[scale=0.6]{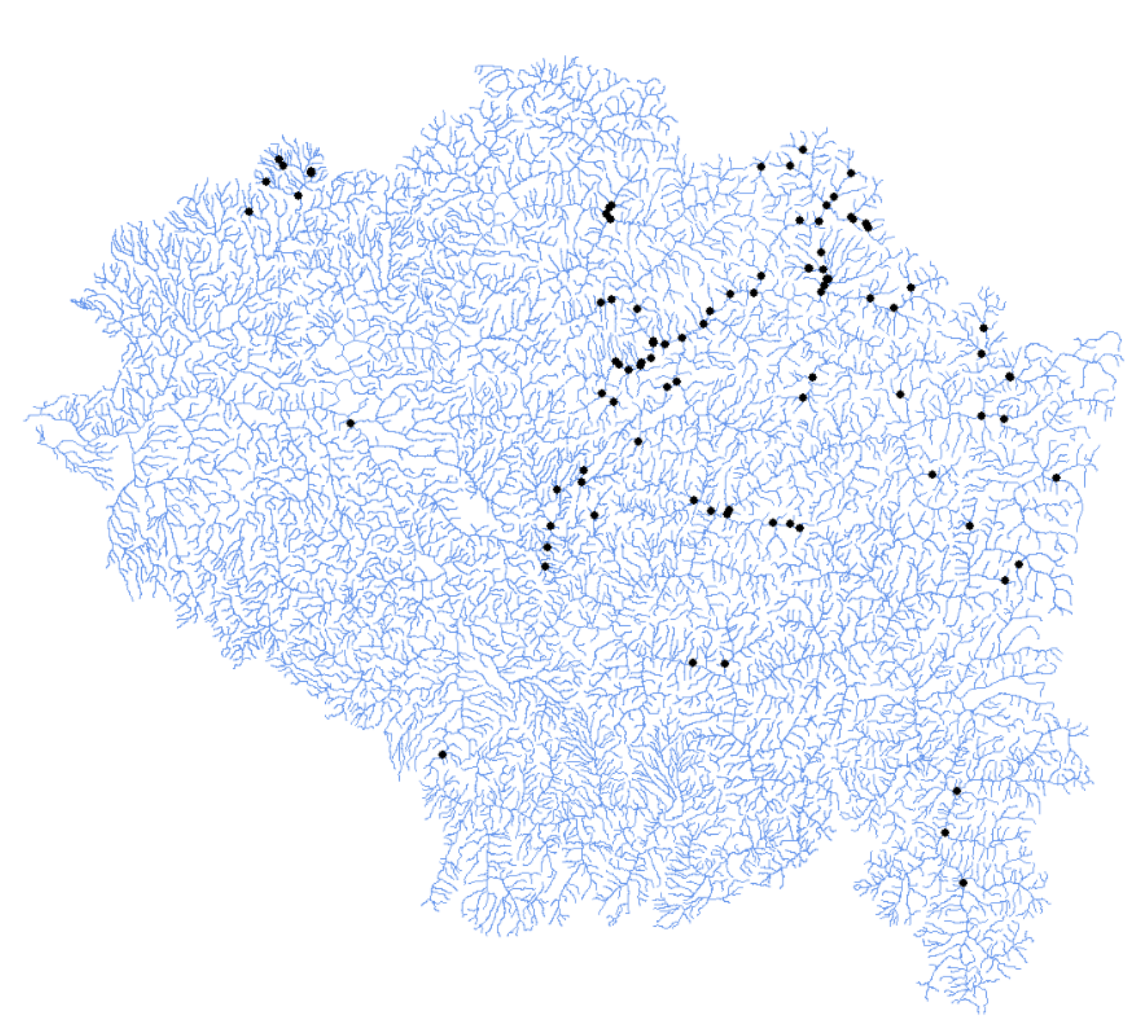}
	\end{center}
	\caption{Delineated stream lines of the Clearwater River Basin with 96 observation sites superimposed on top.}\label{fig:net}
\end{figure}

Stream lines of the study region were downloaded from the National Stream Internet (NSI) dataset (\url{https://www.fs.fed.us/rm/boise/AWAE/projects/NationalStreamInternet/NSI_network.html}). Observations were integrated with the network object using the STARS (Version 2.0.7) toolset in ArcGIS (Version 10.7.1). A Spatial Stream Network (.ssn) object was then created, from which we extracted the topological structure of the network using the SSN package (Version 1.1.1) in R. We consider a regression type of analysis including five available covariates: mean August stream temperature ($^\circ$C), drainage area (km$^2$), reach slope (m/m), elevation (m) and mean annual flow (m$^3$/s). Since a few observations have much larger drainage area and mean annual flow than others, we log transform those two positive variables to make them more symmetrically distributed. We consider each of the covariance functions given by Models 1 - 5, plus a separable model obtained by enforcing the space-time interaction parameter $\beta = 0$ in Model 1, and a common linear regression model with i.i.d. errors. The data is unbalanced in the sense that more than $90\%$ of pairs of observation sites are flow-unconnected. We cannot obtain the exact number of leaves where points lie due to the limitation of current software. However, the parameter space of $\delta$ in Model 3 depends on the number of leaves, and large $\delta$ will result in a rather small covariance value, which makes the estimation of covariance parameters less reliable. Therefore, we let $\delta = 97$ in the following analysis. For Model 4, the positive additive function $\Omega(x)$ is generated using the accumulated drainage area which comes along with the NSI dataset using STARS. 

\begin{figure}[H]
	\begin{center}
		\includegraphics[scale=0.9]{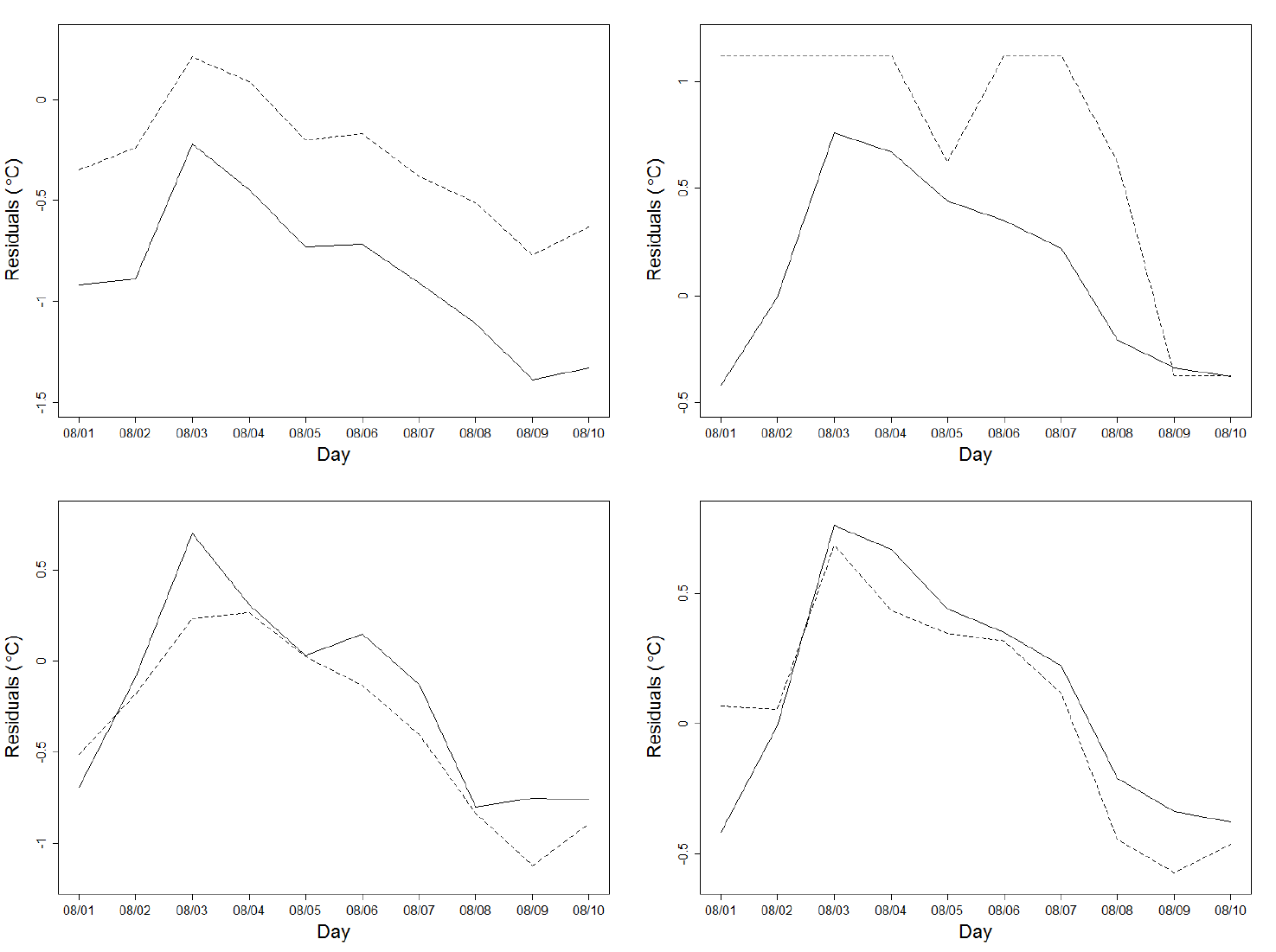}
	\end{center}
	\caption{Time series of residuals from the linear regression model with i.i.d. errors on 08/01/2014 - 08/10/2014 between various pairs of survey sites. Upper left: flow-connected sites with the smallest geodesic distance, i.e. 0.104 km; upper right: flow-connected sites with the largest geodesic distance, i.e. 224.577 km; bottom left:flow-unconnected sites with the smallest geodesic distance, i.e. 0.031 km; bottom right: flow-unconnected sites with the largest geodesic distance, i.e. 453.438 km.}\label{fig:ste}
\end{figure}

Models introduced thus far are continuous with variance $\sigma_{0}^{2} = C_{0}(\bm{s}, \bm{s}; t, t) = 1,\ (\bm{s}, t) \in \mathcal{T} \times \mathbb{R}$. Based on applications in the literature (\citet{gneiting2006} and \citet{gneiting2002}), as well as empirical analysis of space and time marginal functions for the stream temperature data, we allow a discontinuity of the covariance function at the origin by multiplying $C_{0}$ by a constant and adding to it a pure spatial nugget effect $\theta_{\text{nug}} > 0$:
\[
C(\bm{s}_{i}, \bm{s}_{j}; t_{i}, t_{j}) = \sigma^{2}C_{0}(\bm{s}_{i}, \bm{s}_{j}; t_{i}, t_{j}) + \theta_{\text{nug}}\bm{1}_{\bm{s}_i = \bm{s}_j}, \qquad (\bm{s}_{i}, t_{i}) \in \mathcal{T} \times \mathbb{R}, \ i= 1, 2.
\]  
The spatial nugget effect accounts for measurement error and micro-scale spatial variability~\citep[pp.~58]{cressie1993}. The exploratory analysis of fitted residuals suggests that the normal distribution assumption is reasonable. 

Our primary goal of this data analysis is prediction. In order to evaluate different models' predictive performances, we conduct an 8-fold cross validation. We randomly split 96 sites into 8 non-overlapping subsamples of equal size. For each cross validation replication, we fit 7 models on $84 \times 10 = 840$ data points by maximum likelihood estimation using the \texttt{optim} function in R and predict daily stream temperature by the best linear unbiased (universal kriging) predictor for the hold-out sample on each of 10 days. We assess performance using in-sample information based criteria, log likelihood score and Bayesian Information Criterion (BIC), and out-of-sample root mean squared prediction error (RMSPE) and continuous ranked probability score (CRPS; see details in~\citet{gneiting2006}). Results are summarized in Table~\ref{tb:cv}
. 

\begin{table}[ht]
	\centering
	\caption{Comparison of log likelihood values (LL), Bayesian Information Criteria (BIC), root mean squared prediction errors (RMSPE) and continuous ranked probability scores (CRPS) averaged over 8-fold cross validation replications for Models 1 - 5, the separable model and the linear regression (LR) model on the water temperature data with covariates introduced above.}\label{tb:cv}
		\begin{tabular}{cccccccc}
			\hline 
			\hline
			& Model 1 & Model 2 & Model 3 & Model 4 & Model 5 & Separable & LR \\ 
			\hline
			LL & -71.753 & -131.864 & -185.333 & -491.083 & -260.041 & -71.859 & -1291.265\\ 
			BIC & 237.774 & 351.263 & 444.733 & 1062.966 & 600.883 & 231.252 & 2629.664\\ 
			RMSPE & 1.142 & 0.938 & 1.069 & 0.999 & 0.949 & 1.141 & 1.124\\ 
			CRPS & 0.637 & 0.502 & 0.600 & 0.536 & 0.526 & 0.636 & 0.610\\ 
			\hline
		\end{tabular}
\end{table}

Large log likelihood value and small BIC indicate the model fits the training data well, while small RMSPE and CRPS suggest little discrepancy between the predicted and observed values from the test data. Table~\ref{tb:cv} reveals that in-sample and out-of-sample model evaluation metrics favor different models. Though Model 1 and the corresponding separable model outperform the rest in terms of log likelihood score and BIC, they have the worst and almost identical prediction precision based on RMSPE and CRPS. Similar model performances do not necessarily suggest that $\beta$ should be zero in Model 1. In fact, the space-time interaction is quite strong as the cross validation estimate of $\beta$ sticks at the upper bound, i.e. 1. Surprisingly, even the linear regression model predicts the hold-out samples more precisely than Model 1. Note that Model 1 also has the most covariance parameters. We interpret the somewhat contradictory results that a model behaves the best in-sample but worst out-of-sample as there might exist over-parameterization and model misspecification issues. Since our focus is the predictive performance, we prefer Model 2, which is constructed based on the Hilbert space embedding technique as well, because it has the smallest cross validation RMSPE and CRPS. It reduces the former metric by 17.86\% and 16.55\%, and latter by 21.19\% and 17.70\%, compared to Model 1 (also the separable model) and the linear regression model, respectively. Besides, its in-sample performance is only inferior to Model 1 and the separable model.  Thus, Model 2 is the overall ``best" model for this data set and fitted marginal functions are given in Figure~\ref{fig:fit}. 

\begin{figure}[H]
	\begin{center}
		\includegraphics[scale=1]{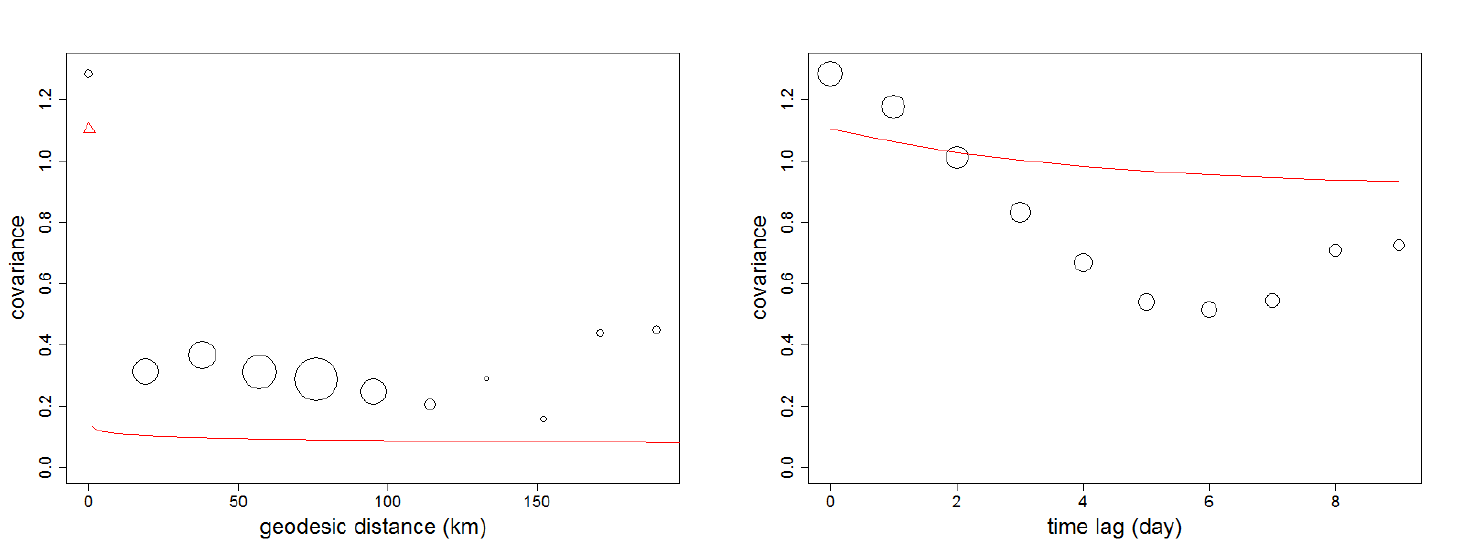}
	\end{center}
	\caption{Empirical (points) and fitted (red lines) marginal covariance functions of residuals from Model 2 by maximum likelihood estimation on the full data. Open circles are sample covariograms with size proportional to number of data pairs within each group. }\label{fig:fit}
\end{figure}

Model 3 is the most parsimonious space-time model we have applied and saves considerable computation time. For instance, it took about 50 minutes to complete 8 replications of maximum likelihood estimation for Model 3 on the Linux kernel 5.7.14, which is about one-third the computation time of Model 1 and Model 2. With comparable predictive performance, Model 3 has an overwhelming computational advantage over the others, hence should be given special attention. Though Model 4 considers flow direction, it does not help much with the model performance (both in-sample and out-of-sample) for this specific data set. A few explanations are: the  tail-up/tail-down, as well as the time series components are not the most appropriate ones; weights that are calculated based on the accumulated drainage area are not reasonable; some observation sites are distributed quite far away from the rest, which might hamper the effectiveness of the analysis, etc. However, selecting the best model for this specific data set is beyond the scope of this analysis, and we leave it for future research.  

Another interesting issue is how these models' predictive performances compare to models in the literature. Direct comparison is unavailable since there do not exist space-time covariance models on the network and researches on this specific data set are limited. But~\citet{gallice2016} did an extensive literature review on statistical stream temperature prediction in ungauged basins on different time scales. Despite different study regions and models applied, the daily RMSPEs range from 1.0 - 2.7$^\circ$C (Table 2.1,~\citet{gallice2016}). Moreover, the measurement errors from different sensors in the original data are from $\pm 0.2$ to $\pm 0.5 ^\circ$C~\citep{isaak2018}. In light of this, all of our proposed models achieve decent prediction precision.  

\section{Discussion}\label{sec:6}
This article presented a collection of tools to build valid space-time covariance models on generalized linear networks, and on an important subclass, Euclidean trees. We studied examples obtained by each constructive method and applied them to a real stream temperature data. We have not yet given a standard guidance on how to choose suitable candidate models for an arbitrary data set. But understanding the underlying physical process~\citep{gneiting2002} and matching geometric features of theoretical covariance functions to the empirical space-time covariance surface~\citep{de2013} would be helpful.

It has been argued that when prediction is the goal, model estimation is just a means to an end~\citep{porcu2019}. We notice that maximum likelihood estimates of $\sigma^{2}$ for Model 3 and 5 are much higher, i.e. $\hat{\sigma}^{2} = 15.426$ and $13.137$, respectively, than the rest. It does not cause much trouble in this particular analysis as both models have reasonable prediction precision, as determined by cross validation. However, the study of parameter estimability under infill and increasing domain asymptotics on a network is an open question that needs to be addressed and requires special attention. For instance, there is a wide range of geodesic distances between pairs of observation sites, i.e. from less than 0.1 km to 453.438 km, in the water temperature data. Precisely characterizing the dynamics of the space-time process can help refine future sampling designs and reduce data redundancy~\citep{isaak2018}.  

Though we emphasized the decisive role of valid covariance functions in geostatistical models, they also allow direct extension to space-time log Gaussian Cox processes on generalized networks (\citet{moller1998}; \citet{anderes2020}), which we leave for future investigation. 

\bigskip
\begin{center}
	{\Large\bf Acknowledgments}
\end{center}
This article is a portion of Jun Tang's Ph.D. thesis.

\bigskip
\begin{center}
{\Large\bf Supplementary Material}
\end{center}

\begin{description}

\item[Supplement A:] Proofs of Lemma 3 and Theorem~\ref{thm5}.

\item[Supplement B:] Visualizations of marginal functions, along with the covariance surface of models developed from each class.

\end{description}

\bibliographystyle{Chicago}
\bibliography{ref}

\begin{thebibliography}{}

\bibitem[\protect\citeauthoryear{Anderes, M{\o}ller, and Rasmussen}{Anderes
  et~al.}{2020}]{anderes2020}
Anderes, E., J.~M{\o}ller, and J.~G. Rasmussen (2020).
\newblock Isotropic covariance functions on graphs and their edges.
\newblock {\em Annals of Statistics\/}~{\em 48\/}(4), 2478--2503.

\bibitem[\protect\citeauthoryear{Ang, Baddeley, and Nair}{Ang
  et~al.}{2012}]{ang2012}
Ang, Q.~W., A.~Baddeley, and G.~Nair (2012).
\newblock Geometrically corrected second order analysis of events on a linear
  network, with applications to ecology and criminology.
\newblock {\em Scandinavian Journal of Statistics\/}~{\em 39\/}(4), 591--617.

\bibitem[\protect\citeauthoryear{Baddeley, Jammalamadaka, and Nair}{Baddeley
  et~al.}{2014}]{baddeley2014}
Baddeley, A., A.~Jammalamadaka, and G.~Nair (2014).
\newblock Multitype point process analysis of spines on the dendrite network of
  a neuron.
\newblock {\em Journal of the Royal Statistical Society: Series C (Applied
  Statistics)\/}~{\em 63\/}(5), 673--694.

\bibitem[\protect\citeauthoryear{Baddeley, Nair, Rakshit, and
  McSwiggan}{Baddeley et~al.}{2017}]{baddeley2017}
Baddeley, A., G.~Nair, S.~Rakshit, and G.~McSwiggan (2017).
\newblock ``{S}tationary” point processes are uncommon on linear networks.
\newblock {\em Stat\/}~{\em 6\/}(1), 68--78.

\bibitem[\protect\citeauthoryear{Berg}{Berg}{2008}]{berg2008}
Berg, C. (2008).
\newblock Stieltjes-{P}ick-{B}ernstein-{S}choenberg and their connection to
  complete monotonicity.
\newblock In S.~Mateu and E.~Porcu (Eds.), {\em Positive Definite Functions:
  From Schoenberg to Space-Time Challenges}. Castellón de la Plana, Spain:
  Department of Mathematics, University Jaume I.

\bibitem[\protect\citeauthoryear{Bochner}{Bochner}{1955}]{bochner1955}
Bochner, S. (1955).
\newblock {\em Harmonic Analysis and the Theory of Probability}.
\newblock Berkeley and Los Angeles: University of California Press.

\bibitem[\protect\citeauthoryear{Cambanis, Keener, and Simons}{Cambanis
  et~al.}{1983}]{cambanis1983}
Cambanis, S., R.~Keener, and G.~Simons (1983).
\newblock On $\alpha$-symmetric multivariate distributions.
\newblock {\em Journal of Multivariate Analysis\/}~{\em 13\/}(2), 213--233.

\bibitem[\protect\citeauthoryear{Chilès and Delfiner}{Chilès and
  Delfiner}{1999}]{chiles1999}
Chilès, J.-P. and P.~Delfiner (1999).
\newblock {\em Geostatistics --- Modeling spatial uncertainty}.
\newblock New York: Wiley.

\bibitem[\protect\citeauthoryear{Cressie}{Cressie}{1993}]{cressie1993}
Cressie, N. (1993).
\newblock {\em Statistics for Spatial Data}.
\newblock New York: Wiley.

\bibitem[\protect\citeauthoryear{Cressie, Frey, Harch, and Smith}{Cressie
  et~al.}{2006}]{cressie2006}
Cressie, N., J.~Frey, B.~Harch, and M.~Smith (2006).
\newblock Spatial prediction on a river network.
\newblock {\em Journal of agricultural, biological, and environmental
  statistics\/}~{\em 11\/}(2), 127 -- 150.

\bibitem[\protect\citeauthoryear{Cressie and Huang}{Cressie and
  Huang}{1999}]{cressie1999}
Cressie, N. and H.-C. Huang (1999).
\newblock Classes of nonseparable, spatio-temporal stationary covariance
  functions.
\newblock {\em Journal of the American Statistical Association\/}~{\em
  94\/}(448), 1330--1339.

\bibitem[\protect\citeauthoryear{De~Iaco}{De~Iaco}{2010}]{de2010}
De~Iaco, S. (2010).
\newblock Space--time correlation analysis: a comparative study.
\newblock {\em Journal of Applied Statistics\/}~{\em 37\/}(6), 1027--1041.

\bibitem[\protect\citeauthoryear{De~Iaco, Myers, and Posa}{De~Iaco
  et~al.}{2002}]{DeIaco2002}
De~Iaco, S., D.~E. Myers, and D.~Posa (2002).
\newblock Nonseparable space-time covariance models: Some parametric families.
\newblock {\em Mathematical Geology\/}~{\em 34\/}(1), 23--42.

\bibitem[\protect\citeauthoryear{De~Iaco, Posa, and Myers}{De~Iaco
  et~al.}{2013}]{de2013}
De~Iaco, S., D.~Posa, and D.~Myers (2013).
\newblock Characteristics of some classes of space--time covariance functions.
\newblock {\em Journal of Statistical Planning and Inference\/}~{\em
  143\/}(11), 2002--2015.

\bibitem[\protect\citeauthoryear{Fuentes}{Fuentes}{2002}]{fuentes2002}
Fuentes, M. (2002).
\newblock Spectral methods for nonstationary spatial processes.
\newblock {\em Biometrika\/}~{\em 89\/}(1), 197--210.

\bibitem[\protect\citeauthoryear{Gallice}{Gallice}{2016}]{gallice2016}
Gallice, A. (2016).
\newblock {\em Stream temperature modeling in mountainous environments}.
\newblock Ph.{D}. dissertation, EPFL, Lausanne, Switzerland.

\bibitem[\protect\citeauthoryear{Gneiting}{Gneiting}{1998}]{gneiting1998}
Gneiting, T. (1998).
\newblock On $\alpha$-symmetric multivariate characteristic functions.
\newblock {\em Journal of Multivariate Analysis\/}~{\em 64\/}(2), 131--147.

\bibitem[\protect\citeauthoryear{Gneiting}{Gneiting}{2002}]{gneiting2002}
Gneiting, T. (2002).
\newblock Nonseparable, stationary covariance functions for space--time data.
\newblock {\em Journal of the American Statistical Association\/}~{\em
  97\/}(458), 590--600.

\bibitem[\protect\citeauthoryear{Gneiting, Genton, and Guttorp}{Gneiting
  et~al.}{2006}]{gneiting2006}
Gneiting, T., M.~G. Genton, and P.~Guttorp (2006).
\newblock Geostatistical space-time models, stationarity, separability, and
  full symmetry.
\newblock Technical report no. 475, Department of Statistics, University of
  Washington, Seattle, Washington.

\bibitem[\protect\citeauthoryear{Isaak, Luce, Chandler, Horan, and
  Wollrab}{Isaak et~al.}{2018}]{isaak2018}
Isaak, D.~J., C.~H. Luce, G.~L. Chandler, D.~L. Horan, and S.~P. Wollrab
  (2018).
\newblock Principal components of thermal regimes in mountain river networks.
\newblock {\em Hydrology and Earth System Sciences\/}~{\em 22\/}(12),
  6225--6240.

\bibitem[\protect\citeauthoryear{Jammalamadaka, Banerjee, Manjunath, and
  Kosik}{Jammalamadaka et~al.}{2013}]{jammalamadaka2013}
Jammalamadaka, A., S.~Banerjee, B.~S. Manjunath, and K.~S. Kosik (2013).
\newblock Statistical analysis of dendritic spine distributions in rat
  hippocampal cultures.
\newblock {\em BMC bioinformatics\/}~{\em 14}, 287.

\bibitem[\protect\citeauthoryear{Menegatto, Oliveira, and Porcu}{Menegatto
  et~al.}{2020}]{menegatto2020}
Menegatto, V., C.~Oliveira, and E.~Porcu (2020).
\newblock Gneiting class, semi-metric spaces and isometric embeddings.
\newblock {\em Constructive Mathematical Analysis\/}~{\em 3\/}(2), 85--95.

\bibitem[\protect\citeauthoryear{Miller and Samko}{Miller and
  Samko}{2001}]{mill2001}
Miller, K.~S. and S.~G. Samko (2001).
\newblock Completely monotonic functions.
\newblock {\em Integral Transforms and Special Functions\/}~{\em 12\/}(4),
  389--402.

\bibitem[\protect\citeauthoryear{M{\o}ller, Syversveen, and
  Waagepetersen}{M{\o}ller et~al.}{1998}]{moller1998}
M{\o}ller, J., A.~R. Syversveen, and R.~P. Waagepetersen (1998).
\newblock Log {G}aussian {C}ox processes.
\newblock {\em Scandinavian Journal of Statistics\/}~{\em 25\/}(3), 451--482.

\bibitem[\protect\citeauthoryear{Newman}{Newman}{2010}]{mn}
Newman, M. (2010).
\newblock {\em Networks: An Introduction}.
\newblock New York: Oxford University Press.

\bibitem[\protect\citeauthoryear{Ng and Geller}{Ng and
  Geller}{1969}]{ng1969table}
Ng, E.~W. and M.~Geller (1969).
\newblock A table of integrals of the error functions.
\newblock {\em Journal of Research of the National Bureau of Standards
  B\/}~{\em 73\/}(1), 1--20.

\bibitem[\protect\citeauthoryear{O’Donnell, Rushworth, Bowman, Scott, and
  Hallard}{O’Donnell et~al.}{2014}]{ODonnel2014}
O’Donnell, D., A.~Rushworth, A.~W. Bowman, E.~M. Scott, and M.~Hallard
  (2014).
\newblock Flexible regression models over river networks.
\newblock {\em Journal of the Royal Statistical Society: Series C (Applied
  Statistics)\/}~{\em 63\/}(1), 47--63.

\bibitem[\protect\citeauthoryear{Porcu, Furrer, and Nychka}{Porcu
  et~al.}{2020}]{porcu2019}
Porcu, E., R.~Furrer, and D.~Nychka (2020).
\newblock 30 years of space--time covariance functions.
\newblock {\em Wiley Interdisciplinary Reviews: Computational Statistics\/},
  e1512.

\bibitem[\protect\citeauthoryear{Rodrigues and Diggle}{Rodrigues and
  Diggle}{2010}]{rodrigues2010}
Rodrigues, A. and P.~J. Diggle (2010).
\newblock A class of convolution-based models for spatio-temporal processes
  with non-separable covariance structure.
\newblock {\em Scandinavian Journal of Statistics\/}~{\em 37\/}(4), 553--567.

\bibitem[\protect\citeauthoryear{Schoenberg}{Schoenberg}{1938}]{schoenberg1938cm}
Schoenberg, I.~J. (1938).
\newblock Metric spaces and completely monotone functions.
\newblock {\em Annals of Mathematics\/}~{\em 39}, 811--841.

\bibitem[\protect\citeauthoryear{Schur}{Schur}{1911}]{schur1911}
Schur, I. (1911).
\newblock Bemerkungen zur theorie der beschr{\"a}nkten bilinearformen mit
  unendlich vielen ver{\"a}nderlichen.
\newblock {\em Journal f{\"u}r die reine und angewandte Mathematik\/}~{\em
  1911\/}(140), 1--28.

\bibitem[\protect\citeauthoryear{Stein}{Stein}{2005}]{stein2005}
Stein, M.~L. (2005).
\newblock Space--time covariance functions.
\newblock {\em Journal of the American Statistical Association\/}~{\em
  100\/}(469), 310--321.

\bibitem[\protect\citeauthoryear{Tobler}{Tobler}{1970}]{tobler1970}
Tobler, W.~R. (1970).
\newblock A computer movie simulating urban growth in the detroit region.
\newblock {\em Economic Geography\/}~{\em 46\/}(2), 234--240.

\bibitem[\protect\citeauthoryear{Ver~Hoef and Peterson}{Ver~Hoef and
  Peterson}{2010}]{ver2010}
Ver~Hoef, J.~M. and E.~E. Peterson (2010).
\newblock A moving average approach for spatial statistical models of stream
  networks.
\newblock {\em Journal of the American Statistical Association\/}~{\em
  105\/}(489), 6--18.

\bibitem[\protect\citeauthoryear{Ver~Hoef, Peterson, and Theobald}{Ver~Hoef
  et~al.}{2006}]{ver2006}
Ver~Hoef, J.~M., E.~E. Peterson, and D.~Theobald (2006).
\newblock Spatial statistical models that use flow and stream distance.
\newblock {\em Environmental and Ecological Statistics\/}~{\em 13}, 449--464.

\bibitem[\protect\citeauthoryear{Yaglom}{Yaglom}{1987}]{yaglom1987}
Yaglom, A.~M. (1987).
\newblock {\em Correlation Theory of Stationary and Related Random Functions
  Vol. I: Basic Results}.
\newblock New York: Springer-Verlag.

\bibitem[\protect\citeauthoryear{Zastavnyi}{Zastavnyi}{2000}]{zastavnyi2000}
Zastavnyi, V.~P. (2000).
\newblock On positive definiteness of some functions.
\newblock {\em Journal of Multivariate Analysis\/}~{\em 73\/}(1), 55--81.

\bibitem[\protect\citeauthoryear{Zastavnyi and Porcu}{Zastavnyi and
  Porcu}{2011}]{zastavnyi2011}
Zastavnyi, V.~P. and E.~Porcu (2011).
\newblock Characterization theorems for the {G}neiting class of space--time
  covariances.
\newblock {\em Bernoulli\/}~{\em 17\/}(1), 456--465.

\end{thebibliography}

\section*{Supplement A: Proofs of Lemma 3 and Theorem~\ref{thm5}}
\begin{customlemma}{3}\label{lemma3}
	Suppose that $\mathcal{T}$ is a Euclidean tree with $m$ leaves, where $m \geq 3$. Let $\psi(t) = t^{\lambda} + \beta$ for $t \geq 0$, where $0 < \lambda \leq 1$ and $\beta > 0$. Then $\psi \in CND(\mathcal{T}, d_{\cdot, \mathcal{T}})$. 
\end{customlemma}

\begin{proof}
	As in Example 2.1 of~\citet{zastavnyi2011}, let $\psi_{0}: \mathbb{R}^{n} \rightarrow \mathbb{R}$ be such that $\psi_{0}(\bm{x}) =  ||\bm{x}||_{1}^{\lambda} + \beta$ where $ \beta > 0$ and $0 < \lambda \leq 1$. Then $\exp(-c \psi_{0}(\bm{x}))$ is positive definite for all positive $c$, given $n \geq 2$. It follows by Schoenberg's (Example 2.4 in~\citet{zastavnyi2011}) that 
	\[
	\sum_{i=1}^{N}\sum_{j =1}^{N}a_{i} a_{j}\psi_{0}(\bm{x}_{i} - \bm{x}_{j}) \leq 0,
	\]
	for any finite collection $\{a_{i}\}_{i=1}^{N} \subset \mathbb{R}$ and $\{\bm{x}_{i}\}_{i=1}^{N} \subset \mathbb{R}^{n}$, given that $\sum_{i=1}^{N}a_{i} = 0$. 
	
	In concert with Theorem 3, there exists a mapping $i: \mathcal{T} \rightarrow \mathbb{R}^{\ceil[\big]{\frac{m}{2}}}$ such that for every finite collection $\bm{s}_{1}, \cdots, \bm{s}_{n} \in \mathcal{T}$ and $a_{1}, \cdots, a_{N} \in \mathbb{R}$ with $\sum_{i = 1}^{N} a_{i} = 0$, 
	\[
	\sum_{i=1}^{N}\sum_{j = 1}^{N}a_{i}a_{j}\psi(d_{\cdot, \mathcal{T}}(\bm{s}_{i}, \bm{s}_{j})) = \sum_{i=1}^{N}\sum_{j = 1}^{N}a_{i}a_{j}\psi(\rho_{1}(i(\bm{s}_{i}) - i(\bm{s}_{j}))) = \sum_{i=1}^{N}\sum_{j = 1}^{N}a_{i}a_{j}\psi_{0}(i(\bm{s}_{i}) - i(\bm{s}_{j})) \leq 0.  
	\]
\end{proof}

\vspace{4mm}

\begin{proof}[Proof of Theorem 5]
	The proof of sufficiency is trivial so we focus on the necessity. The tail-down covariance functions for flow-connected and flow-unconnected sites are given as follows:
	\begin{equation}\label{fc_cov} 
	C_{c}(d) = \int_{-\infty}^{-d}g(-x)g(-x - d) dx 
	\end{equation}
	and 
	\begin{equation}\label{fu_cov}
	C_{n}(a, b) = \int_{-\infty}^{-b}g(-x)g(-x - (b - a)) dx = \int_{-\infty}^{-b}g(-x)g(-x - d + 2a) dx,
	\end{equation}
	where $d, a, b$ are given in Section 4.1, $d = a + b > 0$, $b \geq a \geq 0$, and $g(-x)$ is a unilateral tail-down kernel with nonzero values only on the negative side of $0$. 
	
	Letting $y \equiv x -a$, (\ref{fu_cov}) can be re-written as:
	\begin{equation}\label{fu_cov2}
	C_{n}(a, b) =  \int_{-\infty}^{-(a + b)} g(-y - a)g (-y - b) dy. 
	\end{equation}
	
	Assume that $g(x)$ is a continuous function on the positive half of the real line. Let $f(a, b) := \int_{-\infty}^{-(a + b)} \left\{g(-x)g(-x - (a + b)) - g(-x -a)g(-x - b) \right\} dx$, which is the difference between~(\ref{fc_cov}) and~(\ref{fu_cov2}). If the covariance model is isotropic, i.e. if~(\ref{fc_cov}) =~(\ref{fu_cov2}), $\forall b\geq a\geq 0,\ d = a + b > 0$, it follows that $f(a, b) = 0$. By the Leibniz integral rule, we take the partial derivative with respective to $a$ of both sides: 
	\begin{align}\label{par_a}
	& \frac{\partial f(a, b)}{\partial a}  = 0\nonumber \\
	\Longrightarrow   -\left[g(a + b)g(0) - g(b)g(a) \right] + & \int_{-\infty}^{-(a + b)} \left\{-g(-x)g'(-x - (a + b)) + g'(-x - a) g(-x - b) \right\} dx  = 0 \nonumber \\
	\Longrightarrow  g(0)g(a + b) - g(a) g(b) + & \int_{-\infty}^{-(a + b)} \left\{g(-x)g'(-x-(a + b)) - g'(-x -a)g(-x - b) \right\} dx = 0.
	\end{align}
	
	Let $ y = -x - (a + b)$. By the change of variable formula, the second half on the left hand side of~(\ref{par_a}) is equivalent to
	\begin{align}\label{par_a_2}
	\int_{-\infty}^{-(a + b)} \left\{g(-x)g'(-x-(a + b)) - g'(-x -a)g(-x - b)\right\}dx = \nonumber \\ \int_{0}^{\infty} \left\{ g(y + a)g'(y+ b) - g'(y)g(y + (a + b))  \right\}dy. 
	\end{align}
	Similarly, by the change of variable, $f(a, b)$ can be re-written as $f^*(a, b)$, such that
	\[
	f^{*}(a, b) = \int_{0}^{\infty}\left\{ g(y)g(y + (a + b)) - g(y + a)g(y + b) \right\}dy = 0.
	\]
	Taking the partial derivative with respect to $b$ yields
	\begin{equation}\label{par_b}
	\frac{\partial f^{*}(a, b)}{\partial b} = \int_{0}^{\infty} \left\{g(y)g'(y + (a + b)) - g(y + a)g'(y + b) \right\}dy = 0. 
	\end{equation}
	Plugging ~(\ref{par_b}) back to~(\ref{par_a_2}) gives
	\begin{align}\label{par_b_2}
	\int_{-\infty}^{-(a + b)} \left\{g(-x)g'(-x-(a + b)) - g'(-x -a)g(-x - b)\right\}dx = \nonumber \\ \int_{0}^{\infty} \left\{ g(y)g'(y + (a + b)) - g'(y)g(y + (a + b)) \right\} dy.
	\end{align}
	Notice that when $b = 0$,~(\ref{par_b}) becomes
	\[
	\int_{0}^{\infty} \left\{g(y)g'(y + a) - g'(y)g(y + a) \right\} dy = 0, \quad \forall a \geq 0, 
	\]
	which implies~(\ref{par_b_2}) = 0 as well. Combined with~(\ref{par_a}) and~(\ref{par_a_2}), it follows that 
	\[
	g(0)g(a + b) - g(a)g(b) = 0, \quad \forall b \geq a \geq 0.
	\]
	
	Now, take logarithm of both sides 
	\begin{align}\label{const}
	\log g(a + b) + \log g(0) &= \log g(a) + \log g(b) \nonumber \\
	\Longrightarrow \underset{b \downarrow 0}{\lim} \frac{\log g(a + b) - \log g(a)}{b} &= \underset{b \downarrow 0}{\lim} \frac{\log g(b) - \log g(0)}{b} \nonumber  \\
	\Longrightarrow \left[\log g(x) \right]' \Big|_{x = a} &= \left[\log g(x) \right]'\Big|_{x = 0^{+}}. 
	\end{align}
	(\ref{const}) implies that $\left[\log g(x) \right]'$ is a constant, say $c_{1}$, when $x > 0$. Thus,
	\begin{equation}\label{final}
	\left[\log g(x) \right]' = c_{1} \Longleftrightarrow \log g(x)  = c_{1}x + c_{2} \Longleftrightarrow g(x) = \exp (c_{1}x + c_{2}). 
	\end{equation}
	In order to make improper integrals~(\ref{fc_cov}) and~(\ref{fu_cov}) convergent, $c_{1}$ must be negative. Then,~(\ref{final}) is exactly the same as the exponential kernel. 
\end{proof}

\newpage
\section*{Supplement B: Plots of Marginal Functions and 3-D Surfaces}
\begin{figure}[H]
	\begin{center}
		\includegraphics[scale=1]{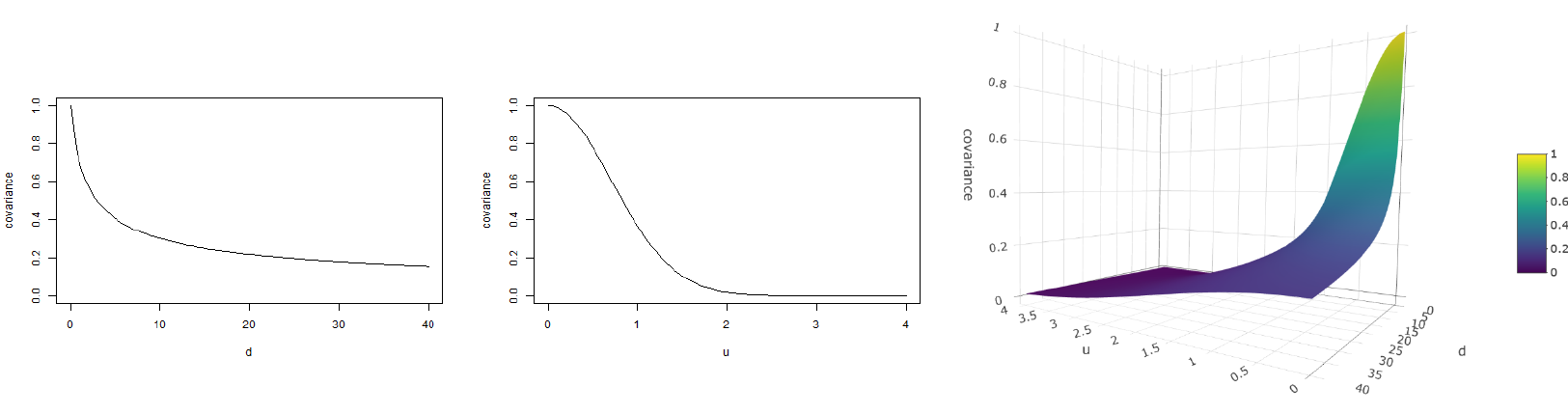}
		\caption{Marginal plots and 3-D surface for Model 1 with covariance parameters $(c, \nu, k, \beta, \tau, b) = (1, 1, 1, 0.5, 0.5, 1)$. }
	\end{center}
\end{figure}

\begin{figure}[H]
	\begin{center}
		\includegraphics[scale=1]{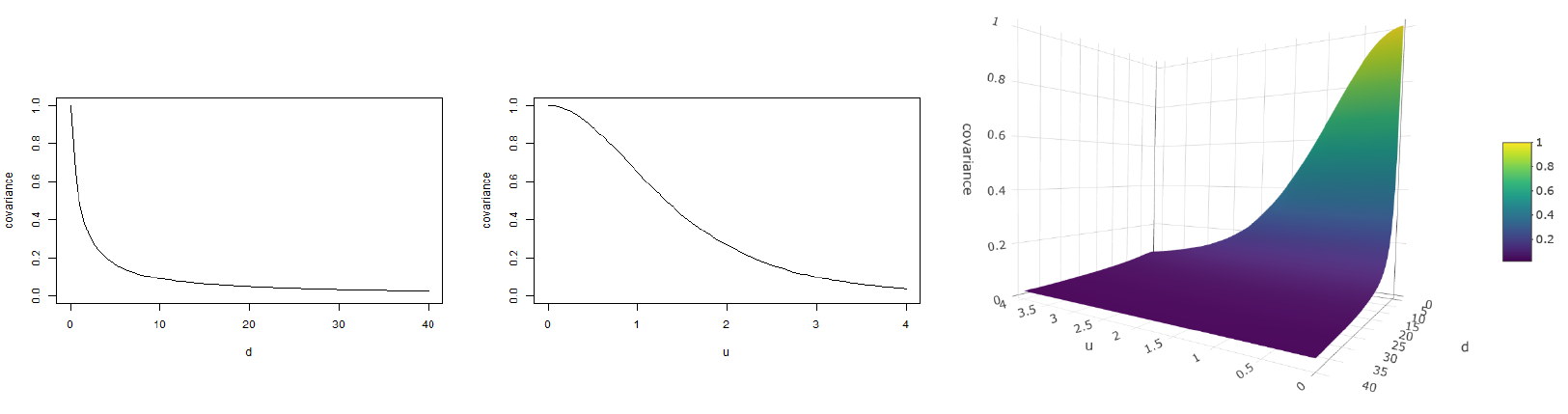}
		\caption{Marginal plots and 3-D surface for Model 2 with covariance parameters $(a, \alpha, b, c, \nu) = (1, 1, 1, 1, 1)$. }
	\end{center}
\end{figure}

\begin{figure}[ht]
	\begin{center}
		\includegraphics[scale=1]{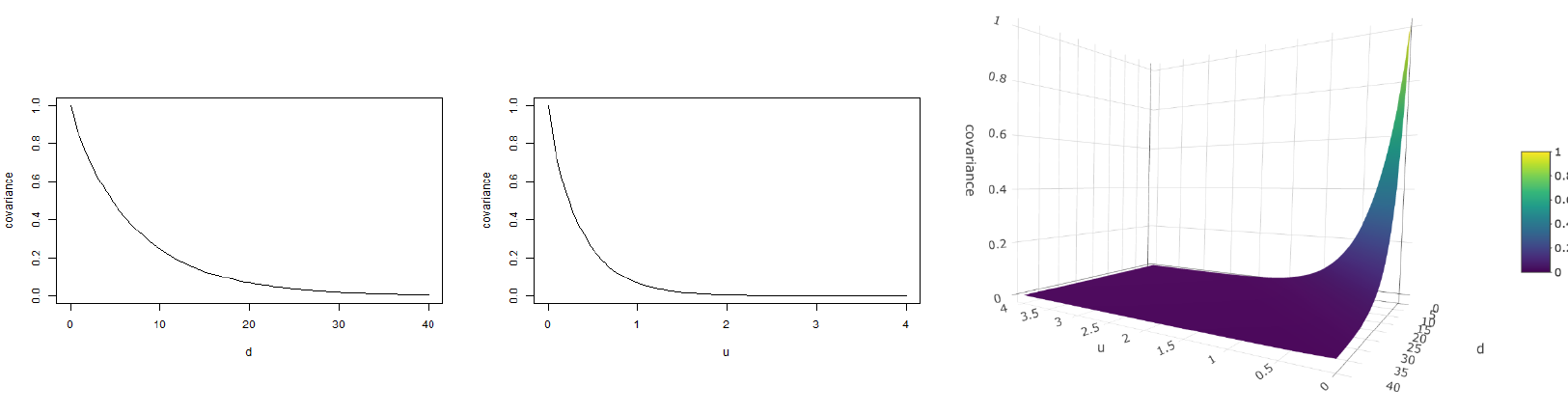}
		\caption{Marginal plots and 3-D surface for Model 3 with covariance parameters $(\alpha, \beta, \nu, \delta) = (200, 10, 0.9, 20)$. }
	\end{center}
\end{figure}

\begin{figure}[ht]
	\begin{center}
		\includegraphics[scale=1]{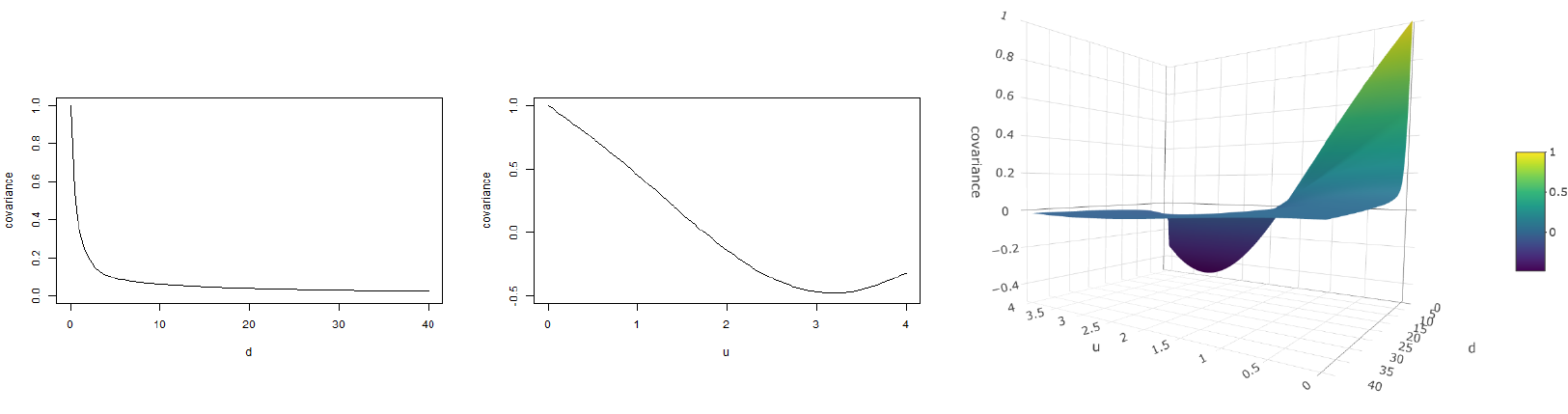}
		\caption{Marginal plots and 3-D surface for Model 4 between flow-connected sites with weights fixed at $0.5$ and  covariance parameters $(\theta_{1}, \theta_{2}, \theta_{3}, \theta_{4}) = (1, 1, 1, 1)$. }
	\end{center}
\end{figure}

\begin{figure}[ht]
	\begin{center}
		\includegraphics[scale=1]{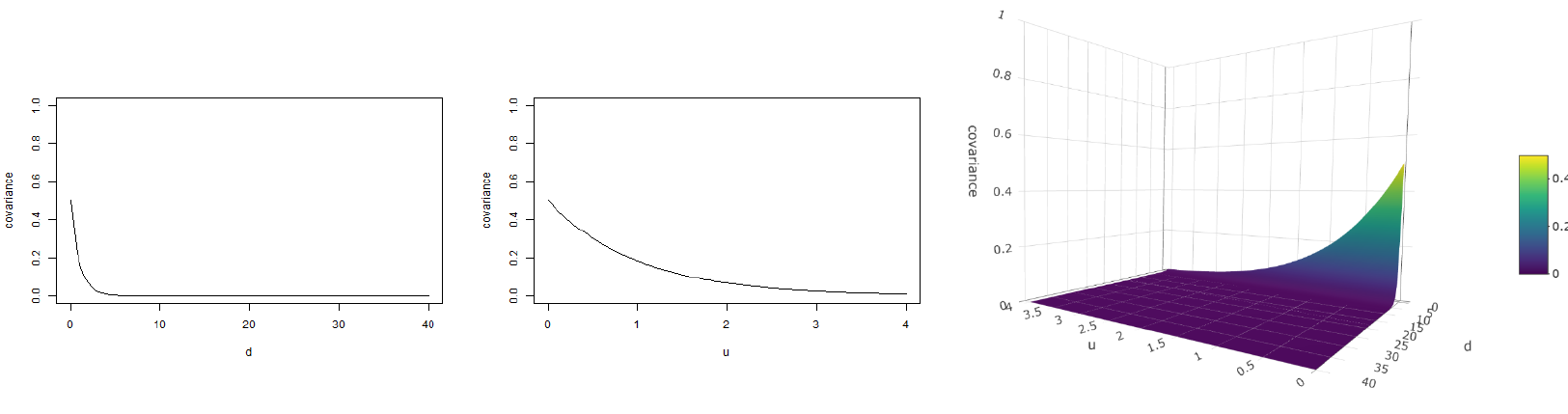}
		\caption{Marginal plots and 3-D surface for Model 4 between flow-unconnected sites with covariance parameters $(\theta_{1}, \theta_{2}, \theta_{3}, \theta_{4}) = (1, 1, 1, 1)$. }
	\end{center}
\end{figure}

\begin{figure}[ht]
	\begin{center}
		\includegraphics[scale=1]{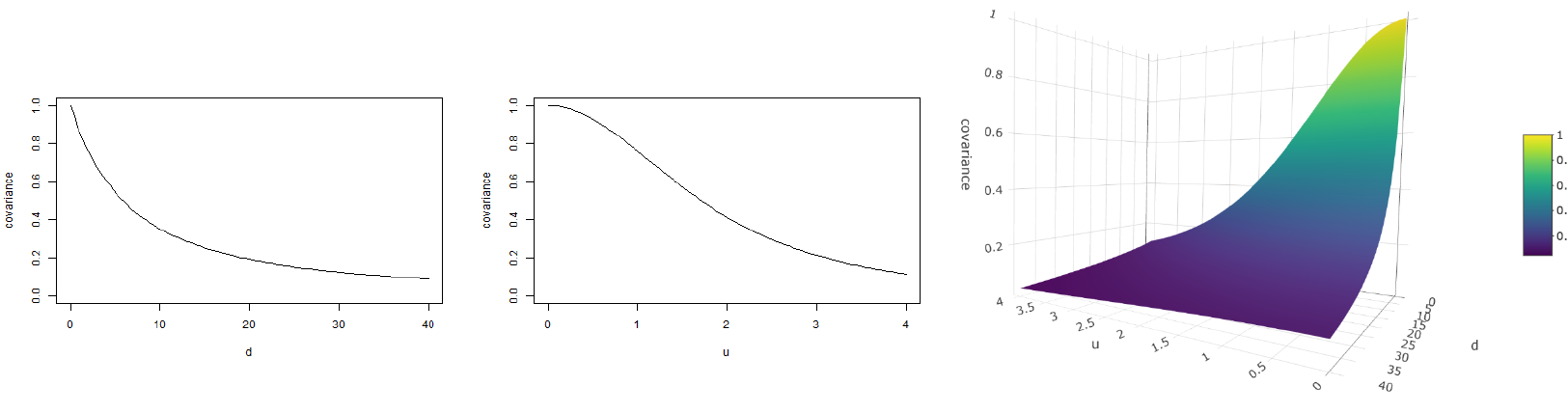}
		\caption{Marginal plots and 3-D surface for Model 5 with covariance parameters $(\theta_{1}, \theta_{2}, \theta_{3}, \theta_{4}) = (10, 5, 1.5, 1)$. }
	\end{center}
\end{figure}

\begin{figure}[ht]
	\begin{center}
		\includegraphics[scale=1]{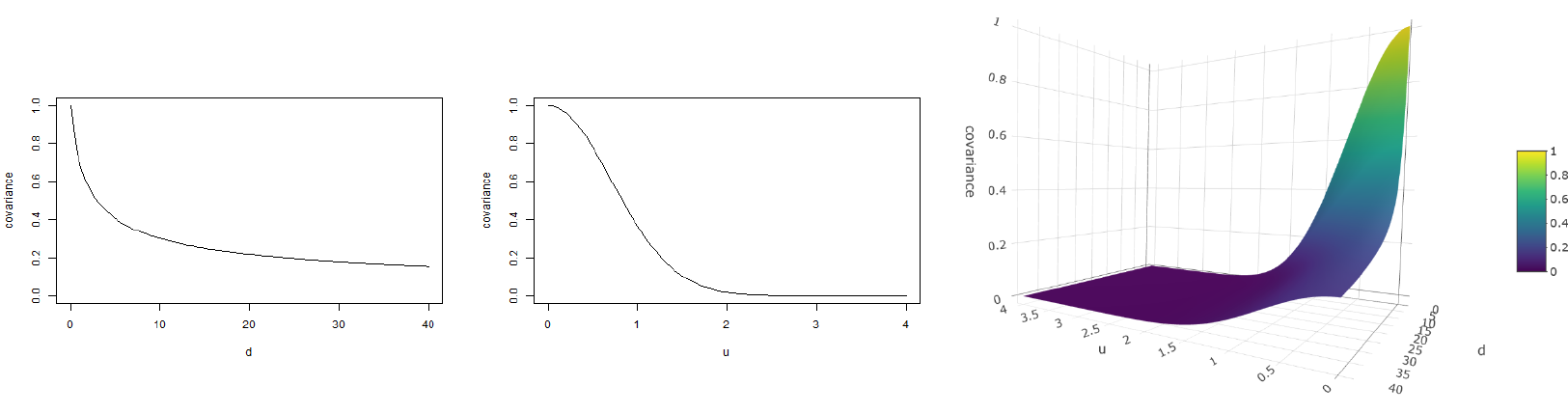}
		\caption{Marginal plots and 3-D surface for the separable with covariance parameters $(c, \nu, k, \tau, b) = (1, 1, 1, 0.5, 1)$. }
	\end{center}
\end{figure}

\end{document}